%
%
%

\documentclass[graybox,envcountsame]{svmult}


\usepackage{mathptmx}       
\usepackage{helvet}         
\usepackage{courier}        
\usepackage{type1cm}        
%
\usepackage{makeidx}         
\usepackage{graphicx}        
\usepackage{multicol}        
\usepackage[bottom]{footmisc}

\usepackage{enumerate}       



\usepackage{amsmath}
\usepackage{amssymb}
\usepackage{bbm}
\usepackage{wasysym}
\usepackage{stmaryrd}
\usepackage{url}



\let\phi\varphi
\let\epsilon\varepsilon
\let\theta\vartheta
\let\mathbb\mathbbm

\newcommand{\bvp}[2]{\boxed{\begin{array}{l}#1\\#2\end{array}}}
\newcommand{\der}{\partial}
\newcommand{\cum}{{\textstyle \varint}}
\newcommand{\integral}{\int}
\newcommand{\intf}{{\textstyle \int_{\mathcal{F}}}}

\newcommand{\N}{\mathbb{N}}
\newcommand{\Z}{\mathbb{Z}}
\newcommand{\Q}{\mathbb{Q}}
\newcommand{\R}{\mathbb{R}}
\newcommand{\C}{\mathbb{C}}

\newcommand{\Vd}{\mathcal{F}^{*}}

\newcommand{\f}{f}        

\newcommand{\kron}[1]{\hat{#1}}


\newcommand{\be}{\beta}

\let\oldsetminus\setminus
\renewcommand{\setminus}{\mathord{\oldsetminus}}

\newcommand{\freemon}{\langle X \rangle}
\newcommand{\freealg}{K \freemon}
\newcommand{\red}[3]{{}_{#1}\!#2_{#3}}
\newcommand{\irrmod}{\freealg_\downarrow}
\newcommand{\nf}[1]{\mathop{\downarrow} #1}
\newcommand{\steprel}{\mathop{\rightarrow}}
\newcommand{\sredrel}{\mathop{\smash{\overset{\raisebox{-0.35ex}{$\scriptstyle
          +$}}{\rightarrow}}}}
\newcommand{\redrel}{\mathop{\smash{\overset{\raisebox{-0.25ex}{$\scriptstyle
          *$}}{\rightarrow}}}}
\newcommand{\credrel}{\mathop{\smash{\overset{\raisebox{-0.25ex}{$\scriptstyle
          *$}}{\leftarrow}}}}

\newcommand{\galg}{\mathcal{F}}
\newcommand{\newgalg}{\smash{\hat{\mathcal{F}}}}

\newcommand{\evl}{\textbf{\scshape\texttt e}}
\newcommand{\ini}{\textbf{\scshape\texttt j}}
\newcommand{\const}{\mathcal{C}}
\newcommand{\init}{\mathcal{I}}
\newcommand{\bspc}{\mathcal{B}}
\newcommand{\fri}[1]{#1^\blacklozenge}

\newcommand{\Ker}[1]{\operatorname{Ker}(#1)}

\newcommand{\Img}[1]{\operatorname{Im}(#1)}

\newcommand{\dirs}{\dotplus}
\newcommand{\gdiffop}{\galg[\der]}

\newcommand{\gintop}{\galg[\cum]}

\newcommand{\bndop}{(\Phi)}
\newcommand{\Hur}{H(K)}
\newcommand{\multfunc}{\galg^{\bullet}}
\newcommand{\intdiffopchar}[1]{\galg_{#1}[\der,\cum]}

\newcommand{\intdiffop}{\galg[\der,\cum]}

\newcommand{\orth}[1]{#1^\perp}
\newcommand{\intdiffopbas}{\galg^\#[\der,\cum]}


\newcommand{\exppol}{K[x,e^{Kx}]}

\newcommand{\ocum}{{\setbox0=%
    \hbox{$\textstyle{\scriptstyle-}{\varint}$}%
    \textstyle{\vcenter{\hbox{$\scriptstyle-$}}\kern-.5\wd0}%
    \!\varint}}
\newcommand{\cuma}{\cum^*}

\newcommand{\scond}{\mathopen{|}\Phi\mathclose{)}}
\newcommand{\allscond}{\mathopen{|}\multfunc\mathclose{)}}

\newcommand{\trp}[1]{#1^{\top}}
\newcommand{\act}{\cdot}

\newcommand{\free}{\galg\langle \der, \cum \rangle}
\newcommand{\freepol}{\newgalg\langle \der, \cum \rangle}
\newcommand{\green}{\mathfrak{g}}
\newcommand{\basis}{\mathfrak{l}}

\newcommand{\B}{\mathcal{B}} 

\newcommand{\intdiffpol}[2][\galg]{#1\{#2\}}
\newcommand{\intdiffpoltwo}{\galg\{u,v\}}
\newcommand{\newintdiffop}{\newgalg[\der,\cum]}
\newcommand{\term}[3]{\mathcal{T}_{#1}(#2 \cup #3)}
\newcommand{\canf}{\mathcal{R}}
\newcommand{\fcanf}{\mathcal{R}_0}
\newcommand{\polalg}{A_{\mathcal{V}}[X]}
\newcommand{\identities}{\mathcal{E}}
\newcommand{\tma}{\textsf{\small TH$\exists$OREM$\forall$}}
\newcommand{\mma}{\textsf{\small Mathematica}}
\newcommand{\ncalg}{\textsf{\small NCAlgebra}}
\newcommand{\maple}{\textsf{\small Maple}}

\newcommand{\grb}{Gr\"obner}

\newenvironment{rrproof}{\begin{proof}}{\qed\end{proof}}
\newenvironment{mmacode}{\begin{center}}{\end{center}}

\makeatletter
\renewcommand{\subsection}{\@startsection%
  {section}%
  {10}%
  {0em}%
  {\baselineskip}%
  {-\fontdimen2\font plus -\fontdimen3\font minus -\fontdimen4\font}%
  {\normalfont\normalsize\bfseries\itshape}}
\makeatother

\newcommand{\factor}[1]{[ #1 ]}
\newcommand{\nest}{\sqcup}

\spnewtheorem{rrexampleraw}[theorem]{Example}{\bfseries}{\mdseries}
\newenvironment{rrexample}{\begin{rrexampleraw}}{\qed\end{rrexampleraw}}
\spnewtheorem{fact}[theorem]{Fact}{\bfseries}{\mdseries}




\begin{document}

\title*{Symbolic Analysis for Boundary Problems:\\From Rewriting to Parametrized \grb\ Bases}
\titlerunning{Symbolic Analysis for Boundary Problems}

\author{Markus Rosenkranz \and Georg Regensburger \and Loredana Tec  \and Bruno Buchberger}
\authorrunning{M.~Rosenkranz, G.~Regensburger, L.~Tec, and B.~Buchberger}

\institute{Markus Rosenkranz \at School of Mathematics, Statistics and Actuarial Science (SMSAS), \\
University of Kent, Canterbury CT2 7NF, United Kingdom, \\
\email{M.Rosenkranz@kent.ac.uk}
\and Georg Regensburger \at Johann Radon Institute for Computational and Applied Mathematics (RICAM), \\
Austrian Academy of Sciences, 4040 Linz, Austria, \\
INRIA Saclay -- \^{I}le de France, Project DISCO, L2S, Sup\'{e}lec, 91192 Gif-sur-Yvette Cedex, France, \\
\email{Georg.Regensburger@ricam.oeaw.ac.at}\\
\and Loredana Tec \at Research Institute for Symbolic Computation (RISC),\\
Johannes Kepler University, 4032 Castle of Hagenberg, Austria, \\
\email{ltec@risc.uni-linz.ac.at}
\and Bruno Buchberger \at Research Institute for Symbolic Computation (RISC),\\
Johannes Kepler University, 4032 Castle of Hagenberg, Austria, \\
\email{Bruno.Buchberger@risc.uni-linz.ac.at}
}

\maketitle

\abstract{We review our algebraic framework for linear boundary problems (concentrating on ordinary differential equations). Its starting point is an appropriate algebraization of the domain of functions, which we have named integro-differential algebras. The algebraic treatment of boundary problems brings up two new algebraic structures whose symbolic representation and computational realization is based on canonical forms in certain commutative and noncommutative polynomial domains. The first of these, the ring of integro-differential operators, is used for both stating and solving linear boundary problems. The other structure, called integro-differential polynomials, is the key tool for describing extensions of integro-differential algebras. We use the canonical simplifier for integro-differential polynomials for generating an automated proof establishing a canonical simplifier for integro-differential operators. Our approach is fully implemented in the \tma\ system;  some code fragments and sample computations are included.}

\section{Introduction}
\label{sec:intro}

\subsection{Overall View.}

When problems from Analysis---notably differential equations---are treated by methods from Symbolic Computation, one speaks of \mbox{\emph{Symbolic Analysis}}, as in the eponymous workshops of the \textsf{FoCM} conference series~\cite{CuckerShub1997}. Symbolic Analysis is based on algebraic structures, as all other symbolic branches, but its special flavor comes from its connection with analytic and numeric techniques. As most differential equations arising in the applications can only be solved numerically, this connection is absolutely vital.

If symbolic techniques cannot solve ``most'' differential equations, \emph{what else} can they do? The answers are very diverse (reductions, normal forms, symmetry groups, singularity analysis, triangularization etc), and in the frame of this paper we can only point to surveys like~\cite{Seiler1997} and~\cite[\S2.11]{GrabmeierKaltofenWeispfenning2003}. In fact, even the notion of ``solving'' is quite subtle and can be made precise in various ways. Often a symbolic method will not provide the ``solution'' in itself but valuable information about it to be exploited for subsequent numerical simulation.

Our own approach takes a somewhat intermediate position while diverging radically in another respect: Unlike most other symbolic methods known to us, we consider differential equations along with their \emph{boundary conditions}. This is not only crucial for many applications, it is also advantageous from an algebraic point of view: It allows to define a linear operator, called the Green's operator, that maps the so-called forcing function on the right-hand side of an equation to the unique solution determined by the boundary conditions. This gives rise to an interesting structure on Green's operators and on boundary problems (Sect.~\ref{sec:bp}). Algebraically, the consequence is that we have to generalize the common structure of differential algebras to what we have called integro-differential algebras (Sect.~\ref{sec:intdiffalg}).

Regarding the \emph{solvability issues}, the advantage of this approach is that it uncouples the task of finding an algebraic representation of the Green's operator from that of carrying out the quadratures involved in applying the Green's operator to a forcing function. While the latter may be infeasible in a symbolic manner, the former can be done by our approach (with numerical quadratures for integrating forcing functions).

The \emph{research program} just outlined has been pursued in the course of the SFB project F013 (see below for a brief chronology), and the results have been reported elsewhere~\cite{Rosenkranz2005,RosenkranzRegensburger2008a,RegensburgerRosenkranz2009}. For the time being, we have restricted ourselves to linear boundary problems, but the structure of integro-differential polynomials~\cite{RosenkranzRegensburger2008}  may be a first stepping stone towards nonlinear Green's operators. Since the algebraic machinery for Green's operators is very young, our strategy was to concentrate first on boundary problems for ordinary differential equations (ODEs), with some first steps towards partial differential equations (PDEs) undertaken more recently~\cite{RosenkranzRegensburgerTecBuchberger2009}. For an application of our methods in the context of actuarial mathematics, we refer to~\cite{AlbrecherConstantinescuPirsicRegensburgerRosenkranz2010}, for a more algebraic picture from the skew-polynomial perspective see~\cite{RegensburgerRosenkranzMiddeke2009}.

\subsection{New Results.}

In the present paper, we will present a \emph{new confluence proof} for the central data structure used in our approach: As the algebraic language for Green's operators, the integro-differential operators (Sect.~\ref{sec:intdiffop}) are defined as a ring of noncommutative polynomials in infinitely many variables, modulo an infinitely generated ideal. While the indeterminates represent the basic operations of analysis (differentiation, integration, extraction of boundary values and multiplication by one of infinitely many coefficient functions), this ideal specifies their interaction (e.g. the fundamental theorem of calculus describing how differentiation and integration interact). Our new proof is fully automated within the \tma\ system (Sect.~\ref{sec:theorema}), using a generic noncommutative polynomial reduction based on a noncommutative adaption of reduction rings~\cite{Buchberger2001}; see also~\cite{TecRegensburgerRosenkranzBuchberger2010} for a short outline of the proof.

In a way, the new proof completes the circle started with the ad-hoc confluence proof in~\cite{Rosenkranz2003}. For the latter, no algebraic structure was available for coping with certain expressions that arise in the proof because they involved generic coefficient functions along with their integrals and derivatives (rather than the operator indeterminates modeling integration and differentiation!), while this structure is now provided by the afore-mentioned \emph{integro-differential polynomials} (Sect.~\ref{sec:intdiffpol}). Roughly speaking, this means within the spectrum between rewrite systems (completion by the Knuth-Bendix procedure) and \grb\ bases (completion by Buchberger's algorithm), we have moved away from the former towards the latter~\cite{Buchberger1987}. We will come back to this point later (Sect.~\ref{sec:rwtogb}).

Moreover, the paper includes the following \emph{improvements and innovations}: The setting for \grb\ bases and the Buchberger algorithm are introduced generically for commutative and noncommutative rings (allowing infinitely many variables and generators), based on reduction rings and implemented in the \tma\ system (Sect.~\ref{sec:theorema}). The presentation of integro-differential algebras is streamlined and generalized (Sect.~\ref{sec:intdiffalg}). For both of the main computational domains---integro-differential operators and integro-differential polynomials---we have a basis free description while a choice of basis is only need for deciding equality (Sects.~\ref{sec:intdiffop},~\ref{sec:intdiffpol}). The construction of integro-differential polynomials, which was sketched in~\cite{RosenkranzRegensburger2008}, is carried out in detail (Sect.~\ref{sec:intdiffpol}). In particular, a complete proof of the crucial result on canonical forms (Thm.~\ref{thm:baxter}) is now given.

\subsection{Chronological Outline.}

As indicated above, this paper may be seen as a kind of target line for the research that we have carried out within Project F1322 of the SFB F013 supported by the Austrian Science Fund (FWF). We have already pointed out the crucial role of analysis/numerics in providing the right inspirations for the workings of Symbolic Analysis. The development of this project is an illuminating and pleasant case in point. It was initiated by the stimulating series of \emph{Hilbert Seminars} conducted jointly by Bruno Buchberger and Heinz W. Engl from October 2001 to July 2002, leading to the genesis of Project F1322 as a spin-off from Projects F1302 (Buchberger) and F1308 (Engl). Triggered by the paper~\cite{HeltonStankus2008}, the idea of symbolic operator algebras emerged as a common leading theme. It engendered a vision of transplanting certain ideas like the Moore-Penrose inverse on Hilbert spaces from their homeground in functional analysis into a new domain within Symbolic Analysis, where powerful algebraic tools like \grb\ bases are available~\cite{Buchberger1965,Buchberger1970,Bergman1978,Buchberger1998}. This vision eventually crystallized in the algebraic machinery for computing Green's operators as described before.

In the early stage of the project, those two main tools from analysis (Moore-Penrose inverse) and algebra (\grb\ bases) were welded together in a rather ad-hoc manner, but it did provide a new tool for solving boundary problems~\cite{RosenkranzBuchbergerEngl2003}. In the course of the \emph{dissertation}~\cite{Rosenkranz2003}, a finer analysis led to a substantial simplification where the Moore-Penrose inverse was superseded by a purely algebraic formulation in terms of one-sided inverses and the expensive computation of a new noncommutative \grb\ basis for each boundary problem was replaced by plain reduction modulo a fixed \grb\ basis for modeling the essential operator relations. The resulting quotient algebra (called ``Green's polynomials'' at that time) is the precursor of the integro-differential operators described below (Sect.~\ref{sec:intdiffop}). The final step towards the current setup was the reformulation and generalization in a differential algebra setting~\cite{RosenkranzRegensburger2008a} and in an abstract linear algebra setting~\cite{RegensburgerRosenkranz2009}.

The advances on the theoretical side were paralleled by an early \emph{implementation} of the algorithm for computing Green's operators. While the ad-hoc approach with computing \grb\ bases per-problem was carried out by the help of \ncalg, a dedicated \mma~package for noncommutative algebra~\cite{HeltonStankus2008}, the fixed \grb\ basis for simplifying Green's operator was implemented in the \tma\ system~\cite{BuchbergerCraciunJebeleanEtAl2006}; see Sect.~\ref{sec:theorema} for a general outline of this system. As the new differential algebra setting emerged, however, it became necessary to supplant this implementation by a new one. It was again integrated in the \tma\ system, but now in a much more intimate sense: Instead of using a custom-tailored interface as in~\cite{Rosenkranz2003}, the new package was coded directly in the \tma\ language using the elegant structuring constructs of functors~\cite{Buchberger2008}. Since this language is also the object language of the provers, this accomplishes the old ideal of integrating computation and deduction.

The presentation of several parts of this paper---notably Sects.~\ref{sec:intdiffalg}, \ref{sec:intdiffop}, \ref{sec:bp}---benefited greatly from a \emph{lecture} given in the academic year 2009/10 on Symbolic Integral Operators and Boundary Problems by the first two authors. The lecture was associated with the Doctoral Program ``Computational Mathematics: Numerical Analysis and Symbolic Computation'' (W1214), which is a follow-up program to the SFB F013. We would like to thank our students for the lively discussions and valuable comments.

\subsection{Overview of the Paper.}

We commence by having a closer look at the \tma\ system (Sect.~\ref{sec:theorema}), which will also be used in all sample computations presented in subsequent sections; both the sample computations and the \tma\ program code is available in an executable \mma\ notebook from \url{www.theorema.org}. We discuss canonical simplifiers for quotient structures and \grb\ bases in reduction rings, and we give a short overview of the functors used in building up the hierarchy of the algebraic structures used in the computations. The main structure among these is that of an integro-differential algebra (Sect.~\ref{sec:intdiffalg}), which is the starting point for the integro-differential operators as well as the integro-differential polynomials. Since the former are, in turn, the foundation for computing Green's operators for boundary problems, we will next summarize the construction of integro-differential operators and their basic properties (Sect.~\ref{sec:intdiffop}), while the algorithms for solving and factoring boundary problems are explained and exemplified thereafter (Sect.~\ref{sec:bp}). Driving towards the focus point of this paper, we describe then the algebra of integro-differential polynomials (Sect.~\ref{sec:intdiffpol}), which will be the key tool to be employed for the confluence proof. Since this proof is reduced to a computation in \tma, we will only explain the main philosophy and show some representative fragments (Sect.~\ref{sec:rwtogb}). We wind up with some thoughts about open problems and future work (Sect.~\ref{sec:conclusion}).

\section{Data Structures for Polynomials in Theorema}
\label{sec:theorema}

\subsection{The Theorema Functor Language.}
The \tma\ system~\cite{BuchbergerCraciunJebeleanEtAl2006} was designed by B.~Buchberger as an integrated environment for proving, solving and computing in various domains of mathematics. Implemented on top of \mma, its core language is a version of higher-order predicate logic that contains a natural programming language such that algorithms can be coded and verified in a unified formal frame. In this logic-internal programming language, functors are a powerful tool for building up \emph{hierarchical domains} in a modular and generic way. They were introduced and first implemented in \tma\ by B.~Buchberger. The general idea---and its use for structuring those domains in which \grb\ bases can be computed---is described in~\cite{Buchberger2001,Buchberger2008}, where one can also find references to pertinent early papers by B.~Buchberger. See also~\cite{Windsteiger1999} for some implementation aspects of functor programming.

The notion of functor in \tma\ is akin to functors in ML, not to be confused with the functors of category theory. From a computational point of view, a \tma\ functor is a higher-order function that produces a \emph{new domain from given domains}, where each domain is considered as a bundle of operations (including relations qua boolean-valued operations---in particular also carrier predicates). Operations in the new domain are defined in terms of operations in the underlying domains.

Apart from this computational aspect, functors also have an important reasoning aspect---a functor transports properties of the input domains to properties of the output domain, typical examples being the various ``preservation theorems'' in mathematics: ``If $R$ is a ring, then $R[x]$ is also a ring''. This means the functor $R \mapsto R[x]$ preserves the property of being a ring, in other words: it goes from the ``category of rings'' to itself. In this context, a \emph{category} is simply a collection of domains characterized by a common property (a higher-order predicate on domains).

See below for an example of a functor named $\mathtt{LexWords}$. It takes a linearly ordered alphabet $\mathtt{L}$ as input domain and builds the \emph{word monoid} over this alphabet:
\begin{mmacode}
\includegraphics[scale=0.75]{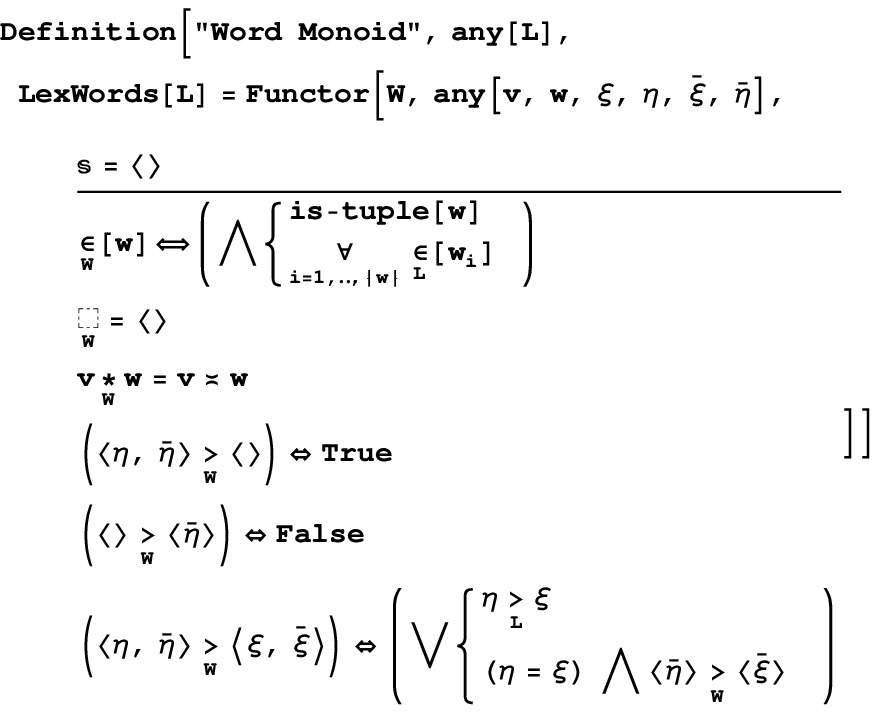}
\end{mmacode}
\noindent Here $\mathtt{\bar{\xi}}$, $\mathtt{\bar{\eta}}$ are sequence variables, i.e. they can be instantiated with finite sequences of terms. The new domain $\mathtt{W}$ has the following operations: $\mathtt{W[\in]}$ denotes the carrier predicate, the neutral element is given by $\mathtt{W[\Box]}$, the multiplication $\mathtt{W[\ast]}$ is defined as concatenation, and $\mathtt{W[>]}$ defines the lexicographic ordering on $\mathtt{W}$.

In the following code fragments, we illustrate one way of \emph{building up polynomials} in \tma\, starting from the base categories of fields with ordering and ordered monoids. Via the functor $\mathtt{FreeModule}$, we construct first the free vector space $\mathtt{V}$ over a field $\mathtt{K}$ generated by the set of words in an ordered monoid $\mathtt{W}$. The elements of $\mathtt{V}$ are described by $\mathtt{V}[\in]$ as lists of pairs, each pair containing one (non-zero) coefficient from $\mathtt{K}$ and one basis vector from $\mathtt{W}$, where the basis vectors are ordered according to the ordering on $\mathtt{W}$. The operations of addition, subtraction and scalar multiplication are defined recursively, using the operations on $\mathtt{K}$ and $\mathtt{W}$:
\begin{center}
\includegraphics[scale=0.75]{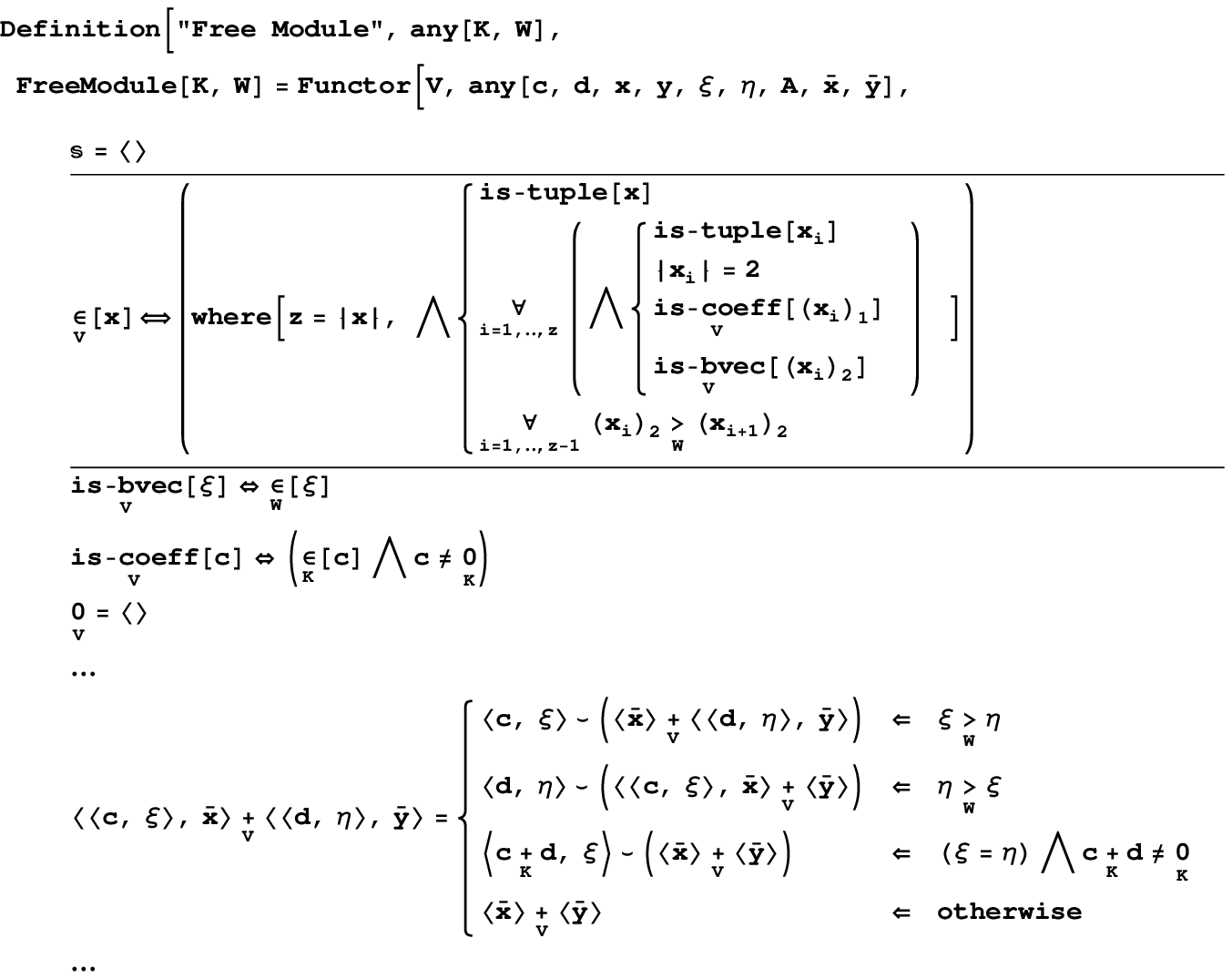}
\end{center}
\begin{center}
\includegraphics[scale=0.75]{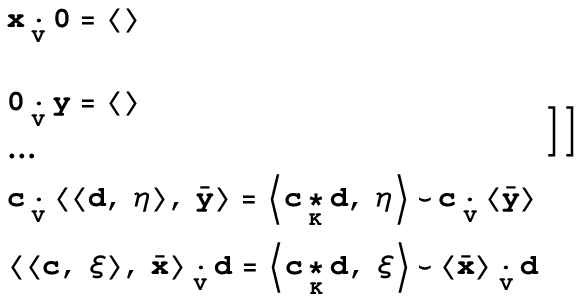}
\end{center}

By the $\mathtt{MonoidAlgebra}$ functor we extend this domain, introducing a multiplication using the corresponding operations in $\mathtt{K}$ and $\mathtt{W}$:
\begin{center}
\includegraphics[scale=0.75]{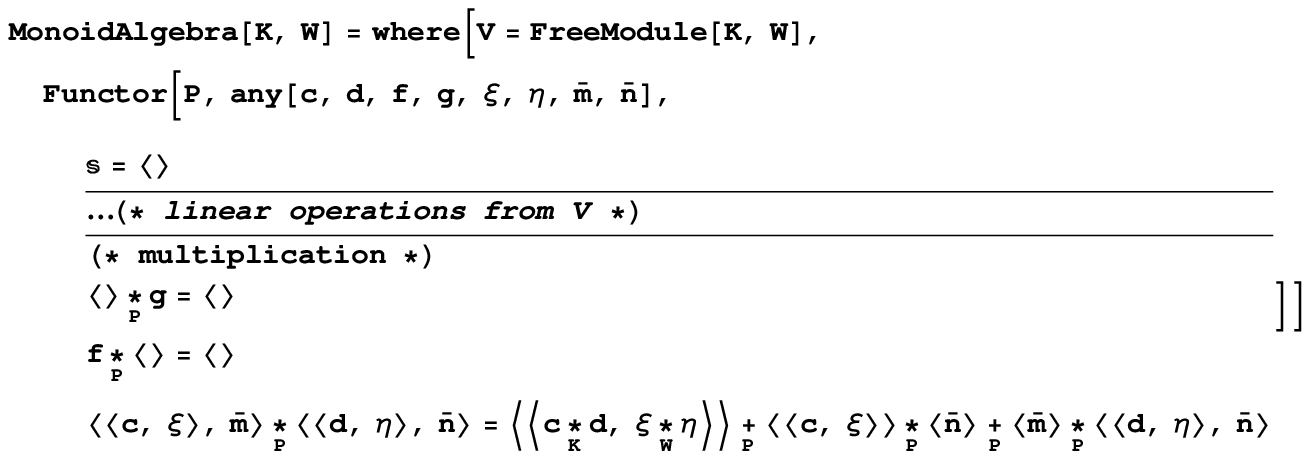}
\end{center}

\noindent The new domain inherits the structure on the elements of $\mathtt{V}$.

The main advantage of the above construction is that it is fully \emph{generic}: Not only can it be instantiated for different coefficient rings (or fields) and different sets of indeterminates, it comprises also the commutative and noncommutative case (where $\mathtt{W}$ is instantiated respectively by a commutative and noncommutative monoid).

\subsection{Quotient Structures and Canonical Simplifiers.}
In algebra (and also in the rest of mathematics), one encounters quotient structures on many occasions. The general setting is a set $A$ with various operations (an algebra in the general sense used in Sect.~\ref{sec:intdiffpol}) and a congruence relation $\equiv$ on $A$, meaning an equivalence relation that is compatible with all the operations on $A$. Then one may form the quotient $A/\mathord\equiv$, which will typically inherit some properties of $A$. For example, $A/\mathord\equiv$ belongs to the category of rings if $A$ does, so we can view the \emph{quotient construction} $A \mapsto A/\mathord\equiv$ as a functor on the category of rings.

But for computational purposes, the usual set-theoretic description of $A/\mathord\equiv$ as a set of equivalence classes is not suitable (since each such class is typically uncountably infinite). We will therefore use an alternative approach that was introduced in~\cite{BuchbergerLoos1983} as a general framework for symbolic representations. The starting point is a \emph{canonical simplifier} for $A/\mathord\equiv$, meaning a map $\sigma\colon A \rightarrow A$ such that
\begin{equation}
\label{eq:can-simpl}
\sigma(a) \equiv a \quad\text{and}\quad \sigma(a) = \sigma(a') \text{ whenever } a \equiv a'.
\end{equation}
The set $\tilde{A} = \sigma(A)$ is called the associated \emph{system of canonical forms} for $A/\mathord\equiv$.

Clearly canonical simplifiers exist for every quotient $A/\mathord\equiv$, but for computational purposes the crucial question is whether $\sigma$ is \emph{algorithmic}. Depending on $A/\mathord\equiv$, it may be easy or difficult or even impossible to construct a computable $\sigma\colon A \rightarrow A$. In the examples that we will treat, canonical simplifiers are indeed available.

Canonical simplifiers are also important because they allow us to \emph{compute in the quotient structure}. More precisely, one can transplant the operations on $A$ to $\tilde{A}$ by defining $\omega(a_1, \ldots, a_n) = \sigma(\omega(a_1, \ldots, a_n))$ for every operation $\omega$ on $A$. With these new operations, one may easily see that $\tilde{A}$ is isomorphic to the quotient $A/\mathord\equiv$; see the Theorem ``Canonical simplification and computation'' in~\cite[p.~13]{BuchbergerLoos1983}.

There is an intimate relation between canonical forms and \emph{normal forms} for rewrite systems (Sect.~\ref{sec:intdiffop} contains some basic terminology and references). In fact, every rewrite system $\mathord{\steprel}$ on an algebraic structure $A$ creates an equivalence relation $\equiv$, the symmetric closure of $\mathord{\redrel}$. Thus $a \equiv a'$ if and only if $a$ and $a'$ can be connected by an equational chain (using the rewrite rules in either direction). Typically, the relation $\equiv$ will actually be a congruence on $A$, so that the quotient $A/\mathord\equiv$ has a well-defined algebraic structure. Provided the rewrite system is noetherian, the normal forms of $\mathord{\steprel}$ are then also canonical forms for $A/\mathord\equiv$. Hence we will often identify these terms in a rewriting context.

For our \emph{implementation}, we use canonical simplifiers extensively. In fact, the observation made above about computing in the quotient structure is realized by a \tma\ functor, which is applied at various different places. Here $A$ is typically a $K$-algebra, with the ground field $K$ being $\Q$ or computable subfields of $\R$ and $\C$.

\subsection{Reduction Rings and \grb\ Bases.}
For defining reduction on polynomials, we use the \emph{reduction ring} approach in the sense of~\cite{Buc84,Buchberger2001}. For commutative reduction rings, see also~\cite{Stifter1987,Stifter1993}; for another noncommutative approach we refer to~\cite{MadlenerReinert1998b,MadlenerReinert1998a,MadlenerReinert1999}.

To put it simply, a reduction ring is a ring in which \grb\ bases can be done. A full \emph{axiomatization} for the commutative case is given in~\cite{Buc84}. If such rings satisfy certain additional axioms (defining the category of so-called ``\grb\ rings''), then \grb\ bases can be computed by iterated S-polynomial reduction in the given ring---this is the \grb\ Ring Extension Theorem, stated and proved in~\cite{Buc84}.

A detailed presentation of their \emph{construction} in the \tma\ setting was given in~\cite{Buchberger2001a,Buchberger2003}; it is the starting point for our current work. At this point we do not give an axiomatic characterization for noncommutative reduction rings, but we do use a construction that is similar to the commutative setting. Thus we endow a polynomial domain $\mathtt{P}$, built via the $\mathtt{MonoidAlgebra}$ functor with word monoid $\mathtt{W}$ and field $\mathtt{K}$, with the following three operations: a noetherian (partial) ordering, a binary operation \emph{least common reducible}, and a binary operation \emph{reduction multiplier}. The noetherian ordering is defined in the usual way in terms of the given orderings on $\mathtt{K}$ and $\mathtt{W}$.

The basic idea of \emph{reduction multipliers} is to answer the question: ``With which monomial do I have to multiply a given polynomial so that it cancels the leading term of another given polynomial?'' In the noncommutative case, the corresponding operation $\mathtt{rdm}$ splits into left reduction multiplier $\mathtt{lrdm}$ and its right counterpart $\mathtt{rrdm}$ defined as follows:
\begin{mmacode}
\includegraphics[scale=0.75]{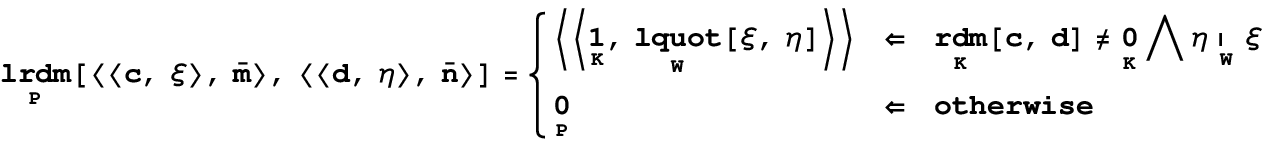}
\end{mmacode}
\begin{mmacode}
\includegraphics[scale=0.75]{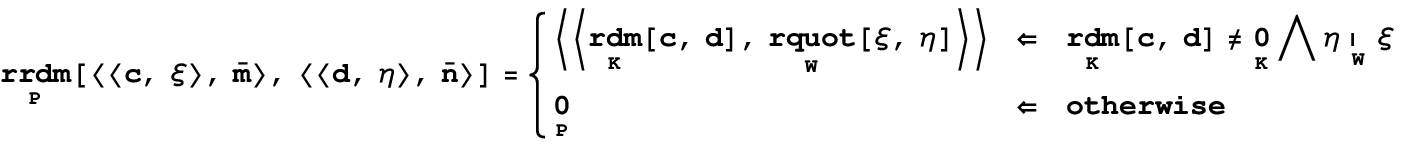}
\end{mmacode}

Here the divisibility relation $|$ on $\mathtt{W}$ checks whether a given word occurs within another word, and the corresponding quotients $\mathtt{lquot}$ and $\mathtt{rquot}$ yield the word segments respectively to the left and to the right of this occurrence. Since the scalars from $\mathtt{K}$ commute with the words, it is an arbitrary decision whether one includes it in the right (as here) or left reduction multiplier. In typical cases, this scalar factor is just $\mathtt{rdm[c,d] = c/d}$.

The operations relating \grb\ bases are introduced via a functor which is called
$\mathtt{GroebnerExtension}$. It defines \emph{polynomial reduction} using reduction multipliers (note that this includes also the commutative case, where one actually needs only one reduction multiplier, the other one being unity):

\begin{mmacode}
\includegraphics[scale=0.75]{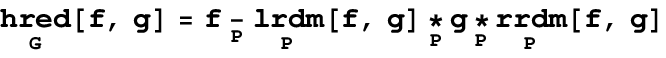}
\end{mmacode}
The next step is to introduce reduction modulo a system of
polynomials. For some applications (like the integro-differential
operators described in Sect.~\ref{sec:intdiffop}), it is necessary to
deal with \emph{infinite reduction systems}: the polynomial ring
contains infinitely many indeterminates, and reduction is applied
modulo an infinite set of polynomials. In other words, we want to deal
with an infinitely generated ideal in an infinitely generated algebra.

This is a broad topic, and we cannot hope to cover it in the present
scope. In general one must distinguish situations where both the
generators of the ideal and the algebra are parametrized by finitely
many families involving finitely many parameters and more general
algebras/ideals where this is not so. In the latter case, one works
with finite subsets, and all computations are approximate: one never
catches the whole algebraic picture. Fortunately, the applications we
have in mind---in particular the integro-differential operators---are
of the first type where \emph{full algorithmic control} can be
achieved. However, most of the common packages implementing
noncommutative \grb\ bases do not support such
cases~\cite{Levandovskyy2006,Levandovskyy2008}. For some recent
advances, we refer the reader
to~\cite{AschenbrennerHillar2008,BrouwerDraisma2009,
  HillarSullivant2009,LaScalaLevandovskyy2009} as well as Ufnarovski's
extensive survey chapter~\cite{Ufnarovskij1995}.

Let us point out just one important class of decidable reductions
in infinitely generated algebras---if an infinite set of (positively
weighted) \emph{homogeneous polynomials} is given, which is known to
be complete for each given degree (see~\cite{LaScalaLevandovskyy2009}
for the proof) since one can compute a truncated Gr\"obner basis of
such a graded ideal, which is finite up to a given degree. But if the
given set is not homogeneous or cannot be clearly presented degree by
degree, basically nothing can be claimed in general. Unfortunately,
the applications we have in mind seem to be of this type.

In our setting, infinitely generated ideals are handled by an
algorithmic operation for instantiating reduction rules. The
\emph{reduction} of polynomial $\mathtt{f}$ modulo a system
$\mathtt{S}$ is realized thus:

\begin{mmacode}
\includegraphics[scale=0.75]{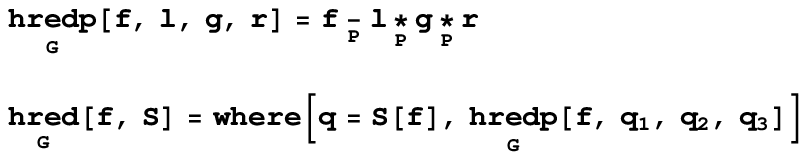}
\end{mmacode}

\noindent where $\mathtt{S[f]}$ is the operation that decides if there exists $\mathtt{g}$ modulo which $\mathtt{f}$ can be reduced, and it returns a triple containing the $\mathtt{g}$ and the left/right reduction multipliers needed for performing the reduction.

The main tool for the \grb\ bases construction, namely the notion of S-polynomial, can now be defined in terms of the \emph{least common reducible}:
\begin{mmacode}
\includegraphics[scale=0.75]{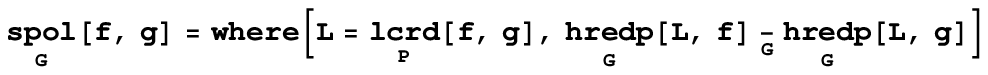}
\end{mmacode}
Here $\mathtt{lcrd[f,g]}$ represents the smallest monomial that can be reduced both modulo $\mathtt{f}$ and modulo $\mathtt{g}$, built from the least common reducible of the corresponding coefficients in $\mathtt{K}$ and the least common multiple of the words in $\mathtt{W}$:
\begin{mmacode}
\includegraphics[scale=0.75]{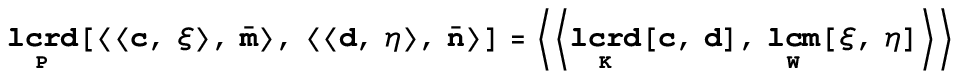}
\end{mmacode}
In our setting, the $\mathtt{lcrd[c,d]}$ can of course be chosen as unity since we work over a field $\mathtt{K}$, but in rings like $\Z$ one would have to use the least common multiple.

Finally, \emph{\grb\ bases} are computed by the usual accumulation of S-polynomials reduction, via the following version of Buchberger algorithm~\cite{Buchberger1965}:
\begin{mmacode}
\includegraphics[scale=0.75]{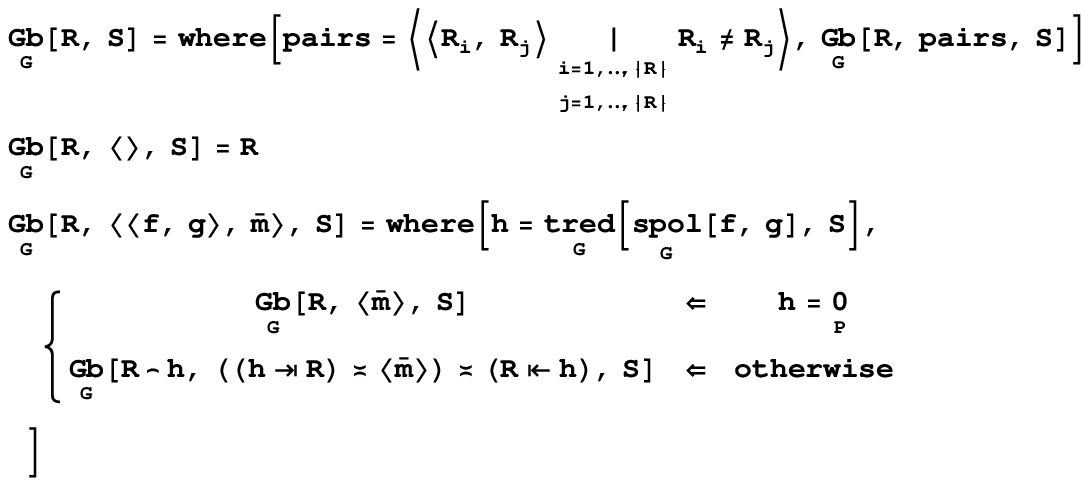}
\end{mmacode}
\noindent  Total reduction modulo a system, denoted here by $\mathtt{tred}$, is computed by iteratively performing reductions, until no more reduction is possible. The above implementation of Buchberger's algorithm is again \emph{generic} since it can be used in both commutative and noncommutative settings.  For finitely many indeterminates, the algorithm always terminates in the commutative case (by Dickson's Lemma); in the noncommutative setting, this cannot be guaranteed in general. For our applications we also have to be careful to ensure that the reduction systems we use are indeed noetherian (Sect.~\ref{sec:intdiffop}).

\section{Integro-Differential Algebras}
\label{sec:intdiffalg}

For working with boundary problems in a symbolic way, we first need an algebraic structure having \emph{differentiation along
with integration}. In the following definitions, one may think of our standard example $\galg = C^\infty (\mathbb{R})$,
where $\der=\,'$ is the usual derivation and $\cum$ the integral operator
\begin{equation*}
f \mapsto \int_a^x f(\xi) \, d\xi
\end{equation*}
for a fixed $a \in \mathbb{R}$.

\subsection{Axioms and Basic Properties.}

Let $K$ be a commutative ring. We first recall that $(\galg, \der)$ is a \emph{differential $K$-algebra}
if $\der\colon \galg \rightarrow \galg$ is a $K$-linear map satisfying the \emph{Leibniz rule}
\begin{equation}
  \label{eq:leibniz}
  \der(fg) =  \der(f) \, g + f \, \der(g).
\end{equation}
For convenience, we may assume $K \le \galg$, and we write $f'$ as a shorthand for $\der(f)$. The following definition~\cite{RosenkranzRegensburger2008a} captures the algebraic properties of the Fundamental Theorem of Calculus
and Integration by Parts.

\begin{definition}
\label{def:intdiffalg}
We call $(\galg, \der, \cum)$ an \emph{integro-differential algebra} if $(\galg,\der)$ is a commutative differential $K$-algebra and $\cum$ is a $K$-linear section (right inverse) of $\der$, i.e.
\begin{equation}
\label{eq:section-axiom}
    (\cum f)' = f,
\end{equation}
such that the \emph{differential Baxter axiom}
\begin{equation}
\label{eq:diff-baxter-axiom}
(\cum f')(\cum g') + \cum (f g)' = (\cum f') g + f (\cum g')
\end{equation}
 holds.
\end{definition}

We refer to $\der$ and $\cum$ respectively as the \emph{derivation}
and \emph{integral} of $\galg$ and to~\eqref{eq:section-axiom} as \emph{section axiom}.
Moreover, we call a section $\cum$ of $\der$ an integral for $\der$ if it satisfies~\eqref{eq:diff-baxter-axiom}.
For the similar notion of differential Rota-Baxter algebras, we refer to~\cite{GuoKeigher2008} but see also below.

Note that we have applied \emph{operator notation} for the integral; otherwise, for example,
the section axiom~\eqref{eq:section-axiom} would read $(\cum(f))'=f$, which is quite unusual at least for an analyst. We will
likewise often use operator notation for the derivation, so the Leibniz rule~\eqref{eq:leibniz} can also be written as $\der fg = (\der f) g + f (\der g)$. For the future we also introduce the following
convention for saving parentheses: Multiplication has precedence over
integration, so $\cum f \cum g$ is to be parsed as $\cum (f \cum g)$.

Let us also remark that Definition~\ref{def:intdiffalg} can be
generalized: First, no changes are needed for the \emph{noncommutative case} (meaning
$\galg$ is noncommutative). This would for example be an appropriate
setting for matrices with entries in $\galg = C^\infty[a,b]$,
providing an algebraic framework for the results on linear systems of
ODEs. Second, one may add a \emph{nonzero weight}
in the Leibniz axiom, thus incorporating also discrete models where $\der$
is the difference operator defined by $(\der f)_k = f_{k+1} - f_k$. The nice thing is that all other axioms remain unchanged.
For both generalizations confer also to~\cite{GuoKeigher2008}.

We study first some direct consequences of the section axiom~\eqref{eq:section-axiom}. For further details on linear left and right inverses, we refer for example to~\cite[p.~211]{Bourbaki1998} or to~\cite{Nashed1976} in the context of generalized inverses. We also introduce the following names for the \emph{projectors and modules} associated with a section of a derivation.

\begin{definition}
 \label{def:ini-eval}
  Let $(\galg, \der)$ be a differential $K$-algebra and $\cum$ a $K$-linear section
  of $\der$. Then we call the projectors
  \begin{equation*}
    \ini = \cum \circ \der \qquad\text{and}\qquad
    \evl = 1-\cum \circ \der
  \end{equation*}
  respectively the \emph{initialization} and the
  \emph{evaluation} of $\galg$. Moreover, we refer to
  \begin{equation*}
    \const = \Ker{\der} = \Ker{\ini} = \Img{\evl}
    \quad\text{and}\quad
    \init = \Img{\cum} = \Img{\ini} = \Ker{\evl}
  \end{equation*}
  as the submodules of respectively \emph{constant} and \emph{initialized functions}.
\end{definition}

Note that they are indeed projectors since $\ini \circ \ini = \cum
\circ (\der \circ \cum) \circ \der = \ini$
by~\eqref{eq:section-axiom}, which implies $\evl \circ \evl = 1 - \ini
- \ini + \ini \circ \ini = \evl$. As is well known~~\cite[p.~209]{Bourbaki1998}, every projector is
characterized by its kernel and image---they form a direct
decomposition of the module into two submodules, and every such
decomposition corresponds to a unique projector. We have therefore a \emph{canonical decomposition}
\begin{equation*}
  \galg = \const \dirs \init,
\end{equation*}
which allows to split off the ``constant part'' of every ``function'' in $\galg$.

Before turning to the other axioms, let us check what all this means
in the \emph{standard example} $\galg = C^\infty(\mathbb{R})$ with $\der = \tfrac{d}{dx}$ and $\cum = \int_a^x$. Obviously, the elements of $\const$ are then indeed the constant functions $f(x) = c$, while $\init$ consists of those functions that
satisfy the homogeneous initial condition $f(a) = 0$. This also
explains the terminology for the projectors: Here $\evl f =
f(a)$ evaluates $f$ at the initialization point $a$, and $\ini f = f -
f(a)$ enforces the initial condition. Note that in this example the evaluation $\evl$ is multiplicative; we will show below that this holds in any integro-differential algebra.

The Leibniz rule~\eqref{eq:leibniz} and the differential Baxter axiom~\eqref{eq:diff-baxter-axiom}
entail interesting properties of the two submodules $\const$ and $\init$. For understanding these, it is more economic to
forget for a moment about integro-differential algebras and turn to
the following general observation about \emph{projectors on an algebra}. We use again operator notation, giving precedence to multiplication over the linear operators.

\begin{lemma}
  \label{lem:proj-alg}
  Let $E$ and $J$ be projectors on a $K$-algebra with $E + J = 1$, set
\[
C = \Img{E} = \Ker{J}\quad\text{and}\quad I = \Ker{E} = \Img{J}.
\]
  Then the following  statements are equivalent:
  \begin{enumerate}
  \item The projector $E$ is multiplicative, meaning $Efg=(Ef)(Eg)$.
  \item The projector $J$ satisfies the identity $(Jf)(Jg) + Jfg = (Jf)g + f(Jg)$ .
  \item The submodule $C$ is a subalgebra and the submodule $I$ an ideal.
  \end{enumerate}
\end{lemma}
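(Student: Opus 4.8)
The plan is to establish the three-way equivalence by working throughout with the relation $J = 1 - E$ and with the standard description of a projector's image and kernel as, respectively, its set of fixed points and its set of annihilated elements. I would treat (1) $\Leftrightarrow$ (2) as a purely formal identity and then close the loop (1) $\Rightarrow$ (3) $\Rightarrow$ (1).

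First I would dispatch the equivalence of (1) and (2) by substituting $J = 1 - E$ into the identity in (2) and expanding. On the left one gets $(f-Ef)(g-Eg) + (fg - E(fg))$ and on the right $(f-Ef)g + f(g-Eg)$; the terms $2fg$, $(Ef)g$, and $f(Eg)$ cancel between the two sides, and what survives is exactly $E(fg) = (Ef)(Eg)$, i.e. statement (1). Every step is reversible, so this gives (1) $\Leftrightarrow$ (2) with no side conditions.

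For (1) $\Rightarrow$ (3) I would use that $C = \Img{E}$ is precisely the fixed-point set of $E$ and $I = \Ker{E}$. If $c_1, c_2 \in C$ then $Ec_i = c_i$, so multiplicativity gives $E(c_1 c_2) = (Ec_1)(Ec_2) = c_1 c_2$, whence $c_1 c_2 \in C$; thus the submodule $C$ is closed under multiplication and is a subalgebra. For the ideal property, take $i \in I$, so $Ei = 0$, and arbitrary $f$; multiplicativity yields $E(fi) = (Ef)(Ei) = 0$ and $E(if) = (Ei)(Ef) = 0$, so $fi, if \in I$ and $I$ is a two-sided ideal. For the converse (3) $\Rightarrow$ (1), I would decompose $f = Ef + Jf$ and $g = Eg + Jg$ using $E + J = 1$, with $Ef, Eg \in C$ and $Jf, Jg \in I$. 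Expanding the product gives four summands: the first, $(Ef)(Eg)$, lies in $C$ because $C$ is a subalgebra, while each of the remaining three has a factor in $I$ and hence lies in $I$ because $I$ is an ideal. Applying $E$, which fixes $C$ and annihilates $I$, kills those three summands and leaves $E(fg) = (Ef)(Eg)$, which is (1).

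All the computations here are routine, so I do not expect a genuine obstacle; the one point requiring care is the noncommutative bookkeeping in (3) $\Rightarrow$ (1), where both the left- and right-ideal properties of $I$ are needed to absorb the mixed products $(Ef)(Jg)$ and $(Jf)(Eg)$. This is precisely why statement (3) must assert that $I$ is a \emph{two-sided} ideal, a subtlety that evaporates in the commutative standard example but matters for the noncommutative generalization mentioned after Definition~\ref{def:intdiffalg}.
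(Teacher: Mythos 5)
Your proposal is correct and follows essentially the same route as the paper: the equivalence of (1) and (2) by substituting one projector for the other and expanding, (1) $\Rightarrow$ (3) from $C$ and $I$ being the image and kernel of the algebra endomorphism $E$, and (3) $\Rightarrow$ (1) by decomposing $f$ and $g$ along $\galg = C \dirs I$ and observing that three of the four summands are killed by $E$. The only (immaterial) difference is that you substitute $J = 1-E$ into statement (2) whereas the paper substitutes $E = 1-J$ into statement (1).
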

\begin{proof}
  \underline{1. $\Leftrightarrow$ 2.} Multiplicativity of $E=1-J$ just means
\[
  fg - Jfg = f g - (Jf)g - f(Jg) + (Jf)(Jg).
\]

  \noindent
  \underline{1. $\Rightarrow$ 3.} This follows immediately
  because $C$ is the image and $I$ the kernel of the algebra
  endomorphism $E$.

  \noindent
  \underline{3. $\Rightarrow$ 1.} Let $f, g$ be arbitrary. Since the
  given $K$-algebra is a direct sum of $C$ and $I$, we have $f = f_C +
  f_I$ and $g = g_C + g_I$ for $f_C = E f, g_C = E g\in C$ and $f_I
  = J f, g_I = Jg \in I$. Then
  \begin{equation*}
    E f g = E f_C  g_C + E f_C  g_I + E f_I  g_C  + E f_I  g_I
  \end{equation*}
  Since $I$ is an ideal, the last three summands vanish. Furthermore, $C$ is a
  subalgebra, so $f_C g_C \in C$. This implies $E f_C g_C = f_C
  \, g_C$ because $E$ is a projector onto $C$.
\qed
\end{proof}

This lemma is obviously applicable to integro-differential algebras
$\galg$ with the projectors $E = \evl$ and $J = \ini$ and with the
submodules $C = \const$ and $I = \init$ because the differential
Baxter axiom~\eqref{eq:diff-baxter-axiom} is exactly condition 2. From
now on, we will therefore refer to $\const$ as the \emph{algebra of
constant functions} and to $\init$ as the \emph{ideal of initialized
functions}. Moreover, we note that in any integro-differential algebra the evaluation
$\evl = 1-\cum \circ \der$ is multiplicative, meaning
\begin{equation}
\label{eq:mult}
\evl f g=(\evl f)(\evl g).
\end{equation}
Altogether we obtain now the following characterization
of integrals (note that the requirement that $\const$ be a subalgebra
already follows from the Leibniz axiom).

\begin{corollary}
  \label{cor:intdiffalg-ints}
  Let $(\galg, \der)$ be a differential algebra. Then a section $\cum$
  of $\der$ is an integral if and only if its evaluation $\evl = 1 - \cum \circ \der$ is multiplicative,
  and if and only if $\init = \Img{\cum}$ is an ideal.
\end{corollary}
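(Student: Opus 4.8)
The plan is to derive this corollary directly from Lemma~\ref{lem:proj-alg}, applied to the two projectors $E = \evl$ and $J = \ini$ of Definition~\ref{def:ini-eval}. First I would check that the hypotheses of the lemma are met in our situation: we have $\evl + \ini = 1$ by definition, both maps are projectors (as verified just after Definition~\ref{def:ini-eval}), and the associated submodules are exactly $C = \Img{\evl} = \Ker{\ini} = \const$ and $I = \Ker{\evl} = \Img{\ini} = \init$, again by Definition~\ref{def:ini-eval}. Thus all three equivalent statements of the lemma are at our disposal.

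The key observation is that condition~2 of the lemma, instantiated with $J = \ini$, is nothing but the differential Baxter axiom. Indeed, since $\ini = \cum \circ \der$ we have $\ini f = \cum f'$ and $\ini(fg) = \cum(fg)'$, so the identity $(\ini f)(\ini g) + \ini(fg) = (\ini f) g + f (\ini g)$ reads $(\cum f')(\cum g') + \cum(fg)' = (\cum f') g + f (\cum g')$, which is precisely~\eqref{eq:diff-baxter-axiom}. Hence ``$\cum$ is an integral'' coincides verbatim with condition~2 of the lemma, and the equivalence of conditions~1 and~2 yields at once that $\cum$ is an integral if and only if $\evl$ is multiplicative.

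For the third characterization I would invoke the equivalence of conditions~2 and~3 of the lemma, which tells us that $\cum$ is an integral if and only if $\const$ is a subalgebra \emph{and} $\init$ is an ideal. The only remaining work is to dispose of the subalgebra clause: by the Leibniz rule~\eqref{eq:leibniz}, $\der f = \der g = 0$ implies $\der(fg) = (\der f)\, g + f\, (\der g) = 0$, and $\der 1 = 0$ follows from $\der 1 = \der(1 \cdot 1) = 2\,\der 1$; thus $\const = \Ker{\der}$ is always a subalgebra, independently of whether $\cum$ is an integral. Consequently condition~3 collapses to the single requirement that $\init = \Img{\cum}$ be an ideal, which gives the last equivalence.

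I do not expect a genuine obstacle here, since the corollary is essentially a translation of Lemma~\ref{lem:proj-alg} into integro-differential language. The one point requiring care is the matching carried out in the second paragraph: one must recognize that the Baxter axiom, written in terms of $\cum f'$ and $\cum g'$, is literally the lemma's identity for $J = \ini$, and that the ``$\const$ is a subalgebra'' half of condition~3 is automatic from the Leibniz rule and may therefore be silently dropped when phrasing the corollary in terms of $\init$ alone.
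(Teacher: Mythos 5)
Your proposal is correct and follows exactly the route the paper takes: the corollary is stated as an immediate consequence of Lemma~\ref{lem:proj-alg} applied to $E=\evl$, $J=\ini$, with the paper's parenthetical remark that ``the requirement that $\const$ be a subalgebra already follows from the Leibniz axiom'' corresponding precisely to your third paragraph. The identification of condition~2 of the lemma with the differential Baxter axiom and the collapse of condition~3 to the ideal requirement are both carried out as the paper intends.
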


Note that the ideal $\init$ corresponding to an integral is in general
\emph{not a differential ideal} of $\galg$. We can see this already in
the standard example $C^\infty[0,1]$, where $\init$ consists of all $f
\in C^\infty[0,1]$ with $f(0) = 0$. Obviously $\init$ is not
differentially closed since $x \in \init$ but $x' = 1 \not\in \init$.

The above corollary implies immediately that an integro-differential
algebra $\galg$ can \emph{never be a field} since then the only
possibilities for $\init$ would be $0$ and $\galg$. The former case is
excluded since it means that $\Ker{\der} = \galg$, contradicting the
surjectivity of $\der$. The latter case corresponds to $\Ker{\der} =
0$, which is not possible because $\der 1 = 0$.

\begin{corollary}
  \label{cor:intdiffalg-not-field}
  An integro-differential algebra is never a field.
\end{corollary}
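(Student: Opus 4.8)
The plan is to argue by contradiction, using the characterization from Corollary~\ref{cor:intdiffalg-ints} that $\init = \Img{\cum}$ is an ideal of $\galg$. If $\galg$ were a field, its only ideals would be $0$ and $\galg$ itself, so it suffices to exclude both $\init = 0$ and $\init = \galg$.

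For the first case, I would use the canonical decomposition $\galg = \const \dirs \init$: if $\init = 0$, then $\galg = \const = \Ker{\der}$, i.e.\ $\der$ vanishes identically. But the section axiom~\eqref{eq:section-axiom} gives $\der \circ \cum = \mathrm{id}$, so $\der$ is surjective; on a nonzero ring such as a field this forbids $\der$ from being the zero map, a contradiction.

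For the second case, the same decomposition shows that $\init = \galg$ forces $\const = \Ker{\der} = 0$. However $1 \in \Ker{\der}$, since applying the Leibniz rule~\eqref{eq:leibniz} to $1 = 1 \cdot 1$ yields $\der 1 = 2\,\der 1$ and hence $\der 1 = 0$; as $1 \neq 0$ in a field, this contradicts $\Ker{\der} = 0$. With both possibilities for $\init$ eliminated, $\galg$ cannot be a field.

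I do not expect a genuine obstacle here: the corollary is a short deduction from facts already in hand, resting on the ideal property of $\init$, the splitting $\galg = \const \dirs \init$, and the section axiom. The only points requiring mild care are invoking surjectivity of $\der$ (immediate from $\cum$ being a section) in the first case, and the vanishing of $\der 1$ in the second.
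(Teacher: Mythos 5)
Your argument is correct and coincides with the paper's own reasoning: the text preceding the corollary rules out $\init = 0$ via the surjectivity of $\der$ and $\init = \galg$ via $\der 1 = 0$, exactly as you do. The only difference is that you spell out the small justifications ($\der\circ\cum = \mathrm{id}$ and the Leibniz computation for $\der 1$) that the paper leaves implicit.
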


In some sense, this observation ensures that all integro-differential
algebras are \emph{fairly complicated}. The next result points in the
same direction, excluding finite-dimensional algebras.

\begin{proposition}
  The iterated integrals $1, \cum 1, \cum\cum 1, \ldots$ are all
  linearly independent over $K$. In particular, every integro-differential algebra is
  infinite-dimensional.
\end{proposition}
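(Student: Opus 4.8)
The plan is to exploit the fact that the derivation acts as a \emph{downward shift} on the iterated integrals, so that applying a suitable power of $\der$ to any finite linear relation isolates its top coefficient. Writing $e_n = \cum^n 1$ for brevity, the single identity driving everything is that the section axiom~\eqref{eq:section-axiom} says $\der\cum = \mathrm{id}$, whence $\der e_n = \der\cum(\cum^{n-1}1) = e_{n-1}$ for every $n \ge 1$, while the Leibniz rule~\eqref{eq:leibniz} forces $\der e_0 = \der 1 = 0$ (since $\der 1 = \der(1\cdot 1) = 2\,\der 1$). Iterating, $\der^k e_n = e_{n-k}$ for $k \le n$ and $\der^k e_n = 0$ for $k > n$; in particular $\der^n e_n = e_0 = 1$ whereas $\der^n e_i = 0$ whenever $i < n$.

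With this in hand, I would argue by contradiction. Suppose some nontrivial finite $K$-linear relation $\sum_{i=0}^n c_i e_i = 0$ holds, chosen so that the top coefficient $c_n \neq 0$. Applying the $K$-linear operator $\der^n$ annihilates every term with $i < n$ and sends the top term to $c_n \cdot 1$, leaving $c_n \cdot 1 = 0$. Since we may assume $K \le \galg$, the unit embeds injectively and (in the intended nondegenerate setting) $1 \neq 0$, so $c_n \cdot 1 = 0$ forces $c_n = 0$, a contradiction. Hence no nontrivial relation among the $e_n$ exists, the family $\{e_n\}_{n \ge 0}$ is linearly independent over $K$, and $\galg$ is therefore infinite-dimensional.

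I do not anticipate a serious obstacle: the argument uses only the section axiom, $K$-linearity of $\der$, and $\der 1 = 0$, so the differential Baxter axiom~\eqref{eq:diff-baxter-axiom} and the multiplicativity of $\evl$ play no role whatsoever. The only point demanding a word of care is the final implication $c_n \cdot 1 = 0 \Rightarrow c_n = 0$, which relies on $\galg$ being a nonzero $K$-algebra in which $K$ sits as a subring; this is guaranteed in the standard setting and is consistent with Corollary~\ref{cor:intdiffalg-not-field}, which already rules out the most degenerate possibilities. An alternative packaging would be a direct induction on $n$, stripping off the bottom terms by a single application of $\der$ rather than the top terms by $\der^n$, but the annihilation argument above is the most economical.
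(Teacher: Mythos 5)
Your proposal is correct and is essentially the paper's own argument: the paper likewise applies $\der^{n+1}$ to a putative relation $c_0 u_0 + \cdots + c_{n+1} u_{n+1} = 0$ to isolate the top coefficient, merely packaging the conclusion as an induction on $n$ rather than as your ``choose the maximal nonzero coefficient'' contradiction. The computation $\der^k \cum^n 1 = \cum^{n-k} 1$ via the section axiom is the identical key step in both.
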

\begin{proof}
  Let $(u_n)$ be the sequence of iterated integrals of $1$. We prove
  by induction on~$n$ that $u_0, u_1, \ldots, u_n$ are linearly
  independent. The base case~$n=0$ is trivial. For the induction step
  from~$n$ to~$n+1$, assume $c_0 u_0 + \cdots + c_{n+1} u_{n+1} = 0$.
  Applying~$\der^{n+1}$ yields $c_{n+1} = 0$. But by the induction
  hypothesis, we have then also $c_0 = \cdots = c_n = 0$. Hence $u_0,
  \ldots, u_{n+1}$ are linearly independent. \qed
\end{proof}

Let us now return to our discussion of the \emph{differential Baxter
  axiom}~\eqref{eq:diff-baxter-axiom}. We will offer an equivalent
description that is closer to analysis. It is more compact but
less symmetric. (In the noncommutative case one has to add the
opposite version---reversing all products---for obtaining
equivalence.)

\begin{proposition}
  \label{prop:diff-baxter-axiom}
  The differential Baxter axiom~\eqref{eq:diff-baxter-axiom} is
  equivalent to
  \begin{equation}
    \label{eq:int-by-parts}
    f \cum g = \cum fg + \cum f' \cum g,
  \end{equation}
  in the presence of the Leibniz axiom~\eqref{eq:leibniz} and the section axiom~\eqref{eq:section-axiom}.
\end{proposition}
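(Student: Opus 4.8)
The plan is to prove the two implications separately, in each case exploiting the section axiom~\eqref{eq:section-axiom} through the substitution of elements of the form $\cum h$, whose derivatives collapse via $(\cum h)' = h$. As a preliminary step I would rewrite the differential Baxter axiom~\eqref{eq:diff-baxter-axiom} so that no composite derivative occurs: expanding $\cum(fg)'$ by the Leibniz rule~\eqref{eq:leibniz} turns~\eqref{eq:diff-baxter-axiom} into the equivalent identity
\[
(\cum f')(\cum g') + \cum(f'g) + \cum(fg') = (\cum f') g + f(\cum g'),
\]
which I will call $(\ast)$. All subsequent manipulation is carried out on $(\ast)$.

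To prove that~\eqref{eq:diff-baxter-axiom} implies~\eqref{eq:int-by-parts}, I would substitute $g = \cum h$ for an arbitrary $h$ into $(\ast)$. Since $(\cum h)' = h$ by the section axiom, this turns $\cum g'$ into $\cum h$ and $\cum(fg')$ into $\cum(fh)$, so that the term $(\cum f')(\cum h)$ occurring on both sides cancels and leaves $\cum(f'\cum h) + \cum(fh) = f\cum h$. Renaming $h$ back to $g$ gives exactly~\eqref{eq:int-by-parts}. This direction is thus a single clean substitution.

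The converse, that~\eqref{eq:int-by-parts} implies~\eqref{eq:diff-baxter-axiom}, is where the actual work lies and is the main obstacle, because~\eqref{eq:int-by-parts} is asymmetric in $f$ and $g$ whereas $(\ast)$ is symmetric. Here I would extract three instances of~\eqref{eq:int-by-parts}. Putting $g \to g'$ gives $f\cum g' = \cum(fg') + \cum(f'\cum g')$; swapping the roles of $f$ and $g$---which is legitimate since commutativity yields $g\cum f = (\cum f)g$ and $\cum(gf) = \cum(fg)$---and then putting $f \to f'$ gives $(\cum f')g = \cum(f'g) + \cum(g'\cum f')$; finally, applying~\eqref{eq:int-by-parts} to the product $(\cum f')(\cum g')$, using $(\cum f')' = f'$ from the section axiom, gives the product identity $(\cum f')(\cum g') = \cum(g'\cum f') + \cum(f'\cum g')$. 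Substituting the first two relations into the left-hand side of $(\ast)$ and cancelling the cross term by means of the product identity reproduces the right-hand side of $(\ast)$ exactly. The subtle point---and the reason this direction is harder---is recognizing that the mixed term $(\cum f')(\cum g')$ must itself be resolved by a further integration by parts, and that commutativity is precisely what allows the one-sided formula~\eqref{eq:int-by-parts} to account for both summands $\cum(f'g)$ and $\cum(fg')$.
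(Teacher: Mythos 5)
Your proof is correct. The forward direction coincides with the paper's own ``alternative'' argument: substituting $g \to \cum g$ into the differential Baxter axiom and using $(\cum g)'=g$ is exactly what you do (your preliminary Leibniz expansion to $(\ast)$ just makes the cancellation of $(\cum f')(\cum g)$ explicit). The converse is where you genuinely diverge. The paper argues structurally: \eqref{eq:int-by-parts} says every product $f\cum g$ lies in $\Img{\cum}$, hence $\init$ is an ideal, and then Corollary~\ref{cor:intdiffalg-ints} (resting on Lemma~\ref{lem:proj-alg} about complementary projectors on an algebra) delivers the differential Baxter axiom. You instead give a direct computation from three instances of \eqref{eq:int-by-parts} --- at $(f,g')$, at $(g,f')$ swapped via commutativity, and at $(\cum f',\cum g')$ resolved by a further integration by parts using $(\cum f')'=f'$ --- and verify $(\ast)$ term by term. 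Both are valid; the paper's route is shorter on the page but imports the projector machinery and the commutativity hypothesis hidden inside Lemma~\ref{lem:proj-alg}, whereas yours is self-contained and makes visible exactly where commutativity enters (in converting the one-sided formula into the two symmetric summands $\cum f'g$ and $\cum fg'$), at the cost of a slightly longer calculation.
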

\begin{proof}
  For proving~\eqref{eq:int-by-parts} note that since $\init$ is an
  ideal, $f \cum g$ is invariant under the projector $\ini$ and thus
  equal to $\cum (f \cum g)' = \cum f' \cum g + \cum fg$ by the
  Leibniz axiom~\eqref{eq:leibniz} and the section
  axiom~\eqref{eq:section-axiom}. Alternatively, one can also
  obtain~\eqref{eq:int-by-parts} from~\eqref{eq:diff-baxter-axiom} if
  one replaces $g$ by $\cum g$ in~\eqref{eq:diff-baxter-axiom}.
  Conversely, assuming~\eqref{eq:int-by-parts} we see that $\init$ is
  an ideal of $\galg$, so Corollary~\ref{cor:intdiffalg-ints} implies
  that $\cum$ satisfies the differential Baxter
  axiom~\eqref{eq:diff-baxter-axiom}. \qed
\end{proof}

For obvious reasons, we refer to~\eqref{eq:int-by-parts} as
\emph{integration by parts}. The usual formulation $\cum fG' = fG -
\cum f'G$ is only satisfied ``up to a constant'', or if one restricts
$G$ to $\Img{\cum}$. Substituting $G = \cum g$ then leads
to~\eqref{eq:int-by-parts}. But note that we have now a more algebraic
perspective on this well-known identity of Calculus: It tells us how
$\init$ is realized as an ideal of $\galg$.

Sometimes a variation of~\eqref{eq:int-by-parts} is useful. Applying $\cum$ to the Leibniz axiom~\eqref{eq:leibniz} and
using the fact that $\evl = 1 - \ini$ is multiplicative~\eqref{eq:mult}, we obtain
\begin{equation}
  \label{eq:int-by-parts-eval}
  \cum fg' = fg - \cum f'g - (\evl{f})(\evl{g}),
\end{equation}
which we call the \emph{evaluation variant} of integration by parts (a
form that is also used in Calculus).  Observe that,
we regain integration by parts~\eqref{eq:int-by-parts}
upon replacing $g$ by $\cum g$ in~\eqref{eq:int-by-parts-eval} since $\evl \cum g = 0$.

Note that in general one cannot extend a given differential algebra to an integro-differential algebra since the
latter requires a \emph{surjective derivation}. For example,
in $(K[x^2], x\der)$ the image of $\der$ does not contain $1$. As
another example (cf.~Sect.~\ref{sec:intdiffpol}), the algebra of differential polynomials $\galg =
K\{u\}$ does not admit an integral in the sense of
Definition~\ref{def:intdiffalg} since the image of $\der$ does
not contain $u$.

How can we isolate the \emph{integro part} of an
integro-differential algebra? The disadvantage (and also advantage!)
of the differential Baxter axiom~\eqref{eq:diff-baxter-axiom} is that
it entangles derivation and integral. So how can one express
``integration by parts'' without referring to the derivation?

\begin{definition}
\label{def:Rota-Baxter}
  Let $\galg$ be a $K$-algebra and $\cum$ a $K$-linear operation
  satisfying
  \begin{equation}
    \label{eq:baxter-axiom}
    (\cum f)(\cum g) = \cum f \cum g + \cum g \cum f.
  \end{equation}
  Then $(\galg, \cum)$ is called a \emph{Rota-Baxter algebra} (of weight zero).
\end{definition}

Rota-Baxter algebras are named after Glen Baxter~\cite{Baxter1960} and
Gian-Carlo Rota~\cite{Rota1969}; see also \cite{Guo2002,Guo2009} for further details.
In the following, we refer to~\eqref{eq:baxter-axiom} as \emph{Baxter
  axiom}; in contrast to the differential Baxter axiom~\eqref{eq:diff-baxter-axiom}, we will sometimes also call it the
\emph{pure Baxter axiom}.

One might now think that an integro-differential algebra $(\galg,
\der, \cum)$ is a differential algebra $(\galg, \der)$ combined with a
Rota-Baxter algebra $(\galg, \cum)$ such that the section
axiom~\eqref{eq:section-axiom} is satisfied. In fact, such a structure
was introduced independently by Guo and Keigher~\cite{GuoKeigher2008} under the name \emph{differential
  Rota-Baxter algebras}. But we will see that an integro-differential
algebra is a little bit more---this is why we also refer
to~\eqref{eq:baxter-axiom} as ``weak Baxter axiom'' and
to~\eqref{eq:diff-baxter-axiom} and~\eqref{eq:int-by-parts} as
``strong Baxter axioms''.

\begin{proposition}
  \label{prop:pure-baxter-alg}
  Let $(\galg, \der)$ be a differential algebra and $\cum$ a section
  for $\der$. Then $\cum$ satisfies the pure Baxter
  axiom~\eqref{eq:baxter-axiom} if and only if $\init = \Img{\cum}$ is a
  subalgebra of $\galg$. In particular, $(\galg, \cum)$ is a Rota-Baxter
  algebra for any integro-differential algebra $(\galg, \der, \cum)$.
\end{proposition}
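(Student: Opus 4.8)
The plan is to exploit the parsing convention (multiplication before integration), under which the right-hand side of the pure Baxter axiom~\eqref{eq:baxter-axiom} reads $\cum(f\cum g)+\cum(g\cum f)$, so that the axiom literally asserts that the product $(\cum f)(\cum g)$ of two elements of $\init=\Img{\cum}$ lies again in $\Img{\cum}$. The whole equivalence then amounts to matching this membership statement with the explicit Rota--Baxter formula, and the main work sits in the reverse implication.

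For the forward direction, I would assume~\eqref{eq:baxter-axiom} and use $K$-linearity of $\cum$ to rewrite the right-hand side as $\cum(f\cum g+g\cum f)$, which is manifestly in $\Img{\cum}=\init$. Since $\init$ is already a $K$-submodule (being the image of the $K$-linear map $\cum$) and every element of $\init$ has the form $\cum f$, this shows $\init$ is closed under multiplication, hence a subalgebra.

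For the reverse direction, the key tool is that the initialization $\ini=\cum\circ\der$ restricts to the identity on $\init=\Img{\cum}$: if $h=\cum f$ then $\der h=f$ by the section axiom~\eqref{eq:section-axiom}, whence $\ini h=\cum\der h=\cum f=h$. Now suppose $\init$ is a subalgebra. Then for arbitrary $f,g$ the product $(\cum f)(\cum g)$ lies in $\init$, so it is fixed by $\ini$, giving $(\cum f)(\cum g)=\cum\,\der\bigl[(\cum f)(\cum g)\bigr]$. Expanding the derivative by the Leibniz rule~\eqref{eq:leibniz} and simplifying $\der\cum f=f$ and $\der\cum g=g$ via~\eqref{eq:section-axiom} yields $\cum\bigl(f\cum g+(\cum f)g\bigr)$; commutativity of $\galg$ together with the parsing convention rewrites this as $\cum f\cum g+\cum g\cum f$, which is exactly~\eqref{eq:baxter-axiom}.

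Finally, the ``in particular'' clause is immediate: for an integro-differential algebra $(\galg,\der,\cum)$, Corollary~\ref{cor:intdiffalg-ints} shows that $\init$ is an ideal, hence a fortiori a subalgebra, so the equivalence just established makes $\cum$ satisfy the pure Baxter axiom, i.e.\ $(\galg,\cum)$ is Rota--Baxter. The only point requiring care is the reverse direction, where one must first observe that membership of the product in $\Img{\cum}$ is precisely what licenses replacing $(\cum f)(\cum g)$ by $\cum$ of its derivative; after that, everything reduces to a one-line computation with Leibniz, the section axiom, and commutativity.
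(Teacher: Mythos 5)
Your proof is correct and follows essentially the same route as the paper: the forward direction reads the Baxter axiom as a membership statement in $\Img{\cum}$, and the reverse direction uses that elements of $\init$ are fixed by the projector $\ini = \cum\circ\der$, then expands via the Leibniz rule and the section axiom. You have merely made explicit two points the paper leaves implicit, namely the verification that $\ini$ restricts to the identity on $\Img{\cum}$ and the derivation of the ``in particular'' clause from Corollary~\ref{cor:intdiffalg-ints}.
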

\begin{proof}
  Clearly~\eqref{eq:baxter-axiom} implies that $\init$ is a subalgebra
  of~$\galg$. Conversely, if $(\cum f)(\cum g)$ is contained in
  $\init$, it is invariant under the projector $\ini$ and
  must therefore be equal to $\cum \der \; (\cum f)(\cum g) = \cum f \cum g +
  \cum g \cum f$ by the Leibniz axiom~\eqref{eq:leibniz}.
 \qed
\end{proof}

So the strong Baxter axiom~\eqref{eq:diff-baxter-axiom} requires that
$\init$ be an ideal, the weak Baxter axiom~\eqref{eq:baxter-axiom}
only that it be a subalgebra. We will soon give a counterexample for
making sure that~\eqref{eq:diff-baxter-axiom} is indeed asking for
more than~\eqref{eq:baxter-axiom}, see Example~\ref{ex:counter}.
But before this we want to express the difference between the two axioms in terms of a \emph{linearity
  property}. Recall that both $\der$ and $\cum$ were introduced as
$K$-linear operations on $\galg$. Using the Leibniz
axiom~\eqref{eq:leibniz}, one sees immediately that $\der$ is
even $\const$-linear. It is natural to expect the same from $\cum$,
but this is exactly the difference
between~\eqref{eq:diff-baxter-axiom} and~\eqref{eq:baxter-axiom}.

\begin{proposition}
  \label{prop:baxter-clinear}
  Let $(\galg, \der)$ be a differential algebra and $\cum$ a section
  for $\der$. Then $\cum$ satisfies the differential Baxter
  axiom~\eqref{eq:diff-baxter-axiom} if and only if it satisfies the pure Baxter
  axiom~\eqref{eq:baxter-axiom} and is $\const$-linear.
\end{proposition}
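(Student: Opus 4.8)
The plan is to route both implications through the integration-by-parts form~\eqref{eq:int-by-parts}, which Proposition~\ref{prop:diff-baxter-axiom} already establishes as equivalent to the differential Baxter axiom~\eqref{eq:diff-baxter-axiom} in the presence of the Leibniz~\eqref{eq:leibniz} and section~\eqref{eq:section-axiom} axioms. Working with the compact identity $f \cum g = \cum fg + \cum f' \cum g$ rather than~\eqref{eq:diff-baxter-axiom} itself keeps the algebra manageable in both directions.

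For the forward implication I would assume~\eqref{eq:diff-baxter-axiom}. That the pure Baxter axiom~\eqref{eq:baxter-axiom} holds is then immediate from the structural results: Corollary~\ref{cor:intdiffalg-ints} turns~\eqref{eq:diff-baxter-axiom} into the statement that $\init$ is an ideal, hence in particular a subalgebra, so Proposition~\ref{prop:pure-baxter-alg} gives~\eqref{eq:baxter-axiom}. For $\const$-linearity I would specialize integration by parts~\eqref{eq:int-by-parts} to a constant $f = c \in \const$: since $c' = 0$ forces $\cum c' = 0$, the identity collapses to $c \cum g = \cum(cg)$, which is exactly the assertion that $\cum$ is $\const$-linear.

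The reverse implication is where the real work lies. Assuming the pure Baxter axiom~\eqref{eq:baxter-axiom} together with $\const$-linearity, I aim to recover~\eqref{eq:int-by-parts}. The key device is the canonical decomposition $f = \evl f + \ini f = c + \cum f'$ with $c = \evl f \in \const$. Expanding $f \cum g = c \cum g + (\cum f')(\cum g)$, I would rewrite $c \cum g = \cum(cg)$ using $\const$-linearity and $(\cum f')(\cum g) = \cum(f' \cum g) + \cum((\cum f')g)$ using~\eqref{eq:baxter-axiom} and commutativity. Substituting $cg = fg - (\cum f')g$ and invoking $K$-linearity of $\cum$ then cancels the term $\cum((\cum f')g)$ against its counterpart, leaving precisely $f \cum g = \cum(fg) + \cum(f' \cum g)$; Proposition~\ref{prop:diff-baxter-axiom} converts this back into~\eqref{eq:diff-baxter-axiom}.

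I expect the sole obstacle to be the bookkeeping in this last computation: one must ensure that the extra summand $\cum((\cum f')g)$ produced by the pure Baxter axiom is exactly the one eliminated when $\cum(cg)$ is expanded via $c = f - \cum f'$. This cancellation rests on commutativity of $\galg$ and on handling the parenthesis convention $\cum a \cum b = \cum(a \cum b)$ consistently, but once these are tracked carefully the identity falls out with no further input.
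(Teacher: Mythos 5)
Your proposal is correct and follows essentially the same route as the paper: the forward direction obtains the pure Baxter axiom from Proposition~\ref{prop:pure-baxter-alg} and $\const$-linearity by setting $f=c$ in~\eqref{eq:int-by-parts}, while the reverse direction reduces to~\eqref{eq:int-by-parts} via Proposition~\ref{prop:diff-baxter-axiom} and the decomposition $f = \evl f + \cum f'$. The only cosmetic difference is that the paper treats the two summands of $\galg = \const \dirs \init$ as separate cases by linearity, whereas you expand both at once and cancel explicitly; the cancellation you flag does go through exactly as you describe.
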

\begin{proof}
  Assume first that $\cum$ satisfies the differential Baxter
  axiom~\eqref{eq:diff-baxter-axiom}. Then the pure Baxter
  axiom~\eqref{eq:baxter-axiom} holds by
  Proposition~\ref{prop:pure-baxter-alg}. For proving $\cum cg = c \,
  \cum g$ for all $c \in \const$ and $g \in \galg$, we use the
  integration-by-parts formula~\eqref{eq:int-by-parts} and $c' = 0$.

  Conversely, assume the pure Baxter axiom~\eqref{eq:baxter-axiom} is
  satisfied and $\cum$ is $\const$-linear. By
  Proposition~\ref{prop:diff-baxter-axiom} it suffices to prove the
  integration-by-parts formula~\eqref{eq:int-by-parts} for $f, g \in
  \galg$. Since $\galg = \const \dirs \init$, we may first consider
  the case $f \in \const$ and then the case $f \in \init$. But the
  first case follows from $\const$-linearity; the second case means $f
  = \cum \tilde{f}$ for $\tilde{f} \in \galg$,
  and~\eqref{eq:int-by-parts} becomes the pure Baxter
  axiom~\eqref{eq:baxter-axiom} for $\tilde{f}$ and $g$.
  \qed
\end{proof}

Let us now look at some natural \emph{examples of integro-differential
  algebras}, in addition to our standard examples $C^\infty(\R)$ and $C^\infty[a,b]$.

\begin{rrexample}
  \label{ex:intdiffalg-analysis}
  The \emph{analytic functions} on the real interval $[a,b]$ form an
  integro-differential subalgebra $C^\omega[a,b]$ of $C^\infty[a,b]$
  over $K = \R$ or $K = \C$. It contains in turn the
  integro-differential algebra $\exppol$ of \emph{exponential
    polynomials}, defined as the space of all $K$-linear combinations
  of $x^n e^{\lambda x}$, with $n \in \N$ and $\lambda \in K$.
  Finally, the algebra of \emph{ordinary polynomials} $K[x]$ is an
  integro-differential subalgebra in all cases.
\end{rrexample}

All the three examples above have \emph{algebraic analogs},
with integro-differential structures defined in the expected way.

\begin{rrexample}
  For a field $K$ of characteristic zero, the \emph{formal power
    series} $K[[x]]$ are an integro-differential algebra. One sets
  $\der x^k = k x^{k-1}$ and $\cum x^k = x^{k+1}/(k+1)$; note that the
  latter needs characteristic zero. The formal power series contain a
  highly interesting and important integro-differential subalgebra:
  the \emph{holonomic power series}, defined as those whose
  derivatives span a finite-dimensional $K$-vector
  space~\cite{ChyzakSalvy1998,SalvyZimmerman1994}.

  Of course $K[[x]]$ also contains (an isomorphic copy of) the
  integro-differential algebra of \emph{exponential polynomials}. In
  fact, one can define $\exppol$ algebraically as a
  quotient of the free algebra generated by the symbols~$x^k$ and~$e^{\lambda x}$, with~$\lambda$ ranging over~$K$. Derivation and
  integration are then defined in the obvious way. The exponential
  polynomials contain the \emph{polynomial ring} $K[x]$ as an
  integro-differential subalgebra. When $K=\R$ or $K=\C$, we use the notation
  $K[x]$ and $\exppol$ both for the analytic and the algebraic
  object since they are isomorphic.
\end{rrexample}

The following example is a clever way of transferring the previous example
to coefficient fields of \emph{positive characteristic}.

\begin{rrexample}
  \label{ex:Hurwitz}
  Let $K$ be an arbitrary field (having zero or positive
  characteristic). Then the algebra $\Hur$ of \emph{Hurwitz
    series}~\cite{Keigher1997} over $K$ is defined as the $K$-vector
  space of infinite $K$-sequences with the multiplication defined as
  \begin{equation*}
    (a_n) \cdot (b_n) = \bigg(\sum_{i=0}^n \binom{n}{i} \, a_i
    b_{n-i}\bigg)_n
  \end{equation*}
  for all $(a_n), (b_n) \in \Hur$. If one introduces derivation and
  integration by
  \begin{align*}
    &\der \, (a_0, a_1, a_2, \dots) = (a_1, a_2, \dots),\\
    &\cum \, (a_0, a_1, \dots) = (0, a_0, a_1, \dots),
  \end{align*}
  the Hurwitz series form an integro-differential algebra $(\Hur,
  \der, \cum)$, as explained by~\cite{KeigherPritchard2000} and~%
  \cite{Guo2002}. Note that as an additive group, $\Hur$ coincides
  with the formal power series $K[[z]]$, but its multiplicative
  structure differs: We have an isomorphism
  \begin{equation*}
    \sum_{n=0}^\infty a_n \, z^n \:\mapsto\: (n! \, a_n)
  \end{equation*}
  from $K[[z]]$ to $\Hur$ if and only if $K$ has characteristic zero.
  The point is that one can integrate every element of $\Hur$, whereas
  the formal power series $z^{p-1}$ does not have an antiderivative
  in $K[[z]]$ if $K$ has characteristic $p>0$.
\end{rrexample}

Now for the promised \emph{counterexample} to the claim that the
section axiom would suffice for merging a differential algebra
$(\galg, \der)$ and a Rota-Baxter algebra $(\galg, \cum)$ into an
integro-differential algebra $(\galg, \der, \cum)$.

\begin{rrexample}
\label{ex:counter}
  \newcommand{\orcum}{\cuma\!}
  Set $R = K[y]/(y^4)$ for $K$ a field of characteristic zero and define
  $\der$ on $\galg = R[x]$ as usual. Then $(\galg, \der)$ is a
  differential algebra. Let us define a $K$-linear map $\cum$ on
  $\galg$ by
  \begin{equation}
    \label{eq:baxter-not-diff}
    \cum f = \orcum f + f(0,0) \, y^2,
  \end{equation}
  where $\orcum$ is the usual integral on $R[x]$ with $x^k \mapsto
  x^{k+1}/(k+1)$. Since the second term vanishes under $\der$, we see
  immediately that $\cum$ is a section of $\der$. For verifying the
  pure Baxter axiom~\eqref{eq:baxter-axiom}, we compute
  \begin{align*}
    &(\cum f)(\cum g) = (\orcum f)(\orcum g) + y^2 \, \orcum \, \big(
    g(0,0) \, f + f(0,0) \, g \big) + f(0, 0) \, g(0, 0) \, y^4,\\
    &\cum f \cum g = \cum f \big(\orcum g + g(0,0) \, y^2\big)
    = \orcum f \orcum g + g(0,0) \, y^2 \, \orcum f.
  \end{align*}
  Since $y^4 \equiv 0$ and the ordinary integral $\orcum$ fulfills the
  pure Baxter axiom~\eqref{eq:baxter-axiom}, this implies immediately
  that $\cum$ does also. However, it does not fulfill the differential
  Baxter axiom~\eqref{eq:diff-baxter-axiom} because it is not
  $\const$-linear: Observe that $\const$ is here $\Ker{\der} = R$, so
  in particular we should have $\cum (y \cdot 1) = y \cdot \cum 1$.
  But one checks immediately that the left-hand side yields $xy$,
  while the right-hand side yields $xy+y^3$.
\end{rrexample}

\subsection{Ordinary Integro-Differential Algebras.}

The following example shows that our current notion of integro-differential algebra
includes also algebras of ``multivariate functions''.

\begin{rrexample}
  \label{ex:partial-intdiffalg}
  Consider $\galg = C^\infty(\R^2)$ with the derivation $\partial u =
  u_x + u_y$. Finding sections for $\partial$ means solving the
  \emph{partial differential equation} $u_x + u_y = f$. Its general
  solution is given by
  \begin{equation*}
    u(x,y) = \int_a^x f(t,y-x+t)\,dt + g(y-x),
  \end{equation*}
  where $g \in C^\infty(\R)$ and $a \in \R$ are arbitrary. Let us
  choose $a=0$ for simplicity. In order to ensure a linear section,
  one has to choose $g = 0$, arriving at
  \begin{equation*}
    \cum f = \int_0^x f(t,y-x+t)\,dt,
  \end{equation*}
  Using a change of variables, one may verify that $\cum$ satisfies
  the pure Baxter axiom~\eqref{eq:baxter-axiom}, so $(\galg, \cum)$ is
  a Rota-Baxter algebra.

  We see that the \emph{constant functions} $\const = \Ker{\der}$ are
  given by $(x,y) \mapsto c(x-y)$ with arbitrary $c \in C^\infty(\R)$,
  while the \emph{initialized functions} $\init = \Img{\cum}$ are
  those $F \in \galg$ that satisfy $F(0, y) = 0$ for all $y \in \R$.
  In other words, $\const$ consists of all functions constant on the
  characteristic lines $x - y = \mathrm{const}$, and $\init$ of those
  satisfying the homogeneous initial condition on the vertical axis
  (which plays the role of a ``noncharacteristic initial manifold'').
  This is to be expected since $\cum$ integrates along the
  characteristic lines starting from the initial manifold. The
  \emph{evaluation} $\evl\colon \galg \rightarrow \galg$ maps a
  function $f$ to the function $(x,y) \mapsto f(0,y-x)$. This means
  that $f$ is ``sampled'' only on the initial manifold, effectively
  becoming a univariate function: the general point $(x,y)$ is
  projected along the characteristics to the initial point $(0,y-x)$.

  Since $\evl$ is multiplicative on $\galg$, Lemma~\ref{lem:proj-alg}
  tells us that $(\galg, \der, \cum)$ is in fact an
  \emph{integro-differential algebra}. Alternatively, note that
  $\init$ is an ideal and that $\cum$ is $\const$-linear. Furthermore,
  we observe that here the polynomials are given by $K[x]$.
\end{rrexample}

In the following, we want to restrict
ourselves to boundary problems for \emph{ordinary differential
  equations}. Hence we want to rule out cases like
Example~\ref{ex:partial-intdiffalg}. The most natural way for
distinguishing ordinary from partial differential operators is to look
at their kernels since only the former have finite-dimensional ones. Note that in the following definition we
deviate from the standard terminology in differential algebra~\cite[p.~58]{Kolchin1973}, where ordinary only refers to
having a single derivation.

From now on, we restrict the ground ring $K$ to a \emph{field}. We can now characterize when a differential algebra is ordinary by requiring that $\const$ be one-dimensional over $K$, meaning $\const = K$.

\begin{definition}
  \label{def:ord-intdiffalg}
  A differential algebra $(\galg, \der)$ is called \emph{ordinary} if $\dim_K{\Ker{\der}}
  = 1$.
\end{definition}

Note that except for Example~\ref{ex:partial-intdiffalg} all our examples have been
ordinary integro-differential algebras. The requirement of ordinariness has a number of pleasant consequences.
First of all, the somewhat tedious distinction between the weak and
strong \emph{Baxter axioms} disappears since now $\galg$ is an algebra over its own field of constants $K = \const$.
Hence $\cum$ is by definition $\const$-linear, and
Lemma~\ref{prop:baxter-clinear} ensures that the pure Baxer
axiom~\eqref{eq:baxter-axiom} is equivalent to the differential
Baxter axiom~\eqref{eq:diff-baxter-axiom}. Let us summarize this.

\begin{corollary}
  \label{cor:strong-weak-baxter}
  In an ordinary integro-differential algebra, the constant functions
  coincide with the ground field, and the strong and weak Baxter
  axioms are equivalent.
\end{corollary}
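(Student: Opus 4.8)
The plan is to dispatch the two assertions separately, leaning on the machinery already assembled. First I would establish $\const = K$. Since we have assumed $K \le \galg$ and $\der$ is $K$-linear with $\der 1 = 0$, every scalar $c \in K$ satisfies $\der c = \der(c \cdot 1) = c \, \der 1 = 0$, so $K \subseteq \Ker{\der} = \const$. On the other hand, ordinariness (Definition~\ref{def:ord-intdiffalg}) asserts $\dim_K \Ker{\der} = 1$. As $K$ is itself a one-dimensional $K$-subspace sitting inside the one-dimensional space $\const$, the inclusion must be an equality, giving $\const = K$.

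For the second assertion I would invoke Proposition~\ref{prop:baxter-clinear}, which says that the differential (strong) Baxter axiom~\eqref{eq:diff-baxter-axiom} holds if and only if the pure (weak) Baxter axiom~\eqref{eq:baxter-axiom} holds \emph{and} $\cum$ is $\const$-linear. By the first part, $\const = K$, so $\const$-linearity coincides with $K$-linearity---which is built into the definition of $\cum$. Hence the extra $\const$-linearity hypothesis is vacuous in the ordinary setting, and the two axioms become equivalent.

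I do not expect any genuine obstacle here: both halves follow immediately from results proved above. The only point requiring a moment's care is the inclusion $K \subseteq \const$, which rests on the fact that the derivation annihilates the ground field---a consequence of $K$-linearity together with $\der 1 = 0$. Everything else is bookkeeping: once $\const = K$ is in hand, the collapse of the weak/strong distinction is simply the observation that the hypothesis separating the two axioms in Proposition~\ref{prop:baxter-clinear} is automatically met.
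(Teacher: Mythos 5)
Your proposal is correct and follows essentially the same route as the paper: the text preceding the corollary argues that ordinariness forces $\const = K$, so that $\cum$ is automatically $\const$-linear by its defining $K$-linearity, whereupon Proposition~\ref{prop:baxter-clinear} collapses the strong/weak distinction. Your only addition is spelling out the inclusion $K \subseteq \Ker{\der}$ via $\der 1 = 0$, which the paper leaves implicit.
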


Recall that a \emph{character} on an algebra (or group) is a
multiplicative linear functional; this may be seen as a special case
of the notion of character in representation theory, namely the case
when the representation is one-dimensional. In our context, a character on an integro-differential
algebra $\galg$, is a $K$-linear map $\phi\colon \galg \rightarrow K$ satisfying $\phi(fg) =
\phi(f) \, \phi(g)$ and a fortiori also $\phi(1) = 1$. So we just require $\phi$ to be a $K$-algebra homomorphism, as for example in~\cite[p.~407]{Lang1993}.

Ordinary integro-differential algebras will always have at least one
character, namely the \emph{evaluation}: One knows from Linear Algebra
that a projector $P$ onto a one-dimensional subspace $[w]$ of a
$K$-vector space $V$ can be written as $P(v) = \phi(v) \, w$, where
$\phi\colon V \rightarrow K$ is the unique functional with $\phi(w) =
1$. If~$V$ is moreover a $K$-algebra, a projector onto $K = [1]$ is
canonically described by the functional $\phi$ with normalization
$\phi(1) = 1$. Hence multiplicative projectors like $\evl$ can be
viewed as characters. In the next section, we consider other
characters on $\galg$; for the moment let us note $\evl$ as a
distinguished character. We write $\multfunc$ for the set of all
nonzero characters on a $K$-algebra $\galg$, in other words all
algebra homomorphisms $\galg \rightarrow K$.

One calls a $K$-algebra \emph{augmented} if there exists a character
on it. Its kernel $\init$ is then known as an \emph{augmentation
  ideal} and forms a direct summand of $K$; see for
example~\cite[p.~132]{Cohn2003}. Augmentation ideals are always
maximal ideals (generalizing the $C^\infty[a,b]$ case) since the
direct sum $\galg = K \dirs \init$ induces a ring isomorphism $\galg
\!/\! \init \cong K$. Corollary~\ref{cor:intdiffalg-ints} immediately
translates to the following characterization of integrals in ordinary
differential algebras.

\begin{corollary}
\label{cor:intcharacter}
  In an ordinary differential algebra $(\galg, \der)$, a section
  $\cum$ of $\der$ is an integral if and only if its evaluation is a character
  if and only if $\init = \Img{\cum}$ is an augmentation ideal.
\end{corollary}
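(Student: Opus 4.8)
The plan is to derive this directly from Corollary~\ref{cor:intdiffalg-ints}, which for an arbitrary differential algebra already asserts the equivalence of the three conditions ``$\cum$ is an integral'', ``$\evl = 1 - \cum\circ\der$ is multiplicative'', and ``$\init = \Img{\cum}$ is an ideal''. Under the hypothesis of ordinariness all that remains is to upgrade ``multiplicative'' to ``character'' and ``ideal'' to ``augmentation ideal''. The pivotal observation is that ordinariness forces $\const = K$, so that $\Img{\evl} = \const = K$ (by Definition~\ref{def:ini-eval}) and the projector $\evl$ may be regarded as a $K$-linear functional $\evl\colon \galg \rightarrow K$.

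First I would treat the evaluation. Since $\der 1 = 0$ gives $\evl 1 = 1$, this functional is nonzero and unital. Hence the self-map identity $\evl fg = (\evl f)(\evl g)$ expressing multiplicativity of $\evl$ is literally the assertion that $\evl\colon \galg \rightarrow K$ satisfies $\phi(fg) = \phi(f)\,\phi(g)$, i.e.\ that $\evl$ is a character. Thus, in the ordinary case, ``$\evl$ multiplicative'' and ``$\evl$ is a character'' coincide, and Corollary~\ref{cor:intdiffalg-ints} already yields the equivalence of ``$\cum$ is an integral'' with ``$\evl$ is a character''.

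It remains to bring in the augmentation ideal. If $\evl$ is a character, then $\init = \Ker{\evl}$ is by definition the kernel of a character, hence an augmentation ideal. Conversely, an augmentation ideal is the kernel of some character on $\galg$ and is therefore in particular an ideal; so whenever $\init$ is an augmentation ideal it is an ideal, and Corollary~\ref{cor:intdiffalg-ints} returns that $\cum$ is an integral. This closes the cycle ``integral $\Rightarrow$ character $\Rightarrow$ augmentation ideal $\Rightarrow$ integral''. I expect no genuine obstacle here: the argument is a matter of unwinding definitions, and the only point requiring care is to keep track of whether each notion is being read inside $\galg$ or inside the ground field $K$ (the products $(\evl f)(\evl g)$ agreeing in both since $K \le \galg$ is a subalgebra), together with the elementary check that $\evl$ is nonzero and unital so that multiplicativity genuinely yields a character.
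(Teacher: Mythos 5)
Your proposal is correct and matches the paper's own route: the paper likewise derives the statement ``immediately'' from Corollary~\ref{cor:intdiffalg-ints}, using the preceding observation that in the ordinary case $\const=K$ is one-dimensional, so the multiplicative projector $\evl$ is precisely a character and its kernel $\init$ precisely an augmentation ideal. Your extra care about $\evl$ being nonzero and unital, and about closing the cycle via ``augmentation ideal $\Rightarrow$ ideal'', only makes explicit what the paper leaves implicit.
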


\subsection{Initial Value Problems.}

It is clear that in general we cannot assume that the solutions of a differential equation with coefficients in
$\galg$ are again in $\galg$. For example, in $\galg = K[x]$, the differential equation $u' - u = 0$ has no solution. In fact, its ``actual'' solution space is spanned by $u(x) = e^x$ if $K = \R$ or $K
= \C$. So in this case we should have taken the exponential
polynomials $\galg = \exppol$ for ensuring that $u \in \galg$. But if
this is the case, we can also solve the \emph{inhomogeneous
  differential equation} $u' - u = f$ whose general solution is $K e^x + e^x \cum e^{-x} f$, with $\cum = \cum_0^x$ as usual. Of course we can also incorporate the initial condition $u(0) = 0$, which leads
to $u = e^x \cum e^{-x} f$.

This observation generalizes: Whenever we can solve the
homogeneous differential equation within $\galg$, we can also solve
the initial value problem for the corresponding inhomogeneous problem.
The classical tool for achieving this explicitly is the
\emph{variation-of-constants
  formula}~\cite[p.~74]{CoddingtonLevinson1955}, whose abstract
formulation is given in Theorem~\ref{thm:ivp} below.

As usual~\cite{PutSinger2003}, we will write $\gdiffop$ for the
ring of differential operators with coefficients in $\galg$, see also Sect.~\ref{sec:intdiffop}. Let
\[
 T=\der^n  + c_{n-1} \der^{n-1} + \cdots + c_0
\]
be a monic (i.e. having leading coefficient~$1$) differential operator in $\gdiffop$ of degree $n$.
Then we call $u_1, \ldots, u_n \in \galg$ a \emph{fundamental system} for $T$ if it is a $K$-basis
for $\Ker{T}$, so it yields the right number of solutions for the
homogeneous differential equation $Tu=0$. A fundamental system will
be called \emph{regular} if its associated Wronskian matrix
\begin{equation*}
  W(u_1,\ldots,u_n) =
  \begin{pmatrix}
    u_1 & \cdots & u_n\\
    u_1' & \cdots & u_n'\\
    \vdots & \ddots & \vdots\\
    u_1^{(n-1)} & \cdots & u_n^{(n-1)}
  \end{pmatrix}
\end{equation*}
is invertible in $\galg^{n \times n}$ or equivalently if its Wronskian $\det W(u_1,\ldots,u_n)$
is invertible in $\galg$. Of course this alone implies already that $u_1, \ldots, u_n$ are linearly independent.

\begin{definition}
  \label{def:reg-diffop}
  A monic differential operator $T \in \gdiffop$ is called \emph{regular} if it has a regular fundamental system.
\end{definition}

For such differential operators, variation of constants goes
through---the canonical initial value problem can be solved uniquely.
This means in particular that regular differential operators are
always \emph{surjective}.

\begin{theorem}
  \label{thm:ivp}
  Let $(\galg, \der, \cum)$ be an ordinary integro-differential
  algebra. Given a regular differential operator $T \in \gdiffop$ with $\deg{T}=n$ and a
  regular fundamental system $u_1, \ldots, u_n \in \galg$, the canonical
  initial value problem
  \begin{equation}
    \label{eq:ivp}
    \bvp{Tu=f}{\evl u = \evl u' = \cdots = \evl u^{(n-1)} = 0}
  \end{equation}
  has the unique solution
  \begin{equation}
    \label{eq:inhom}
    u = \sum_{i=1}^n u_i \cum d^{-1}\,d_i f
  \end{equation}
  for every $f \in \galg$, where $d=\det W(u_1,\ldots,u_n)$, and $d_i$ is
  the determinant of the matrix $W_i$ obtained from $W$ by replacing
  the $i$-th column by the $n$-th unit vector.
\end{theorem}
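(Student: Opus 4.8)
The plan is to verify directly that the element $u$ of~\eqref{eq:inhom} solves the initial value problem~\eqref{eq:ivp}, and then to establish uniqueness using that, in the ordinary case, $\evl$ is a $K$-algebra homomorphism with $\const = K$ (Corollary~\ref{cor:strong-weak-baxter}). Throughout I abbreviate $g_i = d^{-1} d_i f$, so that the candidate solution reads $u = \sum_{i=1}^n u_i \cum g_i$, and I first record the purely linear-algebraic identity that drives everything. Since $d_i = \det W_i$ is, by Laplace expansion along its $i$-th column (which is the $n$-th unit vector), exactly the $(n,i)$-cofactor of $W$, the cofactor-expansion (Cramer) identity gives
\[
  \sum_{i=1}^n u_i^{(k)} \, d_i = \delta_{k,\,n-1} \, d \qquad (0 \le k \le n-1),
\]
the off-diagonal cases vanishing because they expand a matrix with two equal rows.

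First I would compute the derivatives of $u$ up to order $n-1$ and show by induction that $u^{(k)} = \sum_{i=1}^n u_i^{(k)} \cum g_i$ for $0 \le k \le n-1$. Differentiating one step with the Leibniz rule~\eqref{eq:leibniz} and the section axiom~\eqref{eq:section-axiom} produces the extra term $\sum_i u_i^{(k)} g_i = d^{-1} f \sum_i u_i^{(k)} d_i$, which by the displayed identity vanishes for $k \le n-2$. Carrying the induction one step past the threshold then yields $u^{(n)} = \sum_i u_i^{(n)} \cum g_i + f$, the surviving $+f$ coming from the case $k = n-1$, where the identity returns $d$ and cancels $d^{-1}$. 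Applying $T = \der^n + c_{n-1}\der^{n-1} + \cdots + c_0$ and collecting terms gives $Tu = f + \sum_i (Tu_i) \cum g_i = f$, since every $u_i$ lies in $\Ker{T}$.

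Next I would read off the initial conditions from the same formulas: applying $\evl$ to $u^{(k)} = \sum_i u_i^{(k)} \cum g_i$ and using multiplicativity~\eqref{eq:mult} together with $\cum g_i \in \Img{\cum} = \init = \Ker{\evl}$, every summand collapses, so $\evl u^{(k)} = 0$ for $0 \le k \le n-1$. For uniqueness, the difference $w$ of two solutions satisfies $Tw = 0$, hence $w = \sum_i a_i u_i$ with $a_i \in K$ because $u_1, \ldots, u_n$ is a $K$-basis of $\Ker{T}$; the homogeneous conditions $\evl w^{(k)} = 0$ then become the linear system $(\evl W)\,a = 0$ over $K$. Since $\evl$ is a character and $d = \det W$ is invertible, applying $\evl$ to $d\,d^{-1} = 1$ shows $\det(\evl W) = \evl(d)$ is invertible in $K$, so $\evl W$ is nonsingular and $a = 0$.

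I expect the main obstacle to be the verification of $Tu = f$, where the analytic ingredients (Leibniz rule and section axiom, which run the derivative recursion) must be meshed with the linear-algebra ingredient (the cofactor identity, which makes every intermediate boundary term cancel and isolates the single inhomogeneous contribution $f$); keeping the index bookkeeping and the precedence convention $\cum g = \cum(g)$ straight is the delicate part. The uniqueness step is short but hinges essentially on ordinariness, namely that $\evl$ is an algebra homomorphism, so that it commutes with $\det$ and preserves invertibility.
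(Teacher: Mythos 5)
Your proof is correct, and the existence half takes a genuinely different route from the paper's. The paper rewrites $Tu=f$ as a first-order system with companion matrix $A$, sets $\hat{u} = W\cum W^{-1}\hat{f}$ in matrix form, verifies $\hat{u}' = A\hat{u}+\hat{f}$ and $\evl(\hat{u})=0$, and only at the end invokes Cramer's rule to unpack $W^{-1}\hat{f}$ into the scalar formula~\eqref{eq:inhom}. You instead verify~\eqref{eq:inhom} directly by inducting on the order of differentiation, with the alien-cofactor identity $\sum_i u_i^{(k)} d_i = \delta_{k,n-1}\,d$ doing the work that the companion-matrix structure does in the paper: it kills the boundary terms produced by the Leibniz rule for $k \le n-2$ and isolates the single $+f$ at order $n$. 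The two arguments rest on the same linear algebra (cofactor expansion is exactly why $W^{-1}\hat{f} = d^{-1}(d_1 f,\ldots,d_n f)^{\top}$ and why $AW = W'$ holds), but your version is more elementary and self-contained --- it never leaves the scalar setting and makes explicit where each cancellation comes from --- while the paper's version is shorter and transfers verbatim to first-order systems. One small point worth making explicit in your write-up is the base case $n=1$, where $k=0$ already equals $n-1$ and the ``extra term'' survives immediately rather than vanishing; your induction handles it, but the phrasing ``vanishes for $k \le n-2$'' silently presupposes $n \ge 2$. Your uniqueness argument is identical to the paper's: reduce to $\evl(Wc)=0$, use linearity and multiplicativity of $\evl$ to get $\det(\evl W) = \evl(d)$, and conclude from invertibility of $d$ in $\galg$ that $\evl W$ is regular over the field $K$.
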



\begin{rrproof}
  We can use the usual technique of reformulating $Tu = f$ as a system
  of linear first-order differential equations with companion
  matrix~$A \in \galg^{n \times n}$. We extend the action of the
  operators $\cum, \der, \evl$ componentwise to $\galg^n$. Setting now
  \begin{equation*}
    \hat{u} = W \!\cum W^{-1}\! \: \hat{f}
  \end{equation*}
  with $\hat{f} = \trp{(0, \dots, 0, f)} \in \galg^n$, we check that
  $\hat{u} \in \galg^n$ is a solution of the first-order system
  $\hat{u}' = A \hat{u} + \hat{f}$ with initial condition
  $\evl(\hat{u}) = 0$. Indeed we have $\hat{u}' = W' \cum W^{-1}
  \hat{f} + W W^{-1} \hat{f}$ by the Leibniz rule and $AW = W'$ since
  $u_1, \ldots, u_n$ are solutions of $Tu = 0$; so the differential
  system is verified. For checking the initial condition, note that
  $\evl \cum W^{-1} \hat{f}$ is already the zero vector, so we have
  also $\evl(\hat{u}) = 0$ since $\evl$ is multiplicative.

  Writing $u$ for the first component of $\hat{u}$, we obtain a
  solution of the initial value problem~\eqref{eq:ivp}, due to the
  construction of the companion matrix. Let us now compute $\hat{g} =
  W^{-1} \hat{f}$. Obviously $\hat{g}$ is the solution of the linear
  equation system $W \hat{g} = \hat{f}$. Hence Cramer's rule, which is
  also applicable for matrices over rings~\cite[p.~513]{Lang2002},
  yields $\hat{g}_i$ as $d^{-1} d_i f$ and hence
  \begin{equation*}
    u = (W \! \cum \hat{g})_1 = \sum_{i=1}^n u_i \cum d^{-1}\,d_i f
  \end{equation*}
  since the first row of $W$ is $(u_1,\ldots,u_n)$.

  For proving uniqueness, it suffices to show that the homogeneous
  initial value problem only has the trivial solution. So assume $u$
  solves~\eqref{eq:ivp} with $f=0$ and choose coefficients $c_1,
  \ldots, c_n \in K$ such that
  \begin{equation*}
    u = c_1 u_1 + \cdots + c_n u_n.
  \end{equation*}
  Then the initial conditions yield $\evl (Wc) = 0$ with $c =
  \trp{(c_1, \dots, c_n)} \in K^n$. But we have also $\evl (Wc) =
  (\evl W) c$ because $\evl$ is linear, and $\det{\evl W} =
  \evl(\det{W})$ because it is moreover multiplicative. Since $\det{W}
  \in \galg$ is invertible, $\evl W \in K^{n \times n}$ is regular,
  so $c = {(\evl W)}^{-1} 0 = 0$ and $u = 0$.
\end{rrproof}


\section{Integro-Differential Operators}
\label{sec:intdiffop}

With integro-differential algebras, we have algebraized the \emph{functions} to be used in differential equations and boundary problems, but we must also algebraize the \emph{operators} inherent in both---the differential operators on the left-hand side of the former, and the integral operators constituting the solution of the latter. As the name suggests, the integro-differential operators provide a data structure that contains both of these operator species. In addition, it has as a third species the boundary operators needed for describing (global as well as local) boundary conditions of any given boundary problem for a LODE.

\subsection{Definition.}

The basic idea is similar to the construction of the algebra of \emph{differential operators} $\gdiffop$ for a given differential algebra $(\galg, \der)$. But we are now starting from an ordinary integro-differential algebra $(\galg, \der, \cum)$, and the resulting algebra of integro-differential operators will accordingly be denoted by $\intdiffop$. Recall that $\gdiffop$ can be seen as the quotient of the free algebra generated by $\der$ and $f \in \galg$, modulo the ideal generated by the Leibniz rule $\der f = f \der + f'$. For $\intdiffop$, we do the same but with more generators and more relations. In the following, all integro-differential algebras are assumed to be ordinary.

Apart from~$\cum$, we will also allow a collection of ``point evaluations'' as new generators since they are needed for the specification of boundary problems. For example, the local boundary condition $u(1) = 0$ on a function $u \in \galg = C^\infty[0,1]$ gives rise to the functional $\evl_1 \in \galg^*$ defined by $u \mapsto u(1)$. As one sees immediately, $\evl_1$ is a \emph{character} on $\galg$, meaning $\evl_1(uv) = \evl_1(u) \, \evl_1(v)$ for all $u,v \in \galg$. This observation is the key for algebraizing ``point evaluations'' to an arbitrary integro-differential algebra where one cannot evaluate elements as in $C^\infty[0,1]$. We will see later how the characters serve as the basic building blocks for general local conditions like $3u(\pi)-2u(0)$ or global ones like $\cum_0^1 \xi u(\xi) \, d\xi$. Recall that we write $\multfunc$ for the set of all characters on integro-differential algebra $\galg$. In Sect.~\ref{sec:intdiffalg} we have seen that every  integro-differential algebra $(\galg, \der, \cum)$ contains at least one character, namely the evaluation $\evl = 1 - \cum\der$ associated with the integral. Depending on the application, one may add other characters.

\begin{definition}
  \label{def:intdiffop}
	  Let $(\galg, \der, \cum)$ be an ordinary integro-differential
	  algebra over a field~$K$ and $\Phi \subseteq \multfunc$. The
	  \emph{integro-differential operators} $\intdiffopchar{\Phi}$ are
	  defined as the free $K$-algebra generated by $\der$, and
	  $\cum$, the ``functions'' $f \in \galg$, and the characters
	  $\phi \in \Phi \cup \{ \evl \}$, modulo the rewrite rules in
	  Table~\ref{fig:red-rules}. If $\Phi$ is understood, we write
	  $\intdiffop$.
\end{definition}

\vskip-1.25em
\begin{table}[h]
  \newcommand{\ra}{\rightarrow}
  \centering
  \renewcommand{\baselinestretch}{1.3}
  \small
  \begin{tabular}[h]{|@{\quad}lcl@{\quad}|@{\quad}lcl@{\quad}|@{\quad}lcl@{\quad}|}
    \hline
    $f g$ & $\ra$ & $f \act g$ &
      $\der f$ & $\ra$ & $f \der + \der \act f$ &
      $\cum f \cum$ & $\ra$ & $(\cum \act f) \, \cum - \cum \, (\cum
        \act f)$\\
    $\phi \psi$ & $\ra$ & $\psi$ &
      $\der \phi$ & $\ra$ & $0$ &
      $\cum f \der$ & $\ra$ & $f - \cum (\der \act f) - (\evl \act f)
         \, \evl$\\
    $\phi f$ & $\ra$ & $(\phi \act f) \, \phi$ &
      $\der\cum$ & $\ra$ & $1$ &
      $\cum f \phi$ & $\ra$ & $(\cum \act f) \, \phi$\\
    \hline
  \end{tabular}
  \medskip
  \caption{Rewrite Rules for Integro-Differential Operators}
  \label{fig:red-rules}
\end{table}

The notation $U \cdot f$, used in the right-hand side of some of the
rules above, refers to the \emph{action} of $U \in \free$ on a
function $f \in \galg$; in particular, $f \cdot g$ denotes the product
of two functions $f, g \in \galg$. It is an easy matter to check that
the rewrite rules of Table~\ref{fig:red-rules} are fulfilled in
$(\galg, \der, \cum)$, so we may regard $\act$ as an action of
$\intdiffop$ on $\galg$. Thus every element $T \in \intdiffop$ acts as
a map $T\colon \galg \rightarrow \galg$.

We have given the relations as a \emph{rewrite system}, but their
algebraic meaning is also clear: If in the free algebra $\free$ of
Definition~\ref{def:intdiffop} we form the two-sided ideal $\green$
generated by the left-hand side minus right-hand side for each rule,
then $\intdiffopchar{\Phi} = \free/\green$. Note that there are
infinitely many such rules since each choice of $f,g \in \galg$ and
$\phi, \psi \in \Phi$ yields a different instance (there may be just
finitely many characters in $\Phi$ but the coefficient algebra $\galg$
is always infinite), so $\green$ is an infinitely generated ideal (it
was called the ``Green's ideal'' in~\cite{Rosenkranz2005} in a
slightly more special setting). Note that one gets back the rewrite
system of Table~\ref{fig:red-rules} if one uses the implied set of
generators and a suitable ordering (see Sect.~\ref{sec:rwtogb}).

The reason for specifying~$\green$ via a rewrite system is of course
that we may use it for generating a canonical simplifier
for~$\intdiffop$. This can be seen either from the term rewriting or
from the \grb\ basis perspective: In the former case, we see
Table~\ref{fig:red-rules} as a confluent and terminating rewrite
system (modulo the ring axioms); in the latter case, as a
\emph{noncommutative \grb\ basis with noetherian reduction} (its
elements are of course the left-hand side minus right-hand side for
each rule). While we cannot give a detailed account of these issues
here, we will briefly outline the \grb\ basis setting since our new
proof in Sect.~\ref{sec:rwtogb} will rely on it.

\subsection{Noncommutative \grb\ Bases.}

As detailed in Section~\ref{sec:theorema}, it is necessary for our application to deal  with  \emph{infinitely generated ideals} and an \emph{arbitrary set of indeterminates}. The following description of such a noncommutative \grb\ basis setting is based on the somewhat dated but still highly readable Bergman paper~\cite{Bergman1978}; for a summary
see~\cite[\S3.3]{Bueso}. For other approaches we refer the reader to
\cite{Mora1986,Mora1994,Ufnarovski1998,Ufnarovskij1995}.

Let us first recall some notions for abstract reduction relations~\cite{BaaderNipkow1998}. We consider a relation~$\mathord{\steprel} \subseteq A \times A$ for a set~$A$; typically~$\mathord{\steprel}$ realizes a single step in a simplification process like the transformation of integro-differential operators according to Table~\ref{fig:red-rules}. The transitive closure of~$\mathord{\steprel}$ is denoted by~$\mathord{\sredrel}$, its reflexive-transitive closure by~$\mathord{\redrel}$. We call $a \in A$ \emph{irreducible} if there is no $a_0 \in A$ with $a \steprel a_0$; we write $A_\downarrow$ for the set of all irreducible elements. If $a \redrel a_0$ with $a_0 \in A_\downarrow$, we call $a_0$ a \emph{normal form} of $a$, denoted by $\nf{a} = a_0$ in case it is unique.

If all elements are to have a unique normal form, we have to impose two conditions: termination for banning infinite reductions and confluence reuniting forks. More precisely, $\mathord{\steprel}$ is called \emph{terminating} if there are no infinite chains $a_1 \steprel a_2 \steprel \ldots$ and \emph{confluent} if for all $a, a_1, a_2 \in A$ the fork $a_1 \credrel a \redrel a_2$ finds a reunion $a_1 \redrel a_0 \credrel a_2$ for some $a_0 \in A$. If $\mathord{\steprel}$ is both terminating and confluent, it is called \emph{convergent}.

Turning to noncommutative Gr\"obner bases theory, we focus on reduction relations on the free algebra $\freealg$ over a commutative ring $K$ in an arbitrary set of indeterminates $X$; the corresponding monomials form the free monoid $\freemon$. Then a \emph{reduction system} for $\freealg$ is a set $\Sigma \subseteq \freemon \times \freealg$ whose elements are called rules. For a rule $\sigma = (W, f)$ and monomials $A,B \in \freemon$, the $K$-module endomorphism of $\freealg$ that fixes all elements of $\freemon$ except sending $AW\!B$ to $AfB$ is denoted by $\red{A}{\sigma}{B}$ and called a reduction. It is said to act trivially on $a \in \freealg$ if the coefficient of $AW\!B$ in $a$ is zero.

Every reduction system $\Sigma$ induces the relation $\mathord{\steprel}_\Sigma \subseteq \freealg \times \freealg$ defined by setting~$a \steprel_\Sigma b$ if and only if~$r(a) = b$ for some reduction acting nontrivially on~$a$. We call its reflexive-transitive closure $\redrel_\Sigma$ the \emph{reduction relation} induced by $\Sigma$, and we say that $a$ reduces to $b$ when $a \redrel_\Sigma b$. Accordingly we can speak of irreducible elements, normal forms, termination and confluence of $\Sigma$.

For ensuring termination, one can impose a noetherian \emph{monoid
  ordering} on $\freemon$, meaning a partial ordering such that $1 <
A$ for all $A \in \freemon$ and such that $B < B'$ implies $ABC <
AB'C$ for $A, B, B', C \in \freemon$. Recall that for partial
(i.e. not necessarily total) orderings, noetherianity means that there
are no infinite descending chains or equivalently that every nonempty
set has a minimal element~\cite[p.~156]{BeckerWeispfenning1993}. Note
that in a noetherian monoid ordering (like the divisibility relation
on natural numbers), elements are not always comparable.

Now if one has a noetherian monoid ordering on $\freemon$, then
$\Sigma$ will be \emph{terminating} provided it respects $<$ in the
sense that $W' < W$ for every rule $(W, f) \in \Sigma$ and every
nonzero monomial $W'$ of $f$. (Let us also remark that the condition
$1 < A$ from above might as well be dropped, as in~\cite{Bergman1978}:
The given rewrite system cannot contain a rule $1 \rightarrow f$ since
then $W < 1$ for at least one nonzero monomial~$W$ of~$f$, so $1 > W >
WW > \cdots$ would yield an infinite descending chain. Such rules
precluded, it is not stringent that constants in~$K$ be comparable
with the elements in~$X$. But since it is nevertheless very natural
and not at all restrictive, we stick to the monoid orderings as given
above.)

It is typically more difficult to ensure confluence of a reduction system $\Sigma$. According to the definition, we would have to investigate all forks $a_1 \credrel a \redrel a_2$, which are usually infinite in number. The key idea for a practically useful criterion is to consider only certain \emph{minimal forks} (called ambiguities below, following Bergman's terminology) and see whether their difference eventually becomes zero. This was first described by Buchberger in~\cite{Buchberger1965} for the commutative case; see also~\cite{Buchberger1970,Buchberger1998}. The general intuition behind minimal forks is analyzed in~\cite{Buchberger1987}, where \grb\ bases are compared with Knuth-Bendix completion and Robinson's resolution principle.

An \emph{overlap ambiguity} of $\Sigma$ is given by a quintuple $(\sigma, \tau, A, B, C)$ with $\Sigma$-rules $\sigma = (W, f)$, $\tau = (V, g)$ and monomials $A, B, C \in \freemon\setminus\{1\}$ such that $W = AB$ and $V = BC$. Its associated S-polynomial is defined as $fC-Ag$, and the ambiguity is called resolvable if the S-polynomial reduces to zero. (In general one may also have so-called inclusion ambiguities, but it turns out that one can always remove them without changing the resulting normal forms~\cite[\S5.1]{Bergman1978}. Since the reduction system of Table~\ref{fig:red-rules} does not have inclusion ambiguities, we will not discuss them here.)

For making the connection to ideal theory, we observe that every reduction system $\Sigma$ gives rise to a two-sided \emph{ideal} $I_\Sigma$ generated by all elements $W-f$ for $(W,f) \in \Sigma$; we have already seen this connection for the special case of the integro-differential operators. Note that $a \redrel_\Sigma 0$ is equivalent to $a \in I_\Sigma$.

In the given setting, the task of proving convergence can then be attacked by the so-called \emph{Diamond Lemma for Ring Theory}, presented as Theorem~1.2 in Bergman's homonymous paper~\cite{Bergman1978}; see also Theorem~3.21 in~\cite{Bueso}. It is the noncommutative analog of Buchberger's criterion~\cite{Buchberger1970} for infinitely generated ideals. (In the version given below, we have omitted a fourth equivalent condition that is irrelevant for our present purposes.)

\begin{theorem}
  \label{thm:diamond-lemma}
  Let $\Sigma$ be a reduction system for $\freealg$ and $\le$ a noetherian monoid ordering that respects $\Sigma$. Then the
  following conditions are equivalent:
  \begin{itemize}
  \item All ambiguities of $\Sigma$ are resolvable.
  \item The reduction relation $\redrel_\Sigma$ is convergent.
  \item We have the direct decomposition $\freealg = \irrmod \dirs
    I_\Sigma$ as $K$-modules.
  \end{itemize}
  When these conditions hold, the quotient algebra $\freealg/I_\Sigma$ may be identified with the $K$-module $\irrmod$, having the multiplication $a \cdot b = \nf{ab}$.
\end{theorem}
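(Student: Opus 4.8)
The plan is to prove the three-way equivalence of the Diamond Lemma together with the final identification of the quotient algebra. This is a classical result of Bergman, so my strategy will follow the standard route: the noetherian hypothesis is doing the heavy lifting, and I would organize the argument around reducing everything to a local property (resolvability of ambiguities) via a noetherian induction. First I would fix the noetherian monoid ordering $\le$ that respects $\Sigma$ and note, as already established in the excerpt, that this guarantees \emph{termination} of $\redrel_\Sigma$. Thus the only content in ``convergent'' is confluence, and by Newman's Lemma (local confluence plus termination implies confluence) it suffices to establish \emph{local} confluence. This is the reduction that makes the whole theorem tractable.

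For the implication that resolvability of all ambiguities yields convergence, the key step is to show that every minimal fork can be resolved, and that the only minimal forks arise from overlap ambiguities $(\sigma,\tau,A,B,C)$ with $W=AB$, $V=BC$ (inclusion ambiguities having been excluded by hypothesis on the system, as noted in the excerpt). I would carry out a noetherian induction on the monomial $AB C = WC = AV$ with respect to $\le$: given two reductions applied to an element, I consider the relative position of the two rewritten submonomials. If they are disjoint or nested-without-overlap, the two reductions commute and the fork closes trivially; if they genuinely overlap, the difference of the two one-step results is exactly $\pm(fC - Ag)$, the associated S-polynomial, which reduces to zero by the resolvability assumption. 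Combining this base analysis with the inductive hypothesis on all strictly smaller monomials (available since $\le$ is noetherian and $\Sigma$ respects it) yields local confluence, and Newman's Lemma upgrades this to full convergence.

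The equivalence between convergence and the direct decomposition $\freealg = \irrmod \dirs I_\Sigma$ I would handle as follows. The subspace $\irrmod$ of irreducible elements and the ideal $I_\Sigma$ always together span $\freealg$, because termination lets us rewrite any $a$ to a normal form in $\irrmod$ while the difference $a - \nf{a}$ lies in $I_\Sigma$ (each single reduction step changes an element by an element of $I_\Sigma$, since $AWB - AfB = A(W-f)B \in I_\Sigma$). So the content of the decomposition being \emph{direct} is precisely that $\irrmod \cap I_\Sigma = 0$. For the forward direction, convergence gives unique normal forms, so the map $a \mapsto \nf{a}$ is a well-defined $K$-linear retraction onto $\irrmod$ with kernel $I_\Sigma$, forcing the intersection to vanish. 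For the converse, if the sum is direct then the normal form of any element is forced to be unique---two normal forms of the same element differ by something in both $\irrmod$ and $I_\Sigma$, hence by zero---which is exactly confluence in the presence of termination.

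Finally, for the identification of $\freealg/I_\Sigma$ with $\irrmod$ and the multiplication $a \cdot b = \nf{ab}$: the direct decomposition makes the composite $\irrmod \hookrightarrow \freealg \twoheadrightarrow \freealg/I_\Sigma$ a $K$-module isomorphism, and transporting the quotient ring structure across it gives precisely $a \cdot b = \nf{ab}$, since $ab$ and its normal form represent the same class modulo $I_\Sigma$. I expect the \textbf{main obstacle} to be the careful bookkeeping in the overlap analysis of the local-confluence step: one must verify that a generic fork involving two reductions acting at two positions in a common monomial decomposes cleanly into the disjoint, nested, and properly-overlapping cases, and that in the overlapping case the discrepancy is genuinely the S-polynomial up to the embedding $\red{A}{\,\cdot\,}{C}$. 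The remaining module-theoretic equivalences are essentially formal once termination and the single-step relation $AWB - AfB \in I_\Sigma$ are in hand.
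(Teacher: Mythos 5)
First, a remark on the comparison you ask for: the paper does not prove this statement at all---it is quoted as Theorem~1.2 of Bergman's paper and as Theorem~3.21 in the book of Bueso et al.---so the only meaningful benchmark is Bergman's own argument. Your overall architecture (termination from the noetherian ordering, reduction of confluence to a local analysis, the formal equivalence with the direct decomposition, and the transport of the ring structure to $\irrmod$) has the right shape, and your last two paragraphs are essentially correct modulo the point flagged below.

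The genuine gap is in the local-confluence step, and you have in fact misidentified the main obstacle. You analyze forks given by two reductions acting ``at two positions in a common monomial'' and split them into disjoint, nested and properly overlapping cases. But $\steprel_\Sigma$ is a relation on the free algebra $\freealg$, not on monomials: a fork at an element $a$ consists of two reductions $\red{A_1}{\sigma_1}{B_1}$ and $\red{A_2}{\sigma_2}{B_2}$ that may target two \emph{different monomials} of the linear combination $a$. In that case the reductions need not commute, because replacing $A_1W_1B_1$ by $A_1f_1B_1$ can alter the coefficient of the monomial $A_2W_2B_2$ on which the second reduction acts (monomials of $A_1f_1B_1$ may coincide with $A_2W_2B_2$ or with other monomials of $a$), and a reduction by definition rewrites the \emph{total} coefficient of its target monomial. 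This interaction between rewriting and the linear structure is exactly what makes the Diamond Lemma for ring theory harder than the critical-pair lemma for term rewriting, and it is not covered by your case analysis, nor is it repaired by Newman's Lemma. Bergman's proof does not proceed via local confluence of $\steprel_\Sigma$; he introduces the set of \emph{reduction-unique} elements, proves that it is a $K$-submodule on which the normal-form map is $K$-linear (his Lemma~1.1), and then shows by noetherian induction on monomials that resolvability of all ambiguities forces every monomial---hence, by the submodule lemma, every element---to be reduction-unique. Some such device must be inserted before your critical-pair analysis yields convergence; the same lemma is also what justifies your later assertions that $a \mapsto \nf{a}$ is $K$-linear and that every element of $I_\Sigma$ reduces to zero. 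A smaller point: resolvability of inclusion ambiguities is needed in the general statement; the paper only observes that its particular reduction system has none, which is not a standing hypothesis of the theorem.
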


We will apply Theorem~\ref{thm:diamond-lemma} in Sect.~\ref{sec:rwtogb} for proving that Table~\ref{fig:red-rules} constitutes a \grb\ basis for the ideal $\green$. Hence we may conclude that $\intdiffop$ can be identified with the algebra $\free_\downarrow$ of \emph{normal forms}, and this is what gives us an algorithmic handle on the integro-differential operators. It is thus worth investigating these normal forms in some more detail.

\subsection{Normal Forms.}

We start by describing a set of \emph{generators}, which will subsequently be narrowed to the normal forms of $\intdiffopchar{\Phi}$. The key observation is that in any monomial we never need more than one integration while all the derivatives can be collected at its end.

\begin{lemma}
  \label{lem:pre_normal_form}
  Every integro-differential operator in $\intdiffopchar{\Phi}$ can be reduced to a linear combination of monomials $f \phi \cum g \psi \der^i$, where $i \ge 0$ and each of $f, \phi, \cum, g, \psi$ may also be absent.
\end{lemma}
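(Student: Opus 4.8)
The plan is to reduce a single monomial to the required shape and then extend by $K$-linearity, since every reduction in Table~\ref{fig:red-rules} is a $K$-module endomorphism of $\free$ and a general operator is a $K$-linear combination of monomials. Throughout I identify the constant function $1 \in \galg$ with the algebra unit, so that $\cum\cum$ may be read as $\cum 1\cum$ and $\cum\phi$ as $\cum 1\phi$; this lets the integral rules fire even when no explicit coefficient function is present. I also note that the target monomials $f\phi\cum g\psi\der^i$ form only a convenient over-complete generating set, not the irreducible forms, so I will reduce just enough to reach this shape and will not need the integration-by-parts rule $\cum f\der\to f - \cum(\der\act f) - (\evl\act f)\,\evl$.

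\emph{Collecting the derivations.} First I would push every $\der$ to the right end. Taking the leftmost $\der$ that still has a non-$\der$ generator to its right, the Leibniz rule $\der f\to f\der + \der\act f$ moves it past a function (spawning the $\der$-free summand $\der\act f$), while $\der\cum\to 1$ cancels it against an integral and $\der\phi\to 0$ annihilates the monomial against a character. Each step strictly decreases the noetherian measure given, monomial-wise, by the sum over all occurrences of $\der$ of the number of generators to their right; hence after finitely many steps every monomial has the form $W\der^i$ with $i\ge 0$ and $W$ a $\der$-free word in the functions, the characters and $\cum$. Crucially, none of the remaining rules reintroduces a $\der$ to the left of the trailing block $\der^i$.

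\emph{At most one integration.} This is the heart of the statement. I use the number of $\cum$-symbols of a monomial as a measure. While it is at least two, I pick two integrations with no $\cum$ in between; the intervening segment is $\der$-free and $\cum$-free, so by the normalization of the next paragraph it reduces to a single block $f\phi$. If the character is present I apply $\cum f\phi\to(\cum\act f)\,\phi$ to the left pair, and if it is absent I apply $\cum f\cum\to(\cum\act f)\,\cum - \cum\,(\cum\act f)$. In either case $\cum\act f\in\galg$ is a \emph{function} rather than an operator, so every resulting monomial carries strictly fewer $\cum$-symbols. Since this measure is a non-negative integer, iterating leaves each monomial with at most one integration.

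\emph{Normalizing the blocks and assembling.} A maximal $\der$-free, $\cum$-free segment is a word in functions and characters; merging adjacent functions by $fg\to f\act g$, merging adjacent characters by $\phi\psi\to\psi$ (which keeps only the rightmost), and absorbing any function that follows a character by $\phi f\to(\phi\act f)\,\phi$ (where $\phi\act f\in K$ is a scalar) each lowers the symbol count, so the segment collapses to a scalar multiple of $f\phi$ with $f$ and $\phi$ possibly absent. Applying this to the part left of the surviving integration and to the part to its right, and recalling the trailing $\der^i$, rewrites each monomial into the asserted shape $f\phi\cum g\psi\der^i$. The main obstacle is purely one of termination: the integral rule $\cum f\cum\to\cdots$ and the Leibniz rule both branch, and the block normalization has to be re-invoked between successive integral reductions, so one must combine the three measures above into a single well-founded order (for instance lexicographically by number of integrations, then number of derivations, then word length, comparing $K$-linear combinations by the induced multiset order). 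The decisive point that makes this succeed---and the content of the phrase ``never need more than one integration''---is exactly that every integral rule replaces a genuine operator $\cum$ by the function $\cum\act f$, so it strictly consumes an integration symbol.
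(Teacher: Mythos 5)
Your proof is correct and follows essentially the same route as the paper's: move all occurrences of $\der$ to the right with the middle column of Table~\ref{fig:red-rules}, collapse the algebraic (function/character) segments with the left column, and use the rules $\cum f\cum \to \cdots$ and $\cum f\phi \to (\cum\act f)\,\phi$ to cut the number of integrations down to one. Your explicit termination measures and the remark that $\cum\cum$ and $\cum\phi$ must be read as $\cum 1\cum$ and $\cum 1\phi$ are details the paper leaves implicit, but they do not change the argument.
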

\begin{rrproof}
  Call a monomial consisting only of functions and functionals ``algebraic''. Using the left column of Table~\ref{fig:red-rules}, it is immediately clear that all such monomials can be reduced to $f$ or $\phi$ or $f \phi$. Now let $w$ be an arbitrary monomial in the generators of $\intdiffopchar{\Phi}$. By using the middle column of Table~\ref{fig:red-rules}, we may assume that all occurrences of $\der$ are moved to the right, so that all monomials have the form $w = w_1 \cdots w_n \der^i$ with $i \ge 0$ and each of $w_1, \ldots, w_n$ either a function, a functional or $\cum$. We may further assume that there is at most one occurrence of $\cum$ among the $w_1, \ldots, w_n$. Otherwise the monomials $w_1 \cdots w_n$ contain $\cum \tilde{w} \cum$, where each $\tilde{w} = f \phi$ is an algebraic monomial. But then we can reduce
  \begin{displaymath}
    \cum \tilde{w} \cum = (\cum f \phi) \cum = (\cum \cdot f) \phi
    \cum
  \end{displaymath}
  by using the corresponding rule of Table~\ref{fig:red-rules}. Applying these rules repeatedly, we arrive at algebraic monomials left and right of $\cum$ (or just a single algebraic monomial if $\cum$ is absent).
\end{rrproof}

In \tma, the \emph{integro-differential operators} over an integro-differential algebra $\galg$ of coefficient functions are built up by $\mathtt{FreeIntDiffOp[\mathcal{F},K]}$. This functor constructs an instance of the monoid algebra with the word monoid over the infinite alphabet consisting of the letters $\der$ and $\cum$ along with a basis of $\galg$ and with all multiplicative characters induced by evaluations at points in $\mathtt{K}$:

\begin{mmacode}
\includegraphics[scale=0.75]{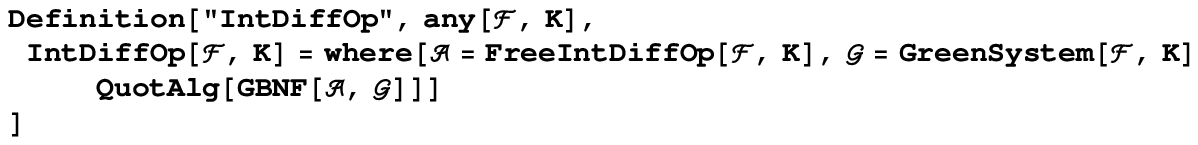}
\end{mmacode}

\noindent In this code fragment, the $\mathtt{GreenSystem}$ functor contains the encoding of the aforementioned rewrite system (Table~1), here understood as a noncommutative \grb\ basis. Normal forms for total reduction modulo infinite \grb\ bases are created by the $\mathtt{GBNF}$ functor, while the $\mathtt{QuotAlg}$ functor constructs the quotient algebra from the corresponding canonical simplifier (see Sect.~\ref{sec:theorema} for details). For instance, multiplying the integral operator $\cum$ by itself triggers an application of the Baxter rule:
\begin{mmacode}
\includegraphics[scale=0.75]{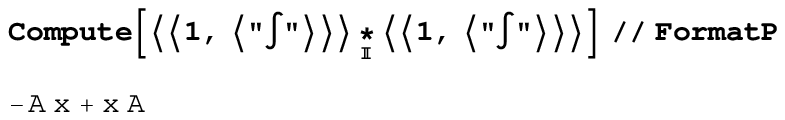}
\end{mmacode}
Here integral operators are denoted by $\mathtt{A}$, following the notation in the older implementation~\cite{Rosenkranz2005}.

We turn now to the normal forms of \emph{boundary conditions}. Since they are intended to induce mappings $\galg \rightarrow K$, it is natural to define them as those integro-differential operators that ``end'' in a character $\phi \in \Phi$. For example, if $\phi$ is the point evaluation $\evl_1$ considered before, the composition~$\evl_1 \der$ describes the local condition $u'(1) = 0$, the composition $\evl_1 \cum$ the global condition $\smash{\cum_0^1} u(\xi) \, d\xi = 0$. In general, boundary conditions may be arbitrary linear combinations of such composites; they are known as ``Stieltjes conditions'' in the literature~\cite{BrownKrall1974,BrownKrall1977}.

\begin{definition}
  \label{def:stieltjes}
  The elements of the right ideal
  \begin{displaymath}
    \scond = \Phi \cdot \intdiffopchar{\Phi}
  \end{displaymath}
  are called \emph{boundary conditions} over $\galg$.
\end{definition}

It turns out that their \emph{normal forms} are exactly the linear combinations of local and global conditions as in the example mentioned above. As a typical representative over $\galg = C^\infty(\R)$, one may think of an element like
\begin{equation*}
  \evl_0\der^2 + 3 \, \evl_\pi - 2 \, \evl_{2\pi} \cum \sin{x},
\end{equation*}
written as $u''(0) + 3 \, u(\pi) - 2 \cum_0^{2\pi} \sin{\xi} u(\xi) \, d\xi$ in traditional notation.

\begin{proposition}
  \label{prop:stieltjes}
  Every boundary condition of $\scond$ has the normal form
  \begin{equation}
    \label{eq:stieltjes}
    \sum_{\phi \in \Phi} \left( \sum_{i \in \N}
      a_{\phi,i} \, \phi \der^i + \phi \cum f_{\phi} \right),
  \end{equation}
 with only finitely many $a_{\phi,i} \in K$ and $f_{\phi} \in \galg$ nonzero.
\end{proposition}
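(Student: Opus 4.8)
The plan is to combine the pre-normal form of Lemma~\ref{lem:pre_normal_form} with the observation that every boundary condition carries a \emph{leading character}. Since reduction is $K$-linear and $\scond = \Phi \cdot \intdiffopchar{\Phi}$, it suffices to reduce a single generator $\phi_0\, m$ with $\phi_0 \in \Phi$ and $m = f\,\phi\,\cum g\,\psi\,\der^i$ one of the monomials of Lemma~\ref{lem:pre_normal_form} (each of $f,\phi,\cum,g,\psi$ possibly absent). First I would let the leading $\phi_0$ absorb everything up to the first $\cum$ or $\der$: the rules $\phi_0 f \rightarrow (\phi_0 \act f)\,\phi_0$ and $\phi_0\phi \rightarrow \phi$ collapse the initial block $\phi_0 f\phi$ to a scalar multiple of a single surviving character $\bar\phi$ (namely $\bar\phi = \phi$ if the slot $\phi$ is present and $\bar\phi = \phi_0$ otherwise). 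Hence each generator reduces to a scalar multiple of either $\bar\phi\,\der^i$ (when $\cum$ is absent) or $\bar\phi\,\cum g\,\psi\,\der^i$ (when $\cum$ is present).

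The case $\bar\phi\,\der^i$ is already irreducible and of the target shape $a_{\bar\phi,i}\,\bar\phi\,\der^i$. For $\bar\phi\,\cum g\,\psi\,\der^i$ I would split on what follows the block $\cum g$. If the character $\psi$ is present, then $\cum g\,\psi \rightarrow (\cum \act g)\,\psi$ pushes the integrated function $\cum g \in \galg$ to the left of $\psi$, where $\bar\phi$ swallows it through $\bar\phi\,(\cum \act g) \rightarrow (\bar\phi \act \cum g)\,\bar\phi$ and $\bar\phi\,\psi \rightarrow \psi$, leaving a scalar multiple of $\psi\,\der^i$. If $\psi$ is absent and $i = 0$, then $\bar\phi\,\cum g$ is already irreducible and supplies the term $\bar\phi\,\cum f_{\bar\phi}$ with $f_{\bar\phi}=g$.

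The genuinely delicate case is $\bar\phi\,\cum g\,\der^i$ with $i\ge 1$ and no intervening character; here the plan is an induction on $i$ driven by the integration-by-parts rule $\cum g\,\der \rightarrow g - \cum(\der\act g) - (\evl\act g)\,\evl$. Peeling off one $\der$ rewrites $\bar\phi\,\cum g\,\der^i$ as a combination of $\bar\phi\, g\,\der^{i-1}$, of $-\,\bar\phi\,\cum(\der\act g)\,\der^{i-1}$, and of $-(\evl\act g)\,\bar\phi\,\evl\,\der^{i-1}$. The first reduces via $\bar\phi\, g \rightarrow (\bar\phi\act g)\,\bar\phi$ to a scalar multiple of $\bar\phi\,\der^{i-1}$, and the last, since $(\evl\act g)\in K$ and $\bar\phi\,\evl \rightarrow \evl$, to a scalar multiple of $\evl\,\der^{i-1}$; both are of the target shape. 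The middle summand is a strictly lower-degree instance of the same pattern, so the induction terminates with the base case $i=0$ already handled above. This step is exactly where the distinguished character $\evl$ enters the normal form, so the outer sum is to be read over $\Phi\cup\{\evl\}$ (equivalently, one assumes $\evl\in\Phi$).

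Finally I would collect the contributions of all generators: each surviving $\phi\,\der^i$ yields a coefficient $a_{\phi,i}\in K$, and by $K$-linearity of $\cum$ the several terms $\phi\,\cum g_j$ attached to one character fuse into a single $\phi\,\cum f_\phi$ with $f_\phi = \sum_j g_j \in \galg$. Since we started from a finite $K$-linear combination of generators, all but finitely many $a_{\phi,i}$ and $f_\phi$ vanish, which is exactly~\eqref{eq:stieltjes}. The main obstacle is the induction of the third paragraph: one has to check that the integration-by-parts rule preserves the two admissible monomial types $\phi\,\der^i$ and $\phi\,\cum f$ and only ever spawns a strictly lower-degree instance, so that no new monomial shape can escape the claimed form.
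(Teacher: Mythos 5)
Your proposal is correct and follows essentially the same route as the paper's own proof: prefix a leading character to the monomials of Lemma~\ref{lem:pre_normal_form}, collapse the initial function/character block via the left column of Table~\ref{fig:red-rules}, absorb an interior $\psi$ through $\cum g\,\psi \rightarrow (\cum\act g)\,\psi$, and handle $\chi\,\cum g\,\der^i$ by induction on $i$ using the rule $\cum g\,\der \rightarrow g - \cum(\der\act g) - (\evl\act g)\,\evl$. Your explicit remarks on $\evl$ entering through that last rule and on fusing the terms $\phi\,\cum g_j$ by linearity of $\cum$ are details the paper leaves implicit, but the argument is the same.
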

\begin{rrproof}
  By Lemma~\ref{lem:pre_normal_form}, every boundary condition of
  $\scond$ is a linear combination of monomials having the form
  \begin{equation}
    \label{eq:mon_form}
    w = \chi f \phi \cum g \psi \der^i
    \quad\text{or}\quad
    w = \chi f \phi \der^i
  \end{equation}
  where each of $f, g, \phi, \psi$ may also be missing. Using the left column of Table~\ref{fig:red-rules}, the prefix $\chi f \phi$ can be reduced to a scalar multiple of a functional, so we may as well assume that $f$ and $\phi$ are not present; this finishes the right-hand case of~\eqref{eq:mon_form}. For the remaining case $w = \chi \cum g \psi \der^i$, assume first that $\psi$ is present. Then we have
  \begin{displaymath}
    \chi \, (\cum g \psi) = \chi \, (\cum \cdot g) \, \psi
    = (\chi \cum \cdot g) \, \chi \psi = (\chi \cum \cdot
    g) \, \psi,
  \end{displaymath}
  so $w$ is again a scalar multiple of $\psi \der^i$, and we are done. Finally, assume we have $w = \chi \cum g \der^i$. If $i = 0$, this is already a normal form. Otherwise we obtain
  \begin{displaymath}
    w = \chi \, (\cum g \der) \, \der^{i-1} = (\chi \cdot g) \,
    \chi \der^{i-1} - \chi \cum g' \der^{i-1} - (\evl \cdot g) \, \evl
    \der^{i-1},
  \end{displaymath}
  where the first and the last summand are in the required normal form, while the middle summand is to be reduced recursively, eventually leading to a middle term in normal form $\pm \chi \cum g' \der^0 = \pm \chi \cum g'$.
\end{rrproof}

Most expositions of boundary problems---both analytic and numeric ones---restrict their attention to local conditions, even more specifically to those with just two point evaluation (so-called two-point boundary problems). While this is doubtless the most important case, there are at least three reasons for considering \emph{Stieltjes boundary conditions} of the form~\eqref{eq:stieltjes}.
\begin{itemize}
	\item They arise in certain applications (e.g. heat radiated through a boundary) and in treating ill-posed problems by generalized Green's functions~\cite[p.~191]{Rosenkranz2005}.
	\item As we shall see (Sect.~\ref{sec:bp}), they are needed for factoring boundary problems.
	\item	Their algebraic description as a right ideal is very natural.
\end{itemize}
Hence we shall always mean all of $\scond$ when we speak of boundary conditions.

Let us now turn to the other two ingredients of integro-differential operators: We have already mentioned the \emph{differential operators} $\gdiffop$, but we can now see them as a subalgebra of $\intdiffopchar{\Phi}$. They have the usual normal forms since the Leibniz rule is part of the rewrite system. Analogously, we introduce the subalgebra of \emph{integral operators} generated by the functions and $\cum$. Using Lemma~\ref{lem:pre_normal_form}, it is clear that the normal forms of integral operators are $\galg$ itself and linear combinations of $f \cum g$, and the only rule applicable to them is $\cum f \cum \rightarrow \cdots$ in Table~\ref{fig:red-rules}. Since we have already included $\galg$ in $\gdiffop$, we introduce $\gintop$ as the $\galg$-bimodule generated by $\cum$ so that it contains only the monomials of the form $f \cum g$.

Finally, we must consider the two-sided ideal $\bndop$ of $\intdiffopchar{\Phi}$ generated by $\Phi$ whose elements are called \emph{boundary operators}. A more economical description of $\bndop$ is as the left $\galg$-submodule generated by $\scond$ because by Lemma~\ref{lem:pre_normal_form} any $w \chi \tilde{w}$ with $w, \tilde{w} \in \intdiffop$ can be reduced to $f \phi \cum g \psi \der^i \chi \tilde{w}$. Note that $\bndop$ includes all finite dimensional projectors $P$ along Stieltjes boundary conditions. Any such projector can be described in the following way: If $u_1, \dots, u_n \in \galg$ and $\beta_1, \dots, \beta_n \in \scond$ are biorthogonal (meaning $\beta_i(u_j) = \delta_{ij}$), then
\begin{equation}
  \label{eq:projector}
 P = \sum_{i=1}^n u_i \, \beta_i : \: \galg \rightarrow \galg
\end{equation}
is the projector onto $[u_1, \dots, u_n]$ whose kernel is the subspace of all $u \in \galg$ such that
 $\be(u)=0$ for all $ \be \in [\beta_1,\ldots,\beta_n]$. See for example~\cite[p.~71]{Kothe1969} or~\cite{RegensburgerRosenkranz2009} for details. Note that all elements of $\bndop$ have the normal form~\eqref{eq:projector}, except that the $(u_j)$ need not be biorthogonal to the $(\beta_i)$.

We can now characterize the normal forms of~$\intdiffopchar{\Phi}$ in a very straightforward and intuitive manner: Every monomial is either a \emph{differential operator} or an \emph{integral operator} or a \emph{boundary operator}. Hence every element of $\intdiffopchar{\Phi}$ can be written uniquely as a sum $T + G + B$, with a differential operator $T \in \gdiffop$, an integral operator $G \in \gintop$, and a boundary operator $B \in \bndop$.

\begin{proposition}
  \label{prop:intdiffop_normal}
  For an integro-differential algebra $\galg$ and characters $\Phi \subseteq \multfunc$, we have the direct decomposition $\intdiffopchar{\Phi} = \gdiffop \dirs \gintop \dirs \bndop$.
\end{proposition}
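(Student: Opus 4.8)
The plan is to read off the decomposition from an explicit description of the irreducible monomials of $\intdiffopchar{\Phi}$, using the module identification furnished by the Diamond Lemma (Theorem~\ref{thm:diamond-lemma}). Granting that Table~\ref{fig:red-rules} is a \grb\ basis for $\green$ (the confluence proof is deferred to Sect.~\ref{sec:rwtogb}), that theorem identifies $\intdiffopchar{\Phi}$ as a $K$-module with $\free_\downarrow$, whose $K$-basis is the set of irreducible monomials. It therefore suffices to partition this basis into three blocks spanning respectively $\gdiffop$, $\gintop$, and $\bndop$: a disjoint partition of a basis automatically yields an internal direct sum of the spans.

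First I would pin down the irreducible monomials, starting from the pre-normal forms $f \phi \cum g \psi \der^i$ of Lemma~\ref{lem:pre_normal_form}. The point is that such a pre-normal form is genuinely reducible unless it has one of four restricted shapes. Indeed, a character or a derivative trailing an integral can always be eliminated: the rule $\cum f \phi \to (\cum \act f)\,\phi$ together with $\phi f \to (\phi \act f)\,\phi$ and $\phi \psi \to \psi$ collapses $f \phi \cum g \psi \der^i$ to a scalar multiple of $f \psi \der^i$, while the rule $\cum f \der \to f - \cum(\der \act f) - (\evl \act f)\,\evl$ removes any $\der$ following an integral. Carrying out these reductions shows that the surviving irreducible monomials are exactly $f \der^i$, $f \cum g$, $f \phi \der^i$, and $f \phi \cum g$, with any of the displayed factors possibly absent.

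Next I would sort these four shapes into the three summands. Every monomial carrying a character --- the shapes $f \phi \der^i$ and $f \phi \cum g$ --- lies in the ideal $\bndop$ generated by the characters, and by Proposition~\ref{prop:stieltjes}, combined with the left $\galg$-module structure of $\bndop$, these are precisely the normal forms~\eqref{eq:stieltjes} of $\bndop$. Among the character-free monomials, those containing $\cum$ are the monomials $f \cum g$ spanning $\gintop$, and those without $\cum$ are the monomials $f \der^i$ spanning $\gdiffop$; both identifications were recorded in the preceding discussion of normal forms. Since a monomial cannot both contain and omit a character, nor (being character-free) both contain and omit $\cum$, these three families are pairwise disjoint subsets of the basis of $\free_\downarrow$. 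Their spans therefore form the internal direct sum $\intdiffopchar{\Phi} = \gdiffop \dirs \gintop \dirs \bndop$, so that every operator is uniquely a sum $T + G + B$ with $T \in \gdiffop$, $G \in \gintop$, $B \in \bndop$.

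The main obstacle is the first step: certifying exactly which monomials are irreducible. The pre-normal forms of Lemma~\ref{lem:pre_normal_form} are merely a spanning set and may reduce further, so one must check that no rule of Table~\ref{fig:red-rules} applies to any of the four listed shapes and that every other configuration reduces into them --- a finite but somewhat fiddly case analysis over the rule set. Once this classification is secured, directness is immediate from the linear independence of distinct basis monomials, and no further computation is needed.
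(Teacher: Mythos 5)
Your proposal is correct and follows essentially the same route as the paper: both arguments rest on Lemma~\ref{lem:pre_normal_form} and Proposition~\ref{prop:stieltjes} to classify the normal-form monomials into the shapes $f\der^i$, $f\cum g$, $f\phi\der^i$, $f\phi\cum g$ and then read off the three summands, with directness coming from the fact that these are pairwise distinct irreducible monomials. The paper merely phrases the spanning step as an explicit recursive reduction of $f\cum g\der^i$ rather than as a partition of the basis of $\free_\downarrow$, but the content is the same.
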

\begin{rrproof}
  Inspection of Table~\ref{fig:red-rules} confirms that all
  integro-differential operators having the described sum
  representation $T + G + P$ are indeed in normal form. Let us now
  prove that every integro-differential operator of
  $\intdiffopchar{\Phi}$ has such a representation. It is sufficient
  to consider its monomials $w$. If $w$ starts with a functional, we
  obtain a boundary condition by Proposition~\ref{prop:stieltjes}; so
  assume this is not the case. From Lemma~\ref{lem:pre_normal_form} we
  know that
  \begin{displaymath}
    w = f \phi \cum g \psi \der^i
    \quad\text{or}\quad
    w = f \phi \der^i,
  \end{displaymath}
  where each of $\phi, g, \psi$ may be absent. But $w \in \bndop$
  unless $\phi$ is absent, so we may actually assume
  \begin{displaymath}
    w = f \cum g \psi \der^i
    \quad\text{or}\quad
    w = f \der^i.
  \end{displaymath}
  The right-hand case yields $w \in \gdiffop$. If $\psi$ is present in
  the other case, we may reduce $\cum g \psi$ to $(\cum \cdot g) \,
  \psi$, and we obtain again $w \in \bndop$. Hence we are left with
  $w = f \cum g \der^i$, and we may assume $i > 0$ since otherwise we
  have $w \in \gintop$ immediately. But then we can reduce
  \begin{align*}
    w &= f \, (\cum g \der) \, \der^{i-1} = f \Big(g - \cum \, (\der
    \cdot g) - (\evl \cdot
    g) \, \evl \Big) \der^{i-1}\\
    &= (f g) \, \der^{i-1} - f \, \cum \, (\der \cdot g) \, \der^{i-1}
    - (\evl \cdot g) \, f \, \evl \, \der^{i-1},
  \end{align*}
  where the first term is obviously in $\gdiffop$ and the last one in
  $\bndop$. The middle term may be reduced recursively until the
  exponent of $\der$ has dropped to zero, leading to a term in
  $\gintop$.
\end{rrproof}

We can observe the direct decomposition $\intdiffopchar{\Phi} = \gdiffop \dirs \gintop \dirs \bndop$ in the following \emph{sample multiplication} of $\cum \der$ and $\der \der x e^x \cum$:
\begin{mmacode}
\includegraphics[scale=0.75]{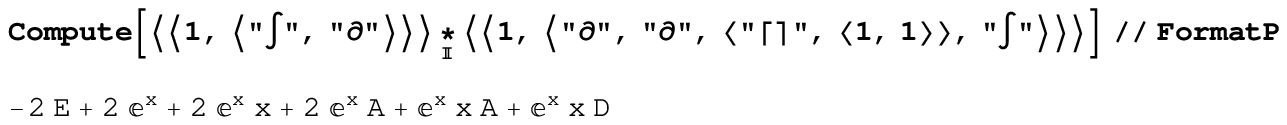}
\end{mmacode}
As in the previous computation, $\mathtt{A}$ stands for the integral $\cum$, moreover $\mathtt{D}$ for the derivation $\der$, and $\mathtt{E}$ for the evaluation. As we can see, the sum is composed of one differential operator (the last summand), two integral operators (in the middle), and three boundary operators (the first summands). Observe also that the input operators are not in normal form but the output operator is.

\subsection{Basis Expansion.}

Regarding the canonical forms for $\intdiffop$, there is one more issue that we have so far swept under the rug. The problem is that in the current setup elements like $x \der + 3x^2 \der$ and $(x + 3x^2) \der$ are considered distinct normal forms. More generally, if $f + g = h$ holds in $\galg$, there is no rule that allows us to recognize that $f + g \in \intdiffop$ and $h \in \intdiffop$ are the same. Analogously, if $\lambda \! \tilde{f} = \tilde{g}$ holds in $\galg$ with $\lambda\in K$, then $\lambda \! \tilde{f}$ and $\tilde{g}$ are still considered to be different in $\intdiffop$. A slightly less trivial example is when $f = (\cos{x})(\cos^2{x^2})$ and $g = -(\sin{x})(\sin{x^2})$ so that $h = \cos{(x+x^2)}$. What is needed in general is obviously a \emph{choice of basis} for $\galg$. But since such a choice is always to some degree arbitrary, we would like to postpone it as much as possible.

An unbiased way of introducing all $K$-linear relations in $\galg$ is simply to collect them in all in the two-sided ideal
\begin{equation*}
  \basis = ( f + g - h, \, \lambda \! \tilde{f} - \tilde{g} \mid
     f + g = h \text{ and } \lambda \! \tilde{f} = \tilde{g} \text{ in } \galg ) \subseteq \free,
\end{equation*}
which we shall call the \emph{linear ideal}. Since $\basis + \green$ corresponds to a unique ideal $\tilde{\basis}$ in $\intdiffop$, the necessary refinement of $\intdiffop$ can now be defined as
\begin{equation*}
\intdiffopbas = \intdiffop/\tilde{\basis} \cong \free/(\basis+\green)
\end{equation*}
whose elements shall be called \emph{expanded integro-differential operators}. Note that $\tilde{\basis}$ is really the ``same'' ideal as $\basis$ except that now $f,g,h,\tilde{f},\tilde{g} \in \intdiffop$. By the isomorphism above, coming from the Third Isomorphism Theorem~\cite[Thm 1.22]{Cohn2000}, we can think of $\intdiffopbas$ in two ways: Either we impose the linear relations on $\intdiffop$ or we merge them in with the Green's ideal---let us call these the a-posteriori and the combined approach, respectively.

For computational purposes, we need a \emph{ground simplifier} on the free algebra~\cite[p.~525]{Rosenkranz2005}, which we define here as a $K$-linear canonical simplifier for $\free/\basis$. Since all reduction rules of Table~\ref{fig:red-rules} are (bi)linear in $f,g \in \galg$, any ground simplifier descends to a canonical simplifier $\sigma$ on $\intdiffopbas$. In our implementations, $\sigma$ always operates by basis expansion (see below), but other strategies are conceivable. We can apply $\sigma$ either a-posteriori or combined:
 \begin{itemize}
 \item In the first case we apply $\sigma$ as a \emph{postprocessing step} after computing the normal forms with respect to Table~\ref{fig:red-rules}. We have chosen this approach in the upcoming \maple\ implementation~\cite{KorporalRegensburgerRosenkranz2010}.
\item In the \emph{combined approach}, $\sigma$ may be used at any point during a reduction along the rules of Table~\ref{fig:red-rules}. It may be more efficient, however, to use $\sigma$ on the rules themselves to create a new reduction system on $\free$; see below for an example. We have taken this approach in our earlier implementation~\cite{Rosenkranz2003,Rosenkranz2005} and in the current implementation.
\end{itemize}
Generally the first approach seems to be superior, at least when $\sigma$ tends to create large expressions that are not needed for the rewriting steps of Table~\ref{fig:red-rules}; this is what typically happens if the ground simplifier operates by basis expansion.

Assume now we choose a $K$-basis $(b_i)_{i \in I}$ of $\galg$. If $(\smash{\hat{b}}_i)_{i \in I}$ is the dual basis, we can describe the linear ideal more economically as
\begin{equation*}
\basis = \Big( f - \sum_{i \in I} \, \smash{\hat{b}}_i(f) \, b_i \mid f \in \galg \Big),
\end{equation*}
so the linear basis $(b_i)_{i \in I}$ gives rise to an ideal basis for $\basis$. Its generators $f - \sum_i \cdots$ can be oriented to create a ground simplifier $\sigma\colon f \mapsto \sum_i \cdots$ effecting \emph{basis expansion}.

If one applies now such a ground simplifier coming from a basis $(b_i)_{i \in I}$ in the \emph{combined approach}, one can restrict the generators of $\free$ to basis elements $b_i \in \galg$ rather than all $f \in \galg$, and the reduction rules can be adapted accordingly. For example, when $\galg$ contains the exponential polynomials, the Leibniz rule $\der f \rightarrow f \der + (\der \act f)$ gets instantiated for $f = x e^x$ as $\der (x e^x) \rightarrow (x e^x) \der + e^x + x$, where the right-hand side now has three instead of two monomials! This is why the choice of basis was built into the definition of the precursor of $\intdiffopbas$ as in~\cite{Rosenkranz2005}.

Before leaving this section on integro-differential operators, let us
mention some interesting \emph{related work} on these objects, carried
out from a more algebraic viewpoint. In his
papers~\cite{Bavula:2009b,Bavula2009a}, Bavula establishes an
impressive list of various (notably ring-theoretic) properties for
algebras of integro-differential operators. The setup considered in
these papers is, on the one hand, in many respects more general since it
deals with partial rather than ordinary differential operators but, on the other hand, the coefficients are restricted to polynomials.

Seen from the more applied viewpoint taken here, the most significant
difference is the lack of multiple point evaluations (and thus
boundary conditions). Apart from these obvious differences, there is
also a somewhat \emph{subtle difference} in the meaning of $\evl =
1-\cum\circ\der$ that we have tried to elucidate in a skew-polynomial
setting~\cite{RegensburgerRosenkranzMiddeke2009}. The upshot is that
while our approach views $\evl$ as a specific evaluation (the
prototypical example is given after Definition~\ref{def:ini-eval}), it
does not have a canonical action in V.~Bavula's setting (and neither
in our skew-polynomial approach). This is a subtle but far-reaching
difference that deserves future investigation.

\section{Applications for Boundary Problems}
\label{sec:bp}

In this section we combine the tools developed in the previous
sections to build an \emph{algorithm} for solving linear boundary problems
over an ordinary integro-differential algebra; see also~\cite{RosenkranzRegensburger2008a} for further details. We also outline a \emph{factorization method} along a given factorization of the defining differential operator applicable to boundary problems for both linear \emph{ordinary} and \emph{partial} differential equations; see~\cite{RegensburgerRosenkranz2009} in an abstract linear algebra setting and~\cite{RosenkranzRegensburgerTecBuchberger2009} for an overview.

For motivating our algebraic setting of boundary problems, let us consider
our standard example of an integro-differential algebra $(\galg,\der,\cum)$ with the integral operator
\[
\cum: f \mapsto \int_0^x f(\xi) \, d\xi
\]
for  $\galg= C^\infty[0,1]$. The \emph{simplest two-point boundary problem} reads then as follows:
Given $\f\in \galg$, find $u\in \galg$ such that
\begin{equation}
\label{eq:d2=0} \bvp{u''=f,}{u(0) = u(1) = 0.}
\end{equation}
In this and the subsequent examples, we let $D$ and $A$ denote respectively the derivation $\der$ and the integral operator $\cum$. Moreover, we denote by $L$ the corresponding evaluation $\evl$, which is the character
\[
L: f \mapsto f(0).
\]
To express boundary problems we need additionally the evaluation at the endpoint of the interval
\[
R: f \mapsto f(1).
\]
Note that $u$ is annihilated by any linear combination
of these functionals so that problem \eqref{eq:d2=0} can be
described by the pair $(D^2,[L,R])$, where $[L,R]$ is the subspace generated by
$L$, $R$ in the dual space $\Vd$ .

The solution algorithm presupposes a constructive fundamental system for the underlying
homogeneous equation but imposes no other conditions (in the
literature one often restricts to self-adjoint and/or second-order
boundary problems). This is always possible (relative to root computations) in the important special
case of LODEs with constant coefficient.

\subsection{The Solution Algorithm.}

In the following, we introduce the notion of \emph{boundary problem} in the context of
ordinary integro-differential algebras. Unless specified otherwise, all
integro-differential algebras in this section are assumed to be ordinary and over a fixed
field~$K$.

\begin{definition}
  \label{def:bvp}
  Let $(\galg, \der, \cum)$ be an ordinary integro-differential
  algebra. Then a \emph{boundary problem} of order $n$ is a pair $(T,\bspc)$, where $T \in \gdiffop$ is a regular differential operator  of order $n$ and $\bspc\subseteq \allscond$ is an $n$-dimensional subspace of boundary conditions.
\end{definition}

Thus a boundary problem is specified by a differential operator $T$
and a \emph{boundary space} $\bspc = [\beta_1, \ldots, \beta_n]$
generated by linearly independent boundary conditions $\beta_1, \ldots, \beta_n \in
\allscond$. In traditional notation, the boundary problem $(T, \bspc)$
is then given by
\begin{equation}
  \label{eq:bvp}
  \bvp{Tu=f,}{\beta_1 u = \cdots = \beta_n u = 0.}
\end{equation}
For a given boundary problem, we can restrict to a finite
subset $\Phi \subseteq \multfunc$, with the consequence that all
subsequent calculations can be carried out in $\intdiffopchar{\Phi}$
instead of $\intdiffop$. We will disregard this issue here for keeping
the notation simpler.

\begin{definition}
A boundary problem $(T, \bspc)$ is called \emph{regular} if for each $f \in \galg$ there exists a unique solution $u \in \galg$ in the sense of~\eqref{eq:bvp}.
\end{definition}

The condition requiring $T$ to have the same order as the dimension of
$\bspc$ in Definition~\ref{def:bvp} is only necessary but not
sufficient for ensuring regularity: the boundary conditions might
collapse on $\Ker{T}$. A simple example of such a \emph{singular
  boundary problem} is $(-D^2, [LD, RD])$ using the notation from
before; see also~\cite[p.~191]{Rosenkranz2005} for more details on
this particular boundary problem.

For an \emph{algorithmic} test of regularity, one may also apply the
usual regularity criterion for two-point boundary problems, as
described in~\cite{RegensburgerRosenkranz2009}. Taking any fundamental
system of solutions $u_1, \ldots, u_n$ for the homogeneous equation,
one can show that a boundary problem $(T, \bspc)$ is regular if and
only if the \emph{evaluation matrix}
\begin{displaymath}
  \beta(u) =
  \begin{pmatrix}
    \be_1(u_1) & \cdots & \be_1(u_n)\\
    \vdots & \ddots & \vdots\\
    \be_n(u_1) & \cdots & \be_n(u_n)
  \end{pmatrix} \in K^{n\times n}
\end{displaymath}
is regular.

For a regular boundary problem $(T, \bspc)$, we can now define its
\emph{Green's operator} $G$ as the linear operator mapping a given forcing
function~$f \in \galg$ to the unique solution $u \in \galg$
of~\eqref{eq:bvp}. It is characterized by the identities
\[
 TG = 1\quad\text{and}\quad \Img{G} = \orth{\bspc},
\]
where $\orth{\bspc}=\{u \in \galg \mid \be(u)=0\text{ for all } \be \in \B \}$ is the subspace of all ``functions'' satisfying the boundary conditions. We also write
\[
G = (T, \bspc)^{-1}
\]
for the Green's operator of $(T, \bspc)$.

The investigation of \emph{singular boundary problems}
(i.e. non-regular ones) is very enlightening but leads us too far
afield; we shall investigate it at another junction. Let us just
mention that it involves so-called modified Green's operators and
functions~\cite[p.~216]{Stakgold1979} and that is paves the way to an
interesting non-commutative analog of the classical Mikusi\'{n}ski
calculus~\cite{Mikusi'nski1959}.

We will now recast Theorem~\ref{thm:ivp} in the language of Green's
operators of initial value problems. Given a regular differential
operator $T$ of order $n$, the theorem implies that the initial value
problem $(T, [\evl, \evl \der, \ldots, \evl \der^{n-1}])$ is regular.
We call its Green's operator the \emph{fundamental right inverse} of
$T$ and denote it by $\fri{T}$.

\begin{corollary}
  \label{cor:fri}
  Let $(\galg, \der, \cum)$ be an ordinary integro-differential
  algebra and let $T \in \gdiffop$ be a regular differential operator of order $n$ with regular fundamental
  system $u_1, \ldots, u_n$. Then its fundamental right inverse is
  given by
  \begin{equation}
    \label{eq:fri}
    \fri{T} = \sum_{i=1}^n u_i \cum d^{-1} d_i \:\in\: \intdiffop,
  \end{equation}
  where $d, d_1, \ldots, d_n$ are as in Theorem~\ref{thm:ivp}.
\end{corollary}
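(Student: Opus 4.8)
My plan is to derive this directly from Theorem~\ref{thm:ivp}, exploiting that the fundamental right inverse $\fri{T}$ was \emph{just} introduced as the Green's operator of the canonical initial value problem $(T, [\evl, \evl\der, \ldots, \evl\der^{n-1}])$. First I would invoke Theorem~\ref{thm:ivp} to record that, for every $f \in \galg$, this initial value problem has a unique solution; this is exactly the regularity needed for the Green's operator to be a well-defined map $\galg \to \galg$, and the same theorem hands us the explicit solution $u = \sum_{i=1}^n u_i \cum d^{-1} d_i f$, which is by definition the value $\fri{T} f$.

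The remaining step is to recognize this solution map as the action of a single element of $\intdiffop$. Here I would use that $d = \det W(u_1,\ldots,u_n)$ is invertible in $\galg$ (part of the hypothesis that the fundamental system is regular) and that each $d_i$, being a determinant of a matrix with entries in $\galg$, again lies in $\galg$; hence $d^{-1}$ and $d_i$ are legitimate coefficient ``functions'' and each summand $u_i \cum d^{-1} d_i$ is a genuine monomial of $\intdiffop$ (a coefficient function, then $\cum$, then a coefficient function). Reading the action of this monomial on $f$ from right to left---first multiplying $f$ by $d_i$, then by $d^{-1}$, then applying $\cum$, then multiplying by $u_i$, using the precedence convention that multiplication binds tighter than $\cum$---reproduces exactly $u_i \cum d^{-1} d_i f$. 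Thus the element $\sum_{i=1}^n u_i \cum d^{-1} d_i \in \intdiffop$ acts as the solution map, and since that map is $\fri{T}$ the two coincide.

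The one point requiring care, and the only place where this is more than a transcription of Theorem~\ref{thm:ivp}, is the passage from ``linear map'' to ``element of $\intdiffop$'': what is asserted is an equality in the operator ring, whereas the Green's operator is a priori an abstract endomorphism of $\galg$. I would therefore be explicit that $\fri{T}$ is \emph{defined} to be the operator represented by the solution map, so that exhibiting an element of $\intdiffop$ whose action is that map settles the claim, with no appeal to injectivity of the action needed. En route it is worth confirming that the displayed operator satisfies the defining identities $T \fri{T} = 1$ and $\Img{\fri{T}} = \orth{\bspc}$ with $\bspc = [\evl, \ldots, \evl\der^{n-1}]$; both follow immediately from Theorem~\ref{thm:ivp}, since applying $T$ to the solution returns $f$, while the imposed conditions $\evl \fri{T} f = \cdots = \evl \der^{n-1} \fri{T} f = 0$ identify the image with $\orth{\bspc}$.
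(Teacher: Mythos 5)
Your proposal is correct and follows the same route the paper intends: the corollary is stated without a separate proof precisely because it is the immediate restatement of Theorem~\ref{thm:ivp}, reading the solution formula $u = \sum_{i=1}^n u_i \cum d^{-1} d_i f$ as the action of the element $\sum_{i=1}^n u_i \cum d^{-1} d_i \in \intdiffop$ on $f$. Your added care about passing from the abstract solution map to an element of the operator ring, and the check of $T\fri{T}=1$ and $\Img{\fri{T}} = \orth{\bspc}$, are sound and consistent with the paper's definitions.
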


Before turning to the solution algorithm for boundary problems, let us
also mention the following practical formula for specializing
Corollary~\ref{cor:fri} to the important special case of LODEs with
\emph{constant coefficients}, which could also be proved directly e.g.
via the Lagrange interpolation formula. For simplicity, we restrict
ourselves to the case where the characteristic polynomial is
separable.

\begin{corollary}
  \label{cor:fri-constant-coeff}
  Let $(\galg, \der, \cum)$ be an ordinary integro-differential
  algebra and consider the differential operator $T = (\der -
  \lambda_1) \cdots (\der - \lambda_n) \in \gdiffop$ with $\lambda_1,
  \ldots, \lambda_n \in K$ mutually distinct. Assume each $u' =
  \lambda_i \, u, \evl \cdot u = 1$ has a solution $u = e^{\lambda_i
    x} \in \galg$ with reciprocal $u^{-1} = e^{-\lambda_i x} \in
  \galg$. Then we have
  \begin{equation*}
    \fri{T} = \sum_{i=1}^n \mu_i \, e^{\lambda_i x} \cum e^{-\lambda_i
      x},
  \end{equation*}
  where $\mu_i^{-1} = (\lambda_i - \lambda_1) \cdots (\lambda_i -
  \lambda_{i-1}) (\lambda_i - \lambda_{i+1}) \cdots (\lambda_i -
  \lambda_n)$.
\end{corollary}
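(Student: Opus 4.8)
The plan is to specialize Corollary~\ref{cor:fri} to the fundamental system $u_i = e^{\lambda_i x}$, so the whole argument reduces to evaluating two Vandermonde-type determinants. First I would check that the $e^{\lambda_i x}$ form a \emph{regular fundamental system}. Each satisfies $(\der - \lambda_i)\,u_i = 0$, and since the $\lambda_i$ are constants the factors of $T$ commute, so $T u_i = 0$ and $u_i \in \Ker{T}$. Their linear independence and the fact that they exhaust $\Ker{T}$ will both follow once the Wronskian is shown to be invertible: invertibility forces independence (as noted after Definition~\ref{def:reg-diffop}), while $\dim_K \Ker{\der - \lambda_i} = 1$ — because for any solution of $u' = \lambda_i u$ the element $e^{-\lambda_i x} u$ is $\der$-constant, hence lies in $K = \Ker{\der}$ — together with the standard subadditivity $\dim_K \Ker{T_1 T_2} \le \dim_K \Ker{T_1} + \dim_K \Ker{T_2}$ yields $\dim_K \Ker{T} \le n$, so the $n$ independent solutions form a $K$-basis of $\Ker{T}$. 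Hence $T$ is regular and Corollary~\ref{cor:fri} is applicable.

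The heart of the computation is the Wronskian $d$. Since $u_i^{(k)} = \lambda_i^k\,e^{\lambda_i x}$, pulling the factor $e^{\lambda_j x}$ out of the $j$-th column gives $d = \bigl(\prod_j e^{\lambda_j x}\bigr)\det V$, where $V = (\lambda_j^{k-1})$ is the classical Vandermonde matrix, so $\det V = \prod_{j<l}(\lambda_l - \lambda_j)$. This is a nonzero element of $K$ because the $\lambda_i$ are distinct, and $\prod_j e^{\lambda_j x}$ is invertible in $\galg$ since each $e^{\lambda_j x}$ has the reciprocal $e^{-\lambda_j x}$; thus $d$ is a unit, confirming regularity of the fundamental system.

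For $d_i$ I would expand $\det W_i$ along its $i$-th column, which is the $n$-th unit vector, picking up a cofactor sign $(-1)^{n+i}$ times the $(n-1)\times(n-1)$ minor obtained by deleting row $n$ and column $i$. Factoring the exponentials out of this minor again leaves a Vandermonde determinant, now in the $\lambda_j$ with $j \ne i$. Forming $d^{-1} d_i$, the exponential products telescope to $e^{-\lambda_i x}$, and the quotient of the two Vandermonde determinants collapses to the reciprocal of $\prod_{j \ne i}(\lambda_i - \lambda_j) = \mu_i^{-1}$, so that $d^{-1} d_i = \mu_i\, e^{-\lambda_i x}$. Substituting into $\fri{T} = \sum_i u_i \cum d^{-1} d_i$ and pulling the scalar $\mu_i$ to the front by $K$-linearity of $\cum$ gives $\fri{T} = \sum_i \mu_i\, e^{\lambda_i x} \cum e^{-\lambda_i x}$, as claimed.

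The main obstacle is purely the sign and ordering bookkeeping in this last step: one must match the cofactor sign $(-1)^{n+i}$ against the sign incurred when the factors $(\lambda_l - \lambda_i)$ with $l > i$ surviving in the Vandermonde quotient are reordered into the factors $(\lambda_i - \lambda_l)$ of $\mu_i^{-1}$. Counting that there are $n-i$ such reversals and that $(-1)^{n+i} = (-1)^{n-i}$ shows the signs cancel exactly, so no spurious factor survives. Everything else is routine linear algebra over $\galg$, relying only on the relevant Vandermonde determinant lying in $K$ and on the exponentials being units.
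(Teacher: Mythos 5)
Your proposal is correct and follows essentially the same route as the paper's proof: specialize Corollary~\ref{cor:fri}, factor the exponentials out of the Wronskian and its modified version to reduce both to Vandermonde determinants, and match the cofactor sign $(-1)^{n+i}$ against the reordering of the factors in $\mu_i^{-1}$. The only difference is that you additionally verify that the $e^{\lambda_i x}$ form a regular fundamental system, a point the paper leaves implicit; your sign bookkeeping also checks out.
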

\begin{rrproof}
  Let us write $V$ for the $n \times n$ Vandermonde determinant in
  $\lambda_1, \ldots, \lambda_n$ and $V_i$ for the $(n-1) \times
  (n-1)$ Vandermonde determinant in $\lambda_1, \ldots, \lambda_{i-1},
  \lambda_{i+1}, \ldots, \lambda_n$. Evaluating the quantities
  of~\eqref{eq:fri}, one sees immediately that
  \begin{equation*}
    d = e^{(\lambda_1 + \cdots \lambda_n)x} \, V
    \quad\text{and}\quad
    d_i = (-1)^{n+i} e^{(\lambda_1 + \cdots
      + \lambda_{i-1} + \lambda_{i+1} + \cdots + \lambda_n) x} \, V_i.
  \end{equation*}
  Hence we have $d_i/d = (-1)^{n+i} \, e^{-\lambda_i x} \: V_i/V$.
  Using the well-known formula for the Vandermonde determinant, one
  obtains $d_i/d = \mu_i \, e^{-\lambda_i x}$, and now the result
  follows from Corollary~\ref{cor:fri}.
\end{rrproof}

Summarizing our earlier results, we can now give a \emph{solution
  algorithm} for computing $G = (T, \bspc)^{-1}$, provided we have a
regular fundamental system $u_1, \ldots, u_n$ for $Tu=0$ and a
$K$-basis $\beta_1, \ldots, \beta_n$ for~$\bspc$. The algorithm
proceeds in three steps:
\begin{enumerate}
\setlength{\itemsep}{3pt}
\item Construct the fundamental right inverse $\smash{\fri{T} \in
    \intdiffop}$ as in Corollary~\ref{cor:fri}.
\item Determine the projector $P = \smash{\sum_{i=1}^n} u_i
  \tilde\beta_i \in \intdiffop$ as in~\eqref{eq:projector}.
\item Compute $G = (1-P) \fri{T} \in \intdiffop$.
\end{enumerate}

\begin{theorem}
  \label{thm:green-op}
  The above algorithm computes the Green's operator $G \in \intdiffop$
  for any regular boundary problem $(T, \bspc)$.
\end{theorem}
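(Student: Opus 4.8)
The plan is to verify that the operator $G=(1-P)\fri{T}$ assembled by the three steps satisfies the two identities characterizing the Green's operator, namely $TG=1$ and $\Img{G}=\orth{\bspc}$, and then to invoke uniqueness. I would first record the reduction that makes this suffice: for a regular boundary problem the conditions $TG=1$ and $\Img{G}\subseteq\orth{\bspc}$ already determine $G$. Indeed, for any $f\in\galg$ the element $u:=Gf$ then lies in $\orth{\bspc}$ and satisfies $Tu=f$, hence is the unique solution of~\eqref{eq:bvp}; so $Gf$ must coincide with the value of the true Green's operator, and since $f$ is arbitrary the two operators agree (and automatically $\Img{G}=\orth{\bspc}$, because every $v\in\orth{\bspc}$ equals $G(Tv)$).

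Before the computation I would pin down the projector from step~2. By regularity the evaluation matrix $\be(u)=(\be_i(u_j))$ is invertible, which lets me replace the given basis $\be_1,\dots,\be_n$ of $\bspc$ by the biorthogonal basis $\tilde\be_1,\dots,\tilde\be_n$ with $\tilde\be_i(u_j)=\delta_{ij}$ used in step~2. Biorthogonality makes $P=\sum_i u_i\tilde\be_i$ idempotent, and by~\eqref{eq:projector} it is then the projector onto $\Ker{T}=[u_1,\dots,u_n]$ with kernel $\orth{[\tilde\be_1,\dots,\tilde\be_n]}=\orth{\bspc}$; consequently $1-P$ is the complementary projector with $\Img{1-P}=\orth{\bspc}$.

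With $P$ understood, both identities become short operator computations. Since $\Img{P}\subseteq\Ker{T}$, the operator $T$ annihilates the range of $P$, so $TP=0$; combined with $T\fri{T}=1$ from Corollary~\ref{cor:fri}, this gives $TG=T(1-P)\fri{T}=T\fri{T}-TP\fri{T}=1$. For the range, $G=(1-P)\fri{T}$ factors through $1-P$, so $\Img{G}\subseteq\Img{1-P}=\orth{\bspc}$. Feeding these two facts into the reduction above completes the proof. I expect the genuine work to sit not in these final manipulations but in the structural fact behind the projector: one must be sure that regularity really yields the biorthogonal basis and hence the direct decomposition $\galg=\Ker{T}\dirs\orth{\bspc}$, so that $P$ has exactly the range $\Ker{T}$ and kernel $\orth{\bspc}$ demanded by~\eqref{eq:projector}. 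Everything analytic---the right-inverse property $T\fri{T}=1$ and the multiplicativity of $\evl$ underlying it---has already been delivered by Theorem~\ref{thm:ivp} and Corollary~\ref{cor:fri}, so once this decomposition is secured the verification is immediate.
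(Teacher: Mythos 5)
Your argument is correct and complete: the paper itself gives no proof here but simply defers to~\cite{RosenkranzRegensburger2008a}, and your verification --- $TP=0$ plus $T\fri{T}=1$ yielding $TG=1$, the biorthogonalization via the invertible evaluation matrix giving $\Img{1-P}=\orth{\bspc}$, and the uniqueness reduction showing these two facts pin down $G$ --- is essentially the standard argument carried out in that reference. No gaps; the one point worth making explicit is that all three ingredients ($\fri{T}$, $P$, and hence $G$) visibly lie in $\intdiffop$, which is part of the claim.
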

\begin{rrproof}
 See~\cite{RosenkranzRegensburger2008a}.
\end{rrproof}

The computation of \emph{Green's operators} for boundary problems for ODEs using the above algorithm takes on the following concrete form in \tma\ code.
\begin{mmacode}
\includegraphics[scale=0.75]{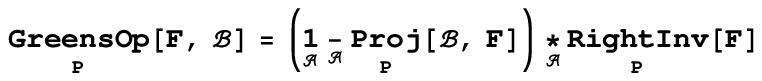}
\end{mmacode}
\noindent Here $\mathcal{B}$ is a basis for the boundary space and
$\mathtt{F}$ a regular fundamental system.

Let us consider again example~\eqref{eq:d2=0}: Given $\f\in \galg$, find $u\in \galg$ such that
\begin{equation*}
\bvp{u''=f,}{u(0) = u(1) = 0.}
\end{equation*}
The Green's operator $G$ of the boundary problem can be obtained by our implementation via the following computation
\begin{mmacode}
\includegraphics[scale=0.75]{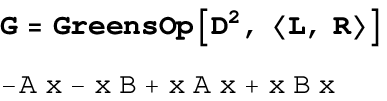}
\end{mmacode}
where we use the notation from before: $Au = \int_0^{x} u(\xi) \, d\xi$, $Lu = u(0)$, $Ru = u(1)$ and in addition, $Bu = \int_x^{1} u(\xi) \, d\xi$. The corresponding Green's function is computed in an immediate postprocessing step:
\begin{mmacode}
\includegraphics[scale=0.75]{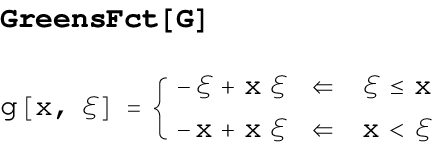}
\end{mmacode}
As noted in~\cite{Rosenkranz2005}, the Green's function provides a canonical form for the Green's operator. Moreover, one can obtain the function $u(x)$ and thus solve the boundary problem through knowledge of the Green's function in the following identity:
\[
u(x) = G f(x) =\int_0^1g(x,\xi)f(\xi)\,d\xi.
\]
By replacing the Green's function obtained above in the latter integral we obtain
\[
u(x) = (x-1) \int_0^x \xi f(\xi) \, d \xi + x \int_x^1 (\xi -1) f(\xi) \, d\xi.
\]

Furthermore, we can look at some specific instances of the forcing function $f(x)$. Let us first consider the simple example $f(x) = x$. By an immediate calculation, we obtain the expression for the action of $G$ on $f(x)$, which is $u(x)$:
\begin{mmacode}
\includegraphics[scale=0.75]{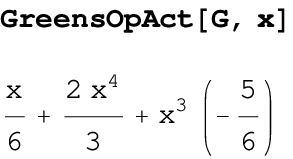}
\end{mmacode}
The expression for the solution function $u(x)$ can easily become more complicated, as we can see in the next example, where we consider the instance
\[
f(x) = e^{2 x} + 3 x^2 {\sin x}^3.
\]

\noindent Relying on \mma\ for handling symbolic integration, we obtain:
\begin{mmacode}
\includegraphics[scale=0.66]{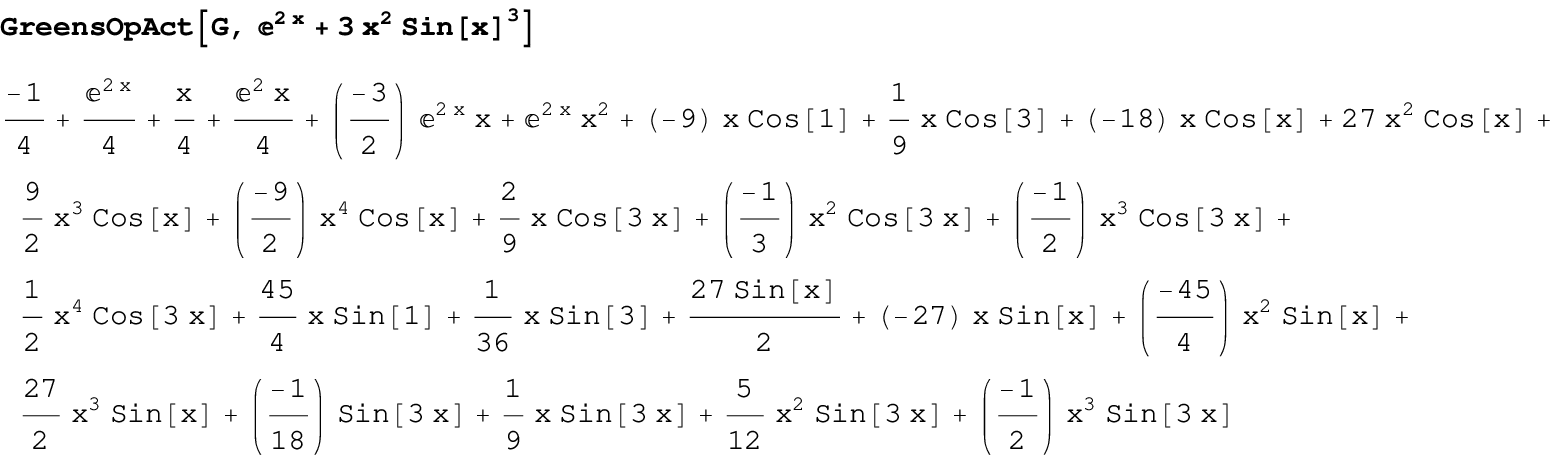}
\end{mmacode}

\noindent As a last example, let us consider $f(x) = \sin \sin x$. As we can notice below, it cannot be integrated with \mma:
\begin{mmacode}
\includegraphics[scale=0.75]{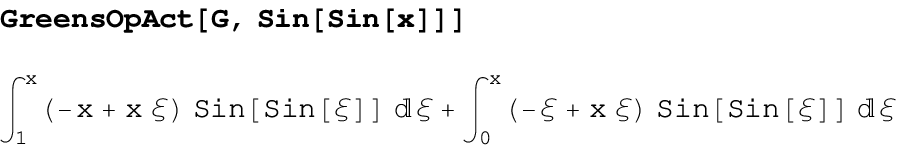}
\end{mmacode}

\noindent In order to carry out the integrals involved in the application of the Green's operator to a forcing function, one can use any numerical quadrature method (as also available in many computer algebra systems).

\subsection{Composing and Factoring Boundary Problems.}

In the following, we discuss the \emph{composition} of boundary problems corresponding to their Green's operators. We also describe how \emph{factorizations} of a boundary problem along a given factorization of the defining operator can be characterized and constructed. We refer again to~\cite{RosenkranzRegensburger2008a,RegensburgerRosenkranz2009} for further details. We assume that all operators are defined on suitable spaces such that the composition is well-defined. It is worth mentioning that the following approach works in an \emph{abstract setting}, which includes in particular boundary problems for linear partial differential equations (LPDEs) and systems thereof; for simplicity, we will restrict ourselves in the examples to the LODE setting.

\begin{definition}
\label{def:compo}
We define the \emph{composition} of boundary problems $(T_1, \B_1)$ and $(T_2, \B_2)$ by
\[
(T_1, \B_1) \circ (T_2, \B_2)= (T_1 T_2, \B_1 \cdot T_2 + \B_2).
\]
\end{definition}

So the boundary conditions from the first boundary problem are ``translated'' by the operator from the second problem. The composition of boundary problems is associative but in general not commutative. The next proposition tells us that the composition of boundary problems preserves \emph{regularity}.

\begin{proposition}
\label{prop:compo}
Let $(T_1,\B_1)$ and $(T_2,\B_2)$ be regular boundary
problems with Green's operators $G_1$ and $G_2$. Then $(T_1,\B_1) \circ (T_2,\B_2)$ is regular with Green's operator $G_2 G_1$ so that
\[
((T_1,\B_1) \circ (T_2,\B_2))^{-1}=(T_2,\B_2)^{-1} \circ (T_1,\B_1)^{-1}.
\]
\end{proposition}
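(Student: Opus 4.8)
The plan is to verify that $G_2 G_1$ meets the two identities characterizing a Green's operator, namely $TG = 1$ and $\Img{G} = \orth{\B}$, now for the composed problem $(T_1 T_2, \B_1 \cdot T_2 + \B_2)$, and then to establish regularity by showing that the associated homogeneous problem admits only the trivial solution. The first identity is immediate: using $T_2 G_2 = 1$ and $T_1 G_1 = 1$ we get $(T_1 T_2)(G_2 G_1) = T_1 (T_2 G_2) G_1 = T_1 G_1 = 1$. Note also the order reversal built into $G_2 G_1$, which already foreshadows the claimed formula $((T_1, \B_1) \circ (T_2, \B_2))^{-1} = (T_2, \B_2)^{-1} \circ (T_1, \B_1)^{-1}$.

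For the image identity I would first unwind the orthogonal space. Since $\B_1 \cdot T_2 + \B_2$ is a sum of subspaces, $u \in \orth{\B_1 \cdot T_2 + \B_2}$ holds precisely when $\be_2(u) = 0$ for all $\be_2 \in \B_2$ and $\be_1(T_2 u) = 0$ for all $\be_1 \in \B_1$, i.e. $\orth{\B_1 \cdot T_2 + \B_2} = \orth{\B_2} \cap T_2^{-1}(\orth{\B_1})$, where $T_2^{-1}(\cdot)$ denotes the preimage. Now $\Img{G_2 G_1} = G_2(\Img{G_1}) = G_2(\orth{\B_1})$. The inclusion ``$\subseteq$'' is easy: any $u = G_2 v$ with $v \in \orth{\B_1}$ lies in $\Img{G_2} = \orth{\B_2}$ and satisfies $T_2 u = T_2 G_2 v = v \in \orth{\B_1}$, so $u \in \orth{\B_2} \cap T_2^{-1}(\orth{\B_1})$. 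For ``$\supseteq$'' I would use that each $G_i T_i$ is the projector onto $\orth{\B_i}$: from $T_i G_i = 1$ one sees $G_i$ is injective and $G_i T_i$ is idempotent with image $\Img{G_i} = \orth{\B_i}$, hence $G_i T_i$ fixes $\orth{\B_i}$ pointwise. Thus for $u \in \orth{\B_2} \cap T_2^{-1}(\orth{\B_1})$ we have $u = G_2 T_2 u$, and $v := T_2 u \in \orth{\B_1}$ satisfies $v = G_1 T_1 v$; setting $f = T_1 v$ gives $u = G_2 v = G_2 G_1 f \in \Img{G_2 G_1}$.

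It remains to establish regularity. Existence of a solution for every $f \in \galg$ is already furnished by $u = G_2 G_1 f$, since $(T_1 T_2) u = f$ and $u \in \Img{G_2 G_1} = \orth{\B_1 \cdot T_2 + \B_2}$. For uniqueness I would prove $\Ker{T_1 T_2} \cap \orth{\B_1 \cdot T_2 + \B_2} = 0$ by a two-stage descent: suppose $u$ lies in this intersection and set $v = T_2 u$. Then $T_1 v = T_1 T_2 u = 0$ and $v = T_2 u \in \orth{\B_1}$, so $v \in \Ker{T_1} \cap \orth{\B_1}$, which vanishes by regularity of $(T_1, \B_1)$; hence $T_2 u = 0$. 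Combined with $u \in \orth{\B_2}$, this places $u$ in $\Ker{T_2} \cap \orth{\B_2} = 0$ by regularity of $(T_2, \B_2)$, so $u = 0$. Existence and uniqueness together give regularity of the composite, and its solution map is then necessarily $G_2 G_1$.

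The main obstacle is the reverse inclusion in the image identity: it is exactly here that the two problems must be made to interact, reconstructing a preimage under $G_2 G_1$ from the hypotheses $u \in \orth{\B_2}$ and $T_2 u \in \orth{\B_1}$ by invoking that each $G_i T_i$ fixes $\orth{\B_i}$. Once this projector property is in hand, the operator identity $(T_1 T_2)(G_2 G_1) = 1$ and the uniqueness descent are comparatively routine, the latter amounting to two applications of the hypothesis that $\Ker{T_i} \cap \orth{\B_i} = 0$ for each regular factor.
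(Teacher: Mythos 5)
Your proof is correct and complete. Note that the paper itself gives no proof of this proposition, deferring instead to the cited references; your argument --- checking the two characterizing identities $(T_1T_2)(G_2G_1)=1$ and $\Img{G_2 G_1}=\orth{(\B_1\cdot T_2+\B_2)}$ via the decomposition $\orth{(\B_1\cdot T_2+\B_2)}=\orth{\B_2}\cap T_2^{-1}(\orth{\B_1})$ and the fact that each $G_iT_i$ is the projector onto $\orth{\B_i}$, followed by the two-stage kernel descent for uniqueness --- is precisely the standard argument found in those references, so there is nothing to add.
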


The simplest example of composing two boundary (more specifically, initial value) problems for ODEs is the following. Using the notation from before, one sees that
\[
(D,[L])\circ(D,[L]) = (D^2, [LD] + [L])=(D^2,
[L,LD]).
\]

Let now $(T,\B)$ be a boundary problem and assume that we have a factorization $T=T_1T_2$ of the defining differential operator. We refer to~\cite{RosenkranzRegensburger2008a,RegensburgerRosenkranz2009} for a characterization and construction of all factorizations
\[
(T,\B)=(T_1, \B_1) \circ (T_2, \B_2)
\]
into boundary problems. In particular, if $(T,\B)$ is regular, it can be factored into regular boundary problems: the left factor $(T_1, \B_1)$ is \emph{unique}, while for the right factor $(T_2,\B_2)$ we can choose any subspace $\B_2 \le \B$ that makes the problem regular. We can compute the uniquely determined boundary conditions for the left factor by $\B_1=\B \cdot G_2$, where $G_2$  is the Green's operator for some regular right factor $(T_2,\B_2)$.
By factorization, one can split a problem of higher order into subproblems of \emph{lower} order, given a factorization of the defining operator. For algorithms and results about factoring
ordinary differential operators we refer to~\cite{PutSinger2003,Schwarz1989,Tsarev1996}.

Given a fundamental system of the differential operator $T$ and a right inverse of $T_2$, one can factor boundary problems in an algorithmic way as shown in~\cite{RegensburgerRosenkranz2009} and in an integro-differential algebra~\cite{RosenkranzRegensburger2008a}. As described in~\cite{RosenkranzRegensburgerTecBuchberger2009}, we can also compute boundary conditions $\B_2\le\B$ such that $(T_2,\B_2)$ is a regular right factor, given only a fundamental system of $T_2$. The unique left factor can be then computed as explained above.
This allows us to factor a boundary problem if we can factor the differential operator and compute a fundamental system of only one factor. The remaining lower order problems can then also be solved by numerical methods.

Here is how we can compute the boundary conditions of the left and right factor problems for the boundary problem $(D^2,[L,R])$ from previous example~\eqref{eq:d2=0}, along the trivial factorization with $T_1=T_2=D$. The indefinite integral $A=\int_0^x$ is the Green's operator for the regular right factor $(D,[L])$.
\begin{mmacode}
\includegraphics[scale=0.75]{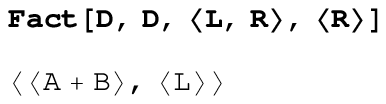}
\end{mmacode}
This factorization reads in traditional notation as
\begin{displaymath}
   \bvp{u'=f}{\int_0^1 u(\xi) \, d\xi = 0}\circ
   \bvp{u'=f}{u(0) = 0} =
   \bvp{u''=f}{u(0) = u(1) = 0}.
 \end{displaymath}
Note that the boundary condition for the unique left factor is an integral (Stieltjes) boundary condition.

We consider as a second example the fourth order boundary problem~\cite[Ex.~33]{RosenkranzRegensburger2008a}:
\begin{equation}
\label{eq:d4=0} \bvp{u'''' + 4 u=f,}{u(0) = u(1) = u'(0) = u'(1) = 0.}
\end{equation}
Factoring the boundary problem along $D^4+4 = (D^2-2i)(D^2+2i)$, we obtain the following
boundary conditions for the factor problems.
\begin{mmacode}
\includegraphics[scale=0.75]{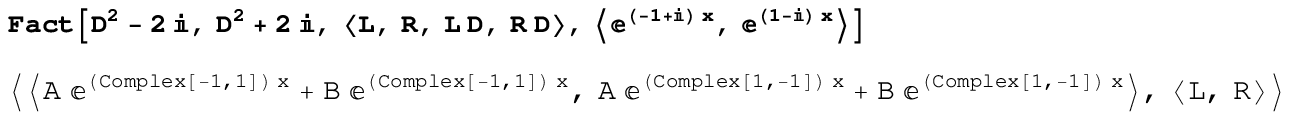}
\end{mmacode}

\section{Integro-Differential Polynomials}
\label{sec:intdiffpol}

In this section, we describe the algebra of \emph{integro-differential polynomials}~\cite{RosenkranzRegensburger2008} obtained by adjoining an indeterminate function to a given integro-differential algebra $(\galg,\der,\cum)$. Intuitively, these are all terms  that one can build using an indeterminate $u$, coefficient functions $f \in \galg$ and the operations $+,\cdot,\der,\cum$, identifying two terms if one can be derived from the other by the axioms of integro-differential algebras and the operations in $\galg$. A typical term for $(K[x],\der,\cum)$ looks like this:
\[
(4 u u' \cum (x+3) u'^3) (u' \cum u''^2) + \cum x^6 u u''^5 \cum (x^2 + 5x) u^3 u'^2\cum u
\]
From the computational point of view, a fundamental problem is to find a canonical simplifier (see Sect.~\ref{sec:theorema}) on these objects. For example, the above term can be transformed to
\begin{multline*}
4 u u'^2 \cum x u'^3 \cum u''^2 + 4 u u'^2 \cum u''^2 \cum x u'^3 + 12 u u'^2 \cum u'^3 \cum u''^2 + 12 u u'^2 \cum u''^2 \cum u'^3 \\
+ \cum x^6 u u''^5 \cum x^2 u^3 u'^2\cum u + 5 \cum x^6 u u''^5 \cum x u^3 u'^2\cum u.
\end{multline*}
by the Baxter axiom and the $K$-linearity of the integral.

As outlined in the next subsection, a notion of polynomial can be constructed for any variety in the sense of \emph{universal algebra}. (In this general sense, an algebra is a set with an arbitrary number of operations, and a variety is a collection of such algebras satisfying a fixed set of identities. Typical examples are the varieties of groups, rings, and lattices.)

For sample computations in the algebra of integro-differential polynomials, we use a \emph{prototype implementation} of integro-differential polynomials, based on the \tma\ functor mechanism (see Sect.~\ref{sec:theorema}).

\subsection{Polynomials in Universal Algebra.}

In this subsection, we describe the idea of the general construction of polynomials in universal algebra~\cite{Hule1969}.
We refer to~\cite{LauschNoebauer1973} for a comprehensive treatment; see also the surveys~\cite{AichingerPilz2003,BuchbergerLoos1983}. For the basic notions in universal algebra used below, see for example~\cite{BaaderNipkow1998} or~\cite[Ch.~1]{Cohn2003a}.

Let $\mathcal{V}$ be a variety defined by a set $\identities$ of identities over a signature $\Sigma$. Let $A$ be a fixed ``coefficient domain'' from the variety $\mathcal{V}$, and let $X$ be a set of indeterminates (also called ``variables''). Then all terms in the signature $\Sigma$ with constants (henceforth referred to as coefficients) in $A$ and indeterminates in $X$ represent the same polynomial if their equality can be derived in finitely many steps from the identities in $\identities$ and the operations in~$A$. The set of all such terms $\term{\Sigma}{A}{X}$ modulo this congruence $\equiv$ is an algebra in $\mathcal{V}$, called the \emph{polynomial algebra} for $\mathcal{V}$ in $X$ \mbox{over $A$} and denoted by $\polalg$.

The polynomial algebra $\polalg$ contains $A$ as a subalgebra, and $A \cup X$ is a generating set. As in the case of polynomials for commutative rings, we have the \emph{substitution homomorphism} in general polynomial algebras. Let $B$ be an algebra in $\mathcal{V}$. Then given a homomorphism $\varphi_1\colon A \rightarrow B$ and a map $\varphi_2\colon X \rightarrow B$, there exists a unique homomorphism
 \[
 \varphi \colon \polalg \rightarrow B
 \]
such that $\varphi(a)=\varphi_1(a)$ for all $a \in A$ and $\varphi(x)=\varphi_2(x)$ for all $x\in X$. So in a categorical sense the polynomial algebra $\polalg$ is a free product of the coefficient algebra $A$ and the free algebra over $X$ in $\mathcal{V}$; see also~\cite{AichingerPilz2003}.

For computing with polynomials, we will construct a \emph{canonical simplifier} on $\polalg$ with associated system of canonical forms $\canf$. As explained before (Sect.~\ref{sec:theorema}), the canonical simplifier provides for every polynomial in $\polalg$, represented by some term $T$, a canonical form $R\in \canf$ that represents the same polynomial, with different terms in $\canf$ representing different polynomials; see also~\cite[p.~23]{LauschNoebauer1973}.

The set $\canf$ must be large enough to generate all of $\polalg$ but small enough to ensure \emph{unique representatives}. The latter requirement can be ensured by endowing a given set of terms with the structure of an algebra in the underlying variety.

\begin{proposition}
\label{prop:canforms}
Let $\mathcal{V}$ be a variety over a signature $\Sigma$, let $A$ be an algebra in $\mathcal{V}$ and $X$ a set of indeterminates.  If $\canf\subseteq \term{\Sigma}{A}{X}$ is a set of terms with $A \cup X \subseteq \canf$ that can be endowed with the structure of an algebra in $\mathcal{V}$, then different terms in $\canf$ represent different polynomials in $\polalg$.
\end{proposition}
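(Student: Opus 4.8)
The plan is to show that the canonical map $\nu\colon \term{\Sigma}{A}{X} \to \polalg$ sending each term to the polynomial it represents is injective when restricted to $\canf$; this is exactly the assertion that distinct terms of $\canf$ represent distinct polynomials. The tool will be the substitution homomorphism property recalled above. Regarding $\canf$ as an algebra in $\mathcal{V}$ in which $A$ is a subalgebra (so that the inclusion $A \hookrightarrow \canf$ is a homomorphism), I would take $\varphi_1$ to be this inclusion and $\varphi_2$ the inclusion $X \hookrightarrow \canf$, obtaining a unique homomorphism $\mu\colon \polalg \to \canf$ with $\mu(a)=a$ for $a \in A$ and $\mu(x)=x$ for $x \in X$. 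This $\mu$ is the candidate retraction of $\nu|_\canf$.

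The crux is then the identity $\mu(\nu(R)) = R$ for every $R \in \canf$. Granting it, injectivity follows at once: if $R_1,R_2 \in \canf$ satisfy $\nu(R_1)=\nu(R_2)$, then $R_1 = \mu(\nu(R_1)) = \mu(\nu(R_2)) = R_2$. To evaluate $\mu \circ \nu$, I would use that $\term{\Sigma}{A}{X}$ is the absolutely free $\Sigma$-algebra on $A \cup X$: both $\mu\circ\nu$ and the evaluation homomorphism $\mathrm{ev}\colon \term{\Sigma}{A}{X} \to \canf$, which interprets each term inside $\canf$ and fixes the generators $A \cup X$, are $\Sigma$-homomorphisms agreeing on $A \cup X$, hence equal. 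Thus $\mu(\nu(R)) = \mathrm{ev}(R)$, and the claim reduces to $\mathrm{ev}(R)=R$, i.e. that evaluating the canonical form $R$ back inside the algebra $\canf$ returns the term $R$ itself.

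I expect this reduction to $\mathrm{ev}(R)=R$ to be the main obstacle, and it is precisely where the hypothesis that $\canf$ carries a $\mathcal{V}$-structure must be read in its intended, sharper sense. Mere compatibility of the operations with the congruence, $\omega_{\canf}(R_1,\dots,R_n)\equiv\omega(R_1,\dots,R_n)$, only gives $\mathrm{ev}(R)\equiv R$; upgrading this $\equiv$ to $=$ is itself the injectivity we want, so that route is circular. What is genuinely needed is that $\canf$ be closed under subterms and that its operations reproduce the term structure on canonical forms, i.e. $\omega_{\canf}(R_1,\dots,R_n)=\omega(R_1,\dots,R_n)$ whenever the syntactic term on the right already lies in $\canf$. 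I would make this the operative content of the hypothesis and then prove $\mathrm{ev}(R)=R$ by structural induction on $R$: the base cases $R\in A$ and $R\in X$ hold because $\mathrm{ev}$ fixes generators, and for $R=\omega(R_1,\dots,R_n)\in\canf$ the induction hypothesis on the subterms together with the displayed identity closes the step.
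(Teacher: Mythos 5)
Your argument is essentially the paper's own: the paper likewise constructs the substitution homomorphism $\phi\colon \polalg \to \canf$ fixing $A \cup X$ and deduces injectivity of the canonical map $\pi\colon \canf \to \polalg$ from the retraction identity $\phi\circ\pi(R)=R$. The one difference is that the paper simply asserts $\phi\circ\pi(R)=R$, whereas you correctly note that $\phi\circ\pi$ is the evaluation map $\mathrm{ev}\colon\term{\Sigma}{A}{X}\to\canf$ and that $\mathrm{ev}(R)=R$ is not a consequence of an arbitrary $\mathcal{V}$-structure on the set $\canf$ but needs the structure to be compatible with the term formation on canonical forms; your structural induction supplies exactly the justification the paper leaves implicit, and this is indeed how the hypothesis is used in the paper's applications, where the operations on $\canf$ are defined so that a term already in canonical form evaluates to itself.
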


\begin{proof}
Since $\canf$ can be endowed with the structure of an algebra in the variety $\mathcal{V}$ and $A \cup X \subseteq \canf$, there exists a unique substitution homomorphism
  \begin{displaymath}
    \phi\colon \polalg \rightarrow \canf
  \end{displaymath}
  such that $\phi(a) = a$ for all $a \in A$ and $\phi(x) = x$ for all $x \in X$. Let
  \begin{displaymath}
    \pi\colon \canf \rightarrow \polalg
  \end{displaymath}
 denote the restriction of the canonical map associated with $\equiv$. Then we have $\phi \circ \pi (R) = R$ for all $R \in \canf$, so $\pi$ is injective, and different terms in $\canf$ indeed represent different polynomials.
 \qed
 \end{proof}

As a well-known example, take the \emph{polynomial ring} $R[x]$ in one indeterminate $x$ over a commutative ring $R$, which is $\polalg$ for $A=R$ and $X=\{x\}$ with $\mathcal{V}$ being the variety of commutative unital rings. The set of all terms of the form $a_nx^n+\cdots+a_0$ with coefficients $a_i \in R$ and $a_n\neq 0$ together with $0$ is a system of canonical forms for $R[x]$. One usually defines the polynomial ring directly in terms of these canonical forms. Polynomials for groups, bounded lattices  and Boolean algebras are discussed in~\cite{LauschNoebauer1973} along with systems of canonical forms.

\subsection{Differential Polynomials.}
For illustrating the general construction described above, consider the algebra of \emph{differential polynomials} over a commutative differential $K$-algebra $(\galg,\der)$ in one indeterminate function $u$, usually denoted by $\intdiffpol{u}$. Clearly this is $\polalg$ for $A = \galg$ and $X = \{u\}$ with $\mathcal{V}$ being the variety of differential $K$-algebras. Terms are thus built up with the indeterminate $u$, coefficients from $\mathcal{F}$ and the operations $+,\cdot,\der$; a typical example being
\[
\der^2 (f_1 u^2 + u) \der (f_2 u^3)+\der^3(f_3 u).
\]
By applying the Leibniz rule and the linearity of the derivation, it is clear that every polynomial is congruent to a $K$-linear combination of terms of the form
\begin{equation}
\label{eq:diffpol}
f \prod_{i=0}^{\infty} u_i^{\beta_i},
\end{equation}
where $f \in \galg$, the notation $u_n$ is short for $\der^n(u)$, and only finitely many $\beta_i \in \N$ are nonzero. In the following, we use the multi-index notation $f u^{\beta}$ for terms of this form. For instance, $u^{(1,0,3,2)}$ is the multi-index notation for $u (u'')^3 (u''')^2$. The order of a differential monomial $u^{\beta}$ is given by the highest derivative appearing in $u^{\beta}$ or $-\infty$ if $\beta=0$.

Writing $\canf$ for the set of all $K$-linear combinations of terms of the form~\eqref{eq:diffpol}, we already know that every polynomial is congruent to a term in $\canf$. When $\galg = K[x]$, a typical element of $\canf$ is given by
\begin{equation*}
(3 x^3 + 5 x) \, u^{(1,0,3,2)} + 7 x^5 u^{(2,0,1)} + 2 x u^{(1,1)}.
\end{equation*}
To show that $\canf$ is a system of canonical forms for $\intdiffpol{u}$, by Proposition~\ref{prop:canforms} it suffices to endow $\canf$ with the structure of a commutative differential algebra. As a commutative algebra, $\canf$ is just the polynomial algebra in the infinite set of indeterminates $u_0, u_1, u_2, \ldots$. For defining a derivation in a commutative algebra, by the Leibniz rule and $K$-linearity, it suffices to specify it on the generators. Thus $\canf$ becomes a differential algebra by setting $\der(u_k)=u_{k+1}$. One usually defines the differential polynomials directly in terms of these canonical forms, see for example~\cite{Kolchin1973}.

\subsection{Integro Polynomials.}
We outline the \emph{integro polynomials} over a Rota-Baxter algebra as in Definition~\ref{def:Rota-Baxter}. This is related to the construction of free objects in general Rota-Baxter algebras; we refer to~\cite{GuoSit2010} for details and references. By iterating the Baxter axiom~\eqref{eq:baxter-axiom}, one obtains a generalization that is called the \emph{shuffle identity} on $\galg$:
\begin{equation}
\label{eq:shuffle}
(\cum f_1 \cum \cdots \cum f_m) \cdot
(\cum g_1 \cum \cdots \cum g_n)
= \sum \cum h_1 \cum \cdots \cum h_{m+n}
\end{equation}
\noindent Here the sum ranges over all shuffles of $(f_1,\ldots,f_m)$ and $(g_1,\ldots,g_n)$; see~\cite{Ree1958,Reutenauer1993,Rota1998} for details. The sum consists of $\tbinom{m + n}{n}$ shuffles, obtained by ``shuffling'' together $\cum f_1 \cum \cdots \cum f_m$ and $\cum g_1 \cum \cdots \cum g_n$ as words over the letters $\cum f_i$ and $\cum g_j$, such that the inner order in the words is preserved. For instance, we have
\[
(\cum f_1 \cum f_2) \cdot (\cum g_1) = \cum f_1 \cum f_2 \cum g_1 + \cum f_1 \cum g_1 \cum f_2 + \cum g_1 \cum f_1 \cum f_2.
\]
for the simple $m=2, n=1$ case.

The integro polynomials over $\galg$ are defined as $\polalg$ for $A=\galg$ and $X = \{u\}$ with $\mathcal{V}$ being the variety of Rota-Baxter algebras over $K$. The full construction of the canonical forms for integro polynomials is included in the following subsection. But it is clear that by \emph{expanding products of integrals} by the shuffle identity, every integro polynomial is congruent to a $K$-linear combination of terms of the form
\begin{equation}
\label{eq:canfintpol}
f u^k \cum f_1 u^{k_1} \cum \cdots \cum f_m u^{k_m}
\end{equation}
with $f, f_1, \ldots, f_m \in \galg$ and $k, k_1, \ldots, k_m \in \N$. However, they cannot be canonical forms, since terms like $\cum (f+g) u$ and $\cum fu + \cum gu$ or $\cum \lambda f u$ and $\lambda \cum f u$ represent the same polynomials.

Writing $\canf$ for the set of all $K$-linear combinations of terms of the form~\eqref{eq:canfintpol}, the multiplication of two elements of $\canf$ can now be defined via~\eqref{eq:shuffle} as follows. Since the product of~\eqref{eq:canfintpol} with another term $g u^l \cum g_1 u^{l_1} \cum \cdots \cum g_n u^{l_n}$ should clearly be given by $fg \, u^{k+l} (\cum f_1 u^{k_1} \cum \cdots \cum f_m u^{k_m}) \, (\cum g u^l g_1 u^{l_1} \cum \cdots f_n u^{l_n})$, it remains to define the so-called \emph{shuffle product} on integral terms (those having the form~\eqref{eq:canfintpol} with $f=1$ and $k=0$). This can be achieved immediately by using~\eqref{eq:shuffle} with $f_i u^{k_i}$ and $g_j u^{l_j}$ in place of $f_i$ and $g_j$, respectively. It is easy to see that the shuffle product is commutative and distributive with respect to addition.

The shuffle product can also be defined \emph{recursively}~\cite{Reutenauer1993}. Let $J$ and $\tilde J$ range over integral terms (note that $1$ is included as the special case of zero nested integrals). Then we have
\begin{equation}
\label{eq:shuffle2}
(\cum f u^k J) \cdot (\cum \tilde{f} u^{\tilde{k}} \tilde{J}) =  (\cum f u^k) \nest J \cdot (\cum \tilde{f} u^{\tilde{k}} \tilde{J}) + (\cum \tilde{f} u^{\tilde{k}}) \nest (\cum f u^k J) \cdot \tilde{J},
\end{equation}
where $\nest\colon \canf \times \canf \rightarrow \canf$ denotes the operation of nesting integrals (with the understanding that $\cdot$ binds stronger than $\nest$), defined on basis vectors by
\begin{equation}
\label{eq:nest}
\cum F_1 \cum \cdots \cum F_m \nest \cum G_1 \cum \cdots \cum G_n
= \cum F_1 \cum \cdots \cum F_m \cum G_1 \cum \cdots \cum G_n,
\end{equation}
and extended bilinearly to all of $\canf$. Here $F_i$ and $G_j$ stand for $f_i u^{k_i}$ and $g_j u^{l_j}$, respectively. For example, $\cum F_1 \cum F_2$ and $\cum G_1$ can be multiplied as
\begin{multline*}
(\cum F_1) \nest (\cum F_2) \cdot (\cum G_1)+ (\cum G_1) \nest 1 \cdot (\cum F_1 \cum F_2)
= (\cum F_1) \nest (\cum F_2 \cum G_1 + \cum G_1 \cum F_2) \\
+ (\cum G_1) \nest (\cum F_1 \cum F_2)= \cum F_1 \cum F_2 \cum G_1 + \cum F_1 \cum G_1 \cum F_2 + \cum G_1 \cum F_1 \cum F_2,
\end{multline*}
analogous to the previous computation.

\subsection{Representing Integro-Differential Polynomials.}

In the following, we describe in detail the universal algebra construction of the integro-differential polynomials and their canonical forms. We refer to~\cite{GuoKeigher2008,GuoSit2010a} for the related problem of constructing free objects in differential Rota-Baxter algebras. We consider the variety of integro-differential algebras. Its \emph{signature} $\Sigma$ contains: the ring operations, the derivation~$\der$, the integral~$\cum$, the family of unary ``scalar multiplications'' $(\cdot_\lambda)_{\lambda \in K}$, and for convenience we also include the evaluation $\evl$. The \emph{identities} $\identities$ are those of a \mbox{$K$-algebra}, then \mbox{$K$-linearity} of the three operators $\der$, $\cum$, $\evl$, the Leibniz rule~\eqref{eq:leibniz}, the section axiom \eqref{eq:section-axiom}, the Definition~\ref{def:ini-eval} of the
evaluation, and the differential Baxter axiom \eqref{eq:int-by-parts}.

\begin{definition}
 Let $(\galg,\der,\cum)$ be an integro-differential algebra. Then the algebra of \emph{integro-differential polynomials} in $u$ over $\galg$, denoted by $\intdiffpol{u}$ in analogy to the differential polynomials, is the polynomial algebra $\polalg$ for $A=\galg$ and $X = \{u\}$ with $\mathcal{V}$ being the variety of integro-differential algebras over $K$.
\end{definition}

Some identities following from~$\identities$ describe \emph{basic interactions} between operations in $\galg$: the pure Baxter axiom \eqref{eq:baxter-axiom}, multiplicativity of the evaluation~\eqref{eq:mult}, the identities
\begin{equation}
 \label{eq:ad_identities}
 \evl^2=\evl, \; \; \; \der \evl = 0, \; \; \; \evl\cum = 0, \; \; \; \cum (\evl f) g = (\evl f)
 \cum g,
\end{equation}
and the variant \eqref{eq:int-by-parts-eval} of the differential Baxter axiom connecting all three operations.

We need to introduce some \emph{notational conventions}. We use $f,g$ for coefficients in $\galg$, and $V$ for terms in $\term{\Sigma}{\galg}{\{u\}}$. As for differential polynomials, we write $u_n$ for the $n$th derivative of $u$. Moreover, we write
\[
V(0) \;\text{for}\; \evl(V) \quad\text{and}\quad u(0)^{\alpha} \;\text{for}\; \prod_{i=0}^{\infty} u_i(0)^{\alpha_i},
\]
where $\alpha$ is a multi-index.

As a first step towards canonical forms, we describe below a system of terms that is \emph{sufficient for representing} every integro-differential polynomial (albeit not uniquely as we shall see presently).

\begin{lemma}
  \label{lem:sumf}
  Every polynomial in $\intdiffpol{u}$ can be represented by a finite
  $K$-linear combination of terms of the form
  \begin{equation}
    \label{eq:sumf}
    f u(0)^{\alpha}u^{\beta}\cum f_{1} u^{\gamma_1} \cum \cdots \cum f_{n}
    u^{\gamma_n},
  \end{equation}
  where $f, f_1,\ldots,f_n \in \galg$, and each multi-index as well as $n$ may be zero.
\end{lemma}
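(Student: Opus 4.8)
The plan is to argue by structural induction on the terms of $\term{\Sigma}{\galg}{\{u\}}$, showing that every term is congruent modulo the integro-differential identities $\identities$ to a finite $K$-linear combination of the building blocks~\eqref{eq:sumf}. Let $\canf_0$ denote the $K$-span of the terms~\eqref{eq:sumf}; it is visibly closed under addition and under scalar multiplication by $\lambda \in K$, so those term constructors are immediate, and the two base cases are trivial: a coefficient $f \in \galg$ is~\eqref{eq:sumf} with empty multi-indices and no integral, and the indeterminate $u$ is~\eqref{eq:sumf} with $\beta = (1,0,0,\dots)$. It then remains to verify that $\canf_0$ is closed under the product and under the three unary operators $\der$, $\cum$, and $\evl$.

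For \emph{multiplication} it suffices, by bilinearity, to multiply two single terms $f\,u(0)^\alpha u^\beta J$ and $g\,u(0)^{\alpha'} u^{\beta'} J'$, where $J, J'$ denote the trailing nested integrals. Since $\intdiffpol{u}$ is commutative, the scalar, constant, and differential-monomial prefixes combine into $fg\, u(0)^{\alpha+\alpha'} u^{\beta+\beta'}$, while the two nested integrals are merged by the shuffle identity~\eqref{eq:shuffle}, applied with the integrands $f_i u^{\gamma_i}$ and $g_j u^{\gamma'_j}$ playing the role of the $f_i$ and $g_j$. This rewrites $J \cdot J'$ as a finite sum of nested integrals, so the product lands back in $\canf_0$.

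For the \emph{operators} I would treat one term $f\,u(0)^\alpha u^\beta J$ at a time, using linearity to reduce to this case. Applying $\der$ and expanding via the Leibniz rule~\eqref{eq:leibniz}, the constant factor $u(0)^\alpha \in \const = \Ker\der$ is annihilated (it is a product of $\evl$-values and $\der\evl = 0$), the factor $f$ becomes $f'$, the differential monomial $u^\beta$ differentiates into a $K$-combination of monomials, and the section axiom~\eqref{eq:section-axiom} in the form $\der\cum = 1$ peels the outermost integral off $J$, so that the corresponding summand is $f_1 u^{\gamma_1}$ times a shorter nested integral, absorbed into the prefix; each summand again has the shape~\eqref{eq:sumf}. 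Applying $\cum$, I first pull the constant $u(0)^\alpha$ outside the integral, leaving $\cum(f u^\beta J)$, which is literally a nested integral with new outermost integrand $f u^\beta$. Applying $\evl$ and using multiplicativity~\eqref{eq:mult}, $\evl$ fixes $u(0)^\alpha$ (by $\evl^2 = \evl$), converts $u^\beta$ into the constant $u(0)^\beta$, and kills any non-empty trailing integral through $\evl\cum = 0$ from~\eqref{eq:ad_identities}; the outcome is either $0$ or $(\evl f)\,u(0)^{\alpha+\beta}$, both in $\canf_0$. This closes the induction.

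The step that needs the most care — and the one I expect to be the crux — is pulling the constants out of $\cum$. The factors $u_i(0) = \evl(u_i)$ are \emph{not} of the form $\evl f$ with $f \in \galg$, so the identity $\cum(\evl f)g = (\evl f)\cum g$ from~\eqref{eq:ad_identities} does not apply verbatim; what is genuinely required is $\const$-linearity of $\cum$ for \emph{all} constants, including those manufactured by evaluating monomials in $u$. This is precisely where the variety must carry the \emph{differential} Baxter axiom~\eqref{eq:int-by-parts} rather than only the pure Baxter axiom: by Proposition~\ref{prop:baxter-clinear}, the differential Baxter axiom forces $\cum$ to be $\const$-linear in every algebra of the variety, hence in $\intdiffpol{u}$. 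Finally, I would stress that the lemma claims only that~\eqref{eq:sumf} is \emph{sufficient} to represent every polynomial; the representation is manifestly non-unique (through the shuffle relations and $K$-linearity of $\cum$), and cutting $\canf_0$ down to a genuine system of canonical forms $\canf$ is deferred, to be settled afterwards via Proposition~\ref{prop:canforms}.
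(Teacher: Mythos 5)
Your proof is correct and follows essentially the same route as the paper, which disposes of this lemma in one line as ``induction on the structure of terms, using the identities~\eqref{eq:baxter-axiom}, \eqref{eq:mult}, \eqref{eq:shuffle} and~\eqref{eq:ad_identities}''; you have simply spelled out the closure of the span of~\eqref{eq:sumf} under each term constructor, which is exactly what that induction amounts to. Your remark on pulling $u(0)^{\alpha}$ out of $\cum$ is well taken but resolves even more directly than via Proposition~\ref{prop:baxter-clinear}: by multiplicativity of $\evl$ one has $u(0)^{\alpha}=\evl(u^{\alpha})$, so the identity $\cum(\evl f)g=(\evl f)\cum g$ of~\eqref{eq:ad_identities}, read as an identity of the variety with $f$ ranging over arbitrary terms, applies verbatim.
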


\begin{proof}
The proof is done by induction on the structure of terms, using the above identities \eqref{eq:baxter-axiom},~\eqref{eq:mult},~\eqref{eq:shuffle} and~\eqref{eq:ad_identities} of integro-differential algebras.
\qed
\end{proof}

\noindent With the aid of the previous lemma we can determine the constants of $\intdiffpol{u}$.

\begin{proposition}
  \label{prop:const}
  Every constant in $\intdiffpol{u}$ is represented as a finite sum $\sum_{\alpha} c_{\alpha} u(0)^{\alpha}$ with constants $c_{\alpha}$ in $\galg$.
\end{proposition}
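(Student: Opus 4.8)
The plan is to use the evaluation $\evl = 1 - \cum\der$ as a projector onto the constants. Since $\intdiffpol{u}$ is itself an integro-differential algebra (being the polynomial algebra in that variety), the evaluation $\evl$ is a $K$-algebra homomorphism by~\eqref{eq:mult} and fixes every element of $\const = \Ker{\der}$: if $P$ is a constant then $\der P = 0$, so $\evl(P) = P - \cum\der P = P$. The entire argument therefore reduces to computing $\evl(P)$ from the representation furnished by Lemma~\ref{lem:sumf} and observing that the outcome already has the claimed shape.

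First I would invoke Lemma~\ref{lem:sumf} to write $P$ as a finite $K$-linear combination of terms $f\,u(0)^{\alpha}u^{\beta}\cum f_1 u^{\gamma_1}\cum\cdots\cum f_n u^{\gamma_n}$ and apply $\evl$ term by term, distributing it over products by multiplicativity. The decisive simplification comes from the auxiliary identities~\eqref{eq:ad_identities}: using $\evl\cum = 0$, every term carrying at least one integral ($n \ge 1$) is annihilated, since such a term factors as $(f\,u(0)^{\alpha}u^{\beta})\cdot\cum(\cdots)$ and $\evl$ vanishes on the second factor. Thus $\evl(P)$ collapses to a $K$-linear combination of the integral-free terms $f\,u(0)^{\alpha}u^{\beta}$.

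Next I would evaluate each surviving term. Multiplicativity gives $\evl(f\,u(0)^{\alpha}u^{\beta}) = \evl(f)\,\evl(u(0)^{\alpha})\,\evl(u^{\beta})$, where $\evl(u^{\beta}) = \prod_i \evl(u_i)^{\beta_i} = u(0)^{\beta}$ by the meaning of the notation $u(0)$, and $\evl(u(0)^{\alpha}) = u(0)^{\alpha}$ because each factor $u_i(0) = \evl(u_i)$ is $\evl$-invariant by idempotency $\evl^2 = \evl$ in~\eqref{eq:ad_identities}. Moreover $\evl(f)$ lies in $\const_\galg = \Ker{\der_\galg}$, the constants of $\galg$, since the evaluation of $\intdiffpol{u}$ restricts on $\galg$ to the original evaluation. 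Collecting multi-indices, each term becomes $c\,u(0)^{\alpha+\beta}$ with $c \in \const_\galg$, so $\evl(P) = \sum_{\delta} c_\delta\,u(0)^{\delta}$ with all $c_\delta$ constants in $\galg$. Combining this with $\evl(P) = P$ yields the asserted form.

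The computation is short and I do not expect a genuine obstacle; the only point demanding care is the bookkeeping that identifies $\evl$ on the generators $u_i$ and on coefficients $f \in \galg$ with the intended meanings---namely that $\evl(u_i) = u_i(0)$, that the monomials $u(0)^{\alpha}$ are genuinely $\evl$-invariant, and that $\evl$ maps $\galg$ into the constants of $\galg$. Once these identifications are pinned down, the characterization of the constants of $\intdiffpol{u}$ follows at once from the projector property of $\evl$.
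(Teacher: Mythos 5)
Your proof is correct and follows essentially the same route as the paper's: characterize constants as the $\evl$-invariant elements via $\cum\der = 1-\evl$, reduce to the representation of Lemma~\ref{lem:sumf}, and apply multiplicativity of $\evl$ together with the identities~\eqref{eq:ad_identities} (in particular $\evl\cum=0$) to collapse each term to $c_\alpha\,u(0)^{\alpha}$. The paper states this in two sentences; your write-up just makes the term-by-term bookkeeping explicit.
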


\begin{proof}
By the identity $\cum \der = 1 - \evl$, a term $V$ represents a constant in $\intdiffpol{u}$ if and only if $\evl(V)\equiv V$. Since $V$ is congruent to a finite sum of terms of the form~\eqref{eq:sumf} and since $\Img{\evl}=\const$, the identities for $\evl$ imply that $V$ is congruent to a finite sum of terms of the form $c_{\alpha} u(0)^{\alpha}$.
\qed
\end{proof}

The above representation~(\ref{eq:sumf}) of the integro-differential polynomials is \emph{not unique} since for example when trying to integrate differential polynomials by using integration by parts, terms like
\[
\cum f u' \quad\text{and}\quad f u-\cum f' u - f(0) \, u(0)
\]

\noindent are equivalent. It becomes even more tedious to decide that, for instance,
\[
2x \, u(0)^{(3,1)}u^{(1,3,0,4)}\cum (2x^3+3x) \, u^{(1,2,3)}\cum (x+2) \, u^{(2)}
\]
and
\begin{eqnarray*}
& 4x \, u(0)^{(3,1)}u^{(1,3,0,4)}\cum x^3u^{(1,2,3)}\cum x \, u^{(2)} + 6x \, u(0)^{(3,1)}u^{(1,3,0,4)}\cum x \, u^{(1,2,3)}\cum (x+2) \, u^{(2)} \\
& + 12x \, u(0)^{(3,1)}u^{(1,3,0,4)}\cum x \, u^{(1,2,3)}\cum u^{(2)}
\end{eqnarray*}
represent the same polynomial. In general, the following identity holds:

\begin{lemma}
  \label{lem:diff_baxter}
  We have
    \begin{equation}
    \label{eq:diff_baxter_terms}
    \cum V u_k^{\beta_k} u_{k+1}
    \equiv \frac{1}{\beta_k+1}\left(Vu_{k}^{\beta_k + 1}-\cum V'
      u_k^{\beta_k + 1} - V(0) \, u_k(0)^{\beta_k + 1}\right)
  \end{equation}
  where $k,\beta_k \geq 0$.
\end{lemma}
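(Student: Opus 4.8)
The plan is to reduce the stated congruence to the evaluation variant of integration by parts \eqref{eq:int-by-parts-eval}, after first recognizing the integrand as a scalar multiple of an exact derivative. The key observation is that $u_k^{\beta_k} u_{k+1}$ is, up to a constant, the derivative of a single power: applying the Leibniz rule \eqref{eq:leibniz} to $u_k^{\beta_k+1}$ and using $u_{k+1} = \der(u_k) = (u_k)'$ gives $\der(u_k^{\beta_k+1}) = (\beta_k+1)\,u_k^{\beta_k}\,\der(u_k) = (\beta_k+1)\,u_k^{\beta_k}\,u_{k+1}$. Provided $\beta_k+1$ is invertible in $K$ (e.g.\ in characteristic zero), this rearranges to $u_k^{\beta_k} u_{k+1} = \tfrac{1}{\beta_k+1}\,\der(u_k^{\beta_k+1})$, so the left-hand side of \eqref{eq:diff_baxter_terms} equals $\tfrac{1}{\beta_k+1}\,\cum V\,\der(u_k^{\beta_k+1})$.

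Next I would apply integration by parts in its evaluation variant \eqref{eq:int-by-parts-eval}, namely $\cum fg' = fg - \cum f'g - (\evl f)(\evl g)$, with the choices $f = V$ and $g = u_k^{\beta_k+1}$. This yields $\cum V\,\der(u_k^{\beta_k+1}) = V u_k^{\beta_k+1} - \cum V' u_k^{\beta_k+1} - (\evl V)(\evl u_k^{\beta_k+1})$. To bring the boundary term into the form claimed in \eqref{eq:diff_baxter_terms}, I would invoke multiplicativity of the evaluation \eqref{eq:mult}: since $\evl$ is a character, $\evl(u_k^{\beta_k+1}) = (\evl u_k)^{\beta_k+1} = u_k(0)^{\beta_k+1}$, while $\evl V = V(0)$ by the notational convention for $V(0)$. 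Substituting these and multiplying by $\tfrac{1}{\beta_k+1}$ reproduces the right-hand side of \eqref{eq:diff_baxter_terms} exactly.

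There is no genuine obstacle here; the entire content of the lemma is that $u_k^{\beta_k}u_{k+1}$ integrates ``by parts'' because it is an exact derivative, and integration by parts does the rest. The only matters requiring care are bookkeeping ones: all manipulations take place modulo the congruence $\equiv$ on $\term{\Sigma}{\galg}{\{u\}}$, so one should note that the Leibniz rule, the evaluation variant of integration by parts, and multiplicativity of $\evl$ are all consequences of the identities $\identities$ defining the variety of integro-differential algebras and are therefore respected by $\equiv$; and one uses the invertibility of $\beta_k+1$ in $K$.
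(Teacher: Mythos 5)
Your proof is correct, and it rests on the same key identity the paper uses, namely the evaluation variant of integration by parts~\eqref{eq:int-by-parts-eval}; the only difference is in how the integrand is split. The paper writes the left-hand side as $\cum (V u_k^{\beta_k})(u_k)'$, i.e.\ it takes $f = V u_k^{\beta_k}$ and $g = u_k$ in~\eqref{eq:int-by-parts-eval}; the Leibniz rule then regenerates a term $-\beta_k \cum V u_k^{\beta_k} u_{k+1}$ on the right, and the claimed identity follows by collecting this term with the left-hand side and dividing by $\beta_k+1$. You instead absorb the power into $g$ from the start, recognizing $u_k^{\beta_k}u_{k+1}$ as $\tfrac{1}{\beta_k+1}\der(u_k^{\beta_k+1})$ and taking $f = V$, $g = u_k^{\beta_k+1}$, which avoids the self-referential collection step at the cost of an extra (trivial) appeal to multiplicativity of $\evl$ to evaluate $\evl(u_k^{\beta_k+1}) = u_k(0)^{\beta_k+1}$. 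Both computations are one application of~\eqref{eq:int-by-parts-eval} plus the Leibniz rule, and both tacitly use that $\beta_k+1$ is invertible in $K$, which the statement already presupposes by writing $\tfrac{1}{\beta_k+1}$; your remark that all steps are instances of the defining identities $\identities$ and hence respected by $\equiv$ is exactly the right justification for working modulo the congruence.
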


\begin{proof}
  Using \eqref{eq:int-by-parts-eval} and the Leibniz rule, the left-hand side becomes
  \begin{equation*}
    \cum (V u_k^{\beta_k}) (u_{k})' \equiv Vu_{k}^{\beta_k + 1}-\cum V' u_k^{\beta_k + 1} -\beta_k \cum V
    u_k^{\beta_k}u_{k+1} -  V(0) \, u_k(0)^{\beta_k + 1},
  \end{equation*}
  and the equation follows by collecting the $\cum V u_k^{\beta_k}u_{k+1}$ terms.
\qed
\end{proof}

The important point to note here is that if the highest derivative in the differential monomial $u^\beta$ of order $k+1$ appears \emph{linearly}, then the term $\cum f u^\beta$ is congruent to a sum of terms involving differential monomials of order at most $k$. This observation leads us to the following classification of monomials; confer also~\cite{Bilge1992,GelfandDikiui1976}.
\begin{definition}
\label{def:class}
A differential monomial $u^\beta$ is called \emph{quasiconstant} if $\beta = 0$, \emph{quasilinear} if $\beta \not = 0$ and the highest derivative appears linearly; otherwise it is called \emph{functional}. An integro-differential monomial~(\ref{eq:sumf}) is classified according to its outer differential monomial $u^\beta$, and its order is defined to be that of $u^\beta$.
\end{definition}

\begin{proposition}
  \label{prop:surj}
  Every polynomial in $\intdiffpol{u}$ can be represented by a $K$-linear combination of terms of the form
  \begin{equation}
    \label{eq:sumnf}
    f u(0)^{\alpha}u^{\beta}\cum f_1 u^{\gamma_1} \cum \cdots \cum f_n
    u^{\gamma_n},
  \end{equation}
where $f, f_1,\ldots,f_n \in \galg$, the multi-indices $\alpha,\beta$ as well as $n$ may be zero and the $u^{\gamma_1},\ldots,u^{\gamma_n}$ are functional.
\end{proposition}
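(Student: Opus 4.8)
The plan is to start from the representation~\eqref{eq:sumf} furnished by Lemma~\ref{lem:sumf} and to rewrite away every inner differential monomial $u^{\gamma_i}$ that fails to be functional in the sense of Definition~\ref{def:class}, i.e. every quasiconstant one ($\gamma_i=0$) and every quasilinear one whose (positive-order) top derivative occurs linearly. Two moves do the job. To a quasilinear inner integral of order $m\ge 1$ I would apply the integration-by-parts identity~\eqref{eq:diff_baxter_terms} of Lemma~\ref{lem:diff_baxter}, processing always the \emph{innermost} such integral so that the factor $V$ there contains no further integral and $V'$ stays harmless. To a quasiconstant inner integral $\cum f_i$ (with $f_i\in\galg$, so that $\cum f_i\in\galg$) I would use the Baxter axiom~\eqref{eq:baxter-axiom} in the form $\cum f_i\cum Y=(\cum f_i)(\cum Y)-\cum Y\cum f_i$ to dissolve the nesting. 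The boundary contributions generated by~\eqref{eq:diff_baxter_terms} are evaluation constants, hence by Proposition~\ref{prop:const} lie in the span of the $u(0)^{\alpha}$; since $\galg$ is ordinary we have $\const=K$ and $\cum$ is $\const$-linear (Corollary~\ref{cor:strong-weak-baxter}), so these constants can be pulled out through all enclosing integrals to the front, where they merge into the prefix $u(0)^{\alpha}$.

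In more detail, applying~\eqref{eq:diff_baxter_terms} to the innermost quasilinear integral splits it into three kinds of terms: a ``pulled-out'' term in which the factor $u_{m-1}^{\beta+1}$ leaves the integral and is absorbed into the next enclosing integrand (or into the outer prefix $u^{\beta}$), thereby reducing the nesting depth by one; a residual integral $\cum V'u_{m-1}^{\beta+1}$ whose top derivative is now $u_{m-1}$, hence of strictly smaller order $m-1$ (and already functional once $\beta\ge 1$ or $m=1$); and a boundary term $V(0)\,u_{m-1}(0)^{\beta+1}$ treated as a pullable constant. The Baxter move, applied at a quasiconstant level, produces two terms in each of which that level has vanished: in $(\cum f_i)(\cum Y)$ the function $\cum f_i$ merges into the coefficient of the adjacent outer integrand, while in $\cum Y\cum f_i$ it merges into the coefficient of the next inner integrand. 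Neither of these alters any of the remaining monomials $u^{\gamma_j}$, only their coefficients.

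It remains to organize this into a terminating process, and this is the delicate part. I would order terms by the pair $\mu=(q,s)$, where $q$ is the number of quasiconstant inner integrals and $s=\sum_i \mathrm{ord}(u^{\gamma_i})$ sums the orders of the positive-order quasilinear inner monomials, compared lexicographically on $\N^2$. The Baxter move lowers $q$ by one (it removes an integral and creates no new quasiconstant level), so it decreases $\mu$. An application of~\eqref{eq:diff_baxter_terms} creates no quasiconstant level, so $q$ is unchanged while $s$ strictly drops: the residual integral replaces $m$ by at most $m-1$, the boundary term removes it, and in the pulled-out term the removed weight $m$ reappears only as a factor of top order $m-1$ in a \emph{neighbouring} monomial. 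The genuine obstacle is exactly this last point, since the pulled-out factor modifies a different inner monomial; a short case check is needed to confirm that enlarging the neighbour raises its order-weight by at most $m-1$ (it is unchanged when the neighbour already has higher order, and becomes $\le m-1$ otherwise), so that $s$ falls by at least one. Once $\mu$ is seen to decrease under both moves, well-foundedness of the lexicographic order yields termination; an irreducible term has neither a quasiconstant nor a positive-order quasilinear inner monomial, so every $u^{\gamma_i}$ is functional, which is precisely the asserted form~\eqref{eq:sumnf}.
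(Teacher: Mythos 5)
Your two rewriting moves are exactly the ones the paper uses: the pure Baxter axiom~\eqref{eq:baxter-axiom} to dissolve quasiconstant levels (correctly exploiting $\cum f_i\in\galg$, so that no shuffle expansion is needed and both resulting terms really lose a level), and Lemma~\ref{lem:diff_baxter} for quasilinear ones, with the boundary terms pulled to the front as constants. The gap lies in the termination argument, at the point you yourself flag as delicate—but you then analyse the wrong term. Processing the \emph{innermost quasilinear} integral does \emph{not} make the cofactor $V$ of Lemma~\ref{lem:diff_baxter} free of integrals: functional integrals may still be nested inside it. Take $\cum u'\cum u''^2$, whose only non-functional level is the outer one, so that $V=\cum u''^2$. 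Then $V'$ contains $J'$ for the inner nesting $J$, and by the section axiom $J'$ \emph{dissolves} the next inner integral and merges its differential monomial into the level being processed. The residual term is therefore not of the form ``$\cum V'u_{m-1}^{\beta+1}$ with top derivative $u_{m-1}$'': in the example it is $\cum u\,u''^2$, of order $2>m-1=0$. So your claim that the residual integral has ``strictly smaller order $m-1$'' is false in general, and your case check for the decrease of $s$ omits precisely the configuration in which an inner monomial of order $>m-1$ is absorbed. The measure $(q,s)$ does in fact still decrease there—the absorbed level's own contribution to $s$ vanishes, or $q$ drops if that level was quasiconstant—but this is the most intricate case of the whole proof, and it is missing rather than dispatched.

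For comparison, the paper's proof sidesteps this entirely by a noetherian induction on the pair (depth of the nesting, order of the \emph{outermost} monomial $u^{\gamma_1}$), processing the outermost level only after the inner part has been normalised by the induction hypothesis. All the troublesome contributions—the pulled-out term, the boundary term, and the sub-term of $\cum V'\cdots$ in which $J'$ dissolves a level—have strictly smaller \emph{depth} and are handed to the induction hypothesis with no order bookkeeping at all; only the remaining parts of $\cum V'\cdots$ stay at the same depth, and there the outer order visibly drops from $m$ to at most $m-1$. Either add the missing case to your $(q,s)$ argument or switch to the paper's depth-first ordering, which makes the issue disappear. (A last, minor point shared with the paper: an inner monomial equal to $u$ itself is quasilinear of order $0$ under Definition~\ref{def:class} and is untouched by either move, so the concluding ``every $u^{\gamma_i}$ is functional'' needs the tacit convention that order-zero quasilinear monomials count as functional.)
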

\begin{proof}
By Lemma~\ref{lem:sumf} we can represent every polynomial in $\intdiffpol{u}$ as a $K$-linear combination of terms of the form
 \begin{equation}
    \label{eq:sumnof}
    f u(0)^{\alpha}u^{\beta}\cum f_{1} u^{\gamma_1} \cum \cdots \cum f_{n} u^{\gamma_n},
  \end{equation}
where the multi-indices and $n$ can also be zero. Let us first prove by induction on depth that every term can be written as in~\eqref{eq:sumnof} but with nonzero multi-indices $\gamma_k$. The base case $n=1$ is trivial since $\cum f_1$ can be pulled to the front. For the induction step we proceed from right to left, using the identity
\begin{equation*}
\cum f \cum V \equiv \cum f \cdot \cum V -\cum V \cum f
\end{equation*}
implied by the pure Baxter axiom~\eqref{eq:baxter-axiom}.

For proving that every multi-index $\gamma_k$ in~\eqref{eq:sumnof} can be made functional, we use noetherian induction with respect to the preorder on $J = \cum f_1 u^{\gamma_1} \cum \cdots \cum f_n u^{\gamma_n}$ that first compares depth and then the order of $ u^{\gamma_1}$. One readily checks that the left-hand side of~\eqref{eq:diff_baxter_terms} is greater than the right-hand side with respect to this preorder, provided that $V$ is of this form.

 Applying Lemma~\ref{lem:diff_baxter} inductively, a term $\cum f_1 u^{\gamma_1}$ is transformed to a sum of terms involving only integral terms with functional differential monomials, and the base case $n=1$ follows. As induction hypothesis, we assume that all terms that are smaller than $J = f u(0)^{\alpha}u^{\beta}\cum f_{1} u^{\gamma_1} \cum \cdots \cum f_{n} u^{\gamma_n}$ can be written as a sum of terms involving only functional monomials. Since $\cum f_2 u^{\gamma_2} \cum \cdots \cum f_n u^{\gamma_n}$ is smaller than $J$, it can be written as sum of terms involving only functional monomials; we may thus assume that $u^{\gamma_2}, \ldots, u^{\gamma_n}$ are all functional. Since $\gamma_1$ is nonzero, we are left with the case when $u^{\gamma_1}$ is quasilinear. Applying again Lemma~\ref{lem:diff_baxter} inductively, we can replace $u^{\gamma_1}$ in $J$ by a sum of terms involving only integral terms with functional differential monomials. The induction step follows then by the linearity of $\cum$.
\qed
\end{proof}

For \emph{implementing} the integro-differential polynomials in \tma\, we use the functor hierarchy described in Sect.~\ref{sec:theorema}. The multi-index representation $u^{\beta}$ for terms of the form~\eqref{eq:diffpol} is realized by the monoid $\N^*$ of natural tuples with finitely many nonzero entries, generated by a functor named $\mathtt{TuplesMonoid}$. The nested integrals $\cum f_1 u^{\gamma_1} \cum \cdots \cum f^n u^{\gamma_n}$ are represented as lists of pairs of the form $\langle f_k,\gamma_k \rangle$, with $f_k \in \galg$ and $\gamma_k \in \N^*$. The terms of the form~\eqref{eq:sumf} are then constructed via a cartesian product of monoids as follows:

\begin{mmacode}
\includegraphics[scale=0.75]{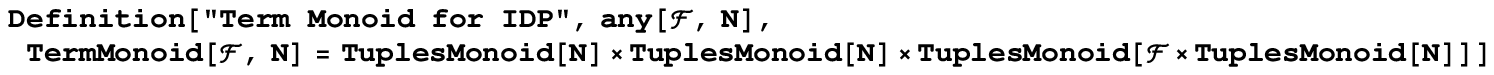}
\end{mmacode}

Using this construction, the integro-differential polynomials are built up by the functor $\mathtt{FreeModule[\galg,B]}$ that constructs the $\galg$-module with basis $\mathtt{B}$. It is instantiated with $\galg$ being a given integro-differential algebra and $\mathtt{B}$ the term monoid just described. We will equip this domain with the operations defined as below, using a functor named $\mathtt{IntDiffPol[\galg,K]}$. Later in this section we will present some sample computations.

\subsection{Canonical Forms for Integro-Differential Polynomials.}

It is clear that $K$-linear combinations of terms of the form~\eqref{eq:sumnf} are still not canonical forms for the integro-differential polynomials since by the linearity of the integral, terms like
\begin{displaymath}
f \cum (g+h) u \quad \text{and} \quad f \cum g u + f \cum h u
\end{displaymath}
\noindent or terms like
\begin{displaymath}
f\cum \lambda g u \quad \text{and} \quad \lambda f \cum g u
\end{displaymath}
with $f, g, h \in \galg$ and $\lambda \in K$ represent the same polynomial. To solve this problem, we can consider terms of the form~\eqref{eq:sumnf} modulo these identities coming from \emph{linearity} in the ``coefficient'' $f$ and the integral, in analogy to the ideal $\basis$ introduced in Sect.~\ref{sec:intdiffop} for $\intdiffopbas$. Confer also~\cite{GuoKeigher2008}, where the tensor product is employed for constructing free objects in differential Rota-Baxter algebras. In the following, we assume for simplicity that $\galg$ is an ordinary integro-differential algebra.

More precisely, let $\canf$ denote the set of terms of the form~\eqref{eq:sumnf} and consider the free $K$-vector space generated by $\canf$. We identify terms
\begin{equation*}
f u(0)^\alpha u^\beta \cum f_1 u^{\gamma_1} \cum \cdots \cum f_n u^{\gamma_n}
\end{equation*}
with the corresponding basis elements in this vector space. Then we factor out the subspace generated by the following identities (analogous to the construction of the tensor product):
\begin{flalign*}
& f U \cum f_{1} u^{\gamma_1} \cum \cdots \cum (f_k+\tilde{f}_k) u^{\gamma_k} \cum \cdots\cum f_{n} u^{\gamma_n}\\
& \quad = f U \cum f_{1} u^{\gamma_1} \cum \cdots \cum f_k u^{\gamma_k} \cum \cdots\cum f_{n} u^{\gamma_n} + f U \cum f_{1} u^{\gamma_1} \cum \cdots \cum \tilde{f}_k u^{\gamma_k} \cum \cdots\cum f_{n} u^{\gamma_n}\\[1ex]
& f U \cum f_{1} u^{\gamma_1} \cum \cdots \cum (\lambda f_k) u^{\gamma_k} \cum \cdots\cum f_{n} u^{\gamma_n} = \lambda f U\cum f_{1} u^{\gamma_1} \cum \cdots \cum f_k u^{\gamma_k} \cum \cdots\cum f_{n} u^{\gamma_n}
\end{flalign*}
Here $U$ is short for $u(0)^{\alpha}u^{\beta}$, and there are actually two more identities of the same type for ensuring $K$-linearity in $f$. We write $\factor{\mathcal{R}}$ for this quotient space and denote the corresponding equivalence classes by
\begin{equation}
\label{eq:factorbasis}
\factor{f u(0)^\alpha u^\beta \cum f_1 u^{\gamma_1} \cum \cdots \cum f_n u^{\gamma_n}}.
\end{equation}
By construction, the quotient module $\factor{\canf}$ now respects the \emph{linearity relations}
\begin{flalign*}
& \factor{f U \cum f_{1} u^{\gamma_1} \cum \cdots \cum (f_k+\tilde{f}_k) u^{\gamma_k} \cum \cdots\cum f_{n} u^{\gamma_n}}\\
& \quad = \factor{f U \cum f_{1} u^{\gamma_1} \cum \cdots \cum f_k \cum \cdots\cum f_{n} u^{\gamma_n}} + \factor{f U \cum f_{1} u^{\gamma_1} \cum \cdots \cum \tilde{f}_k \cum \cdots\cum f_{n} u^{\gamma_n}}\\[1ex]
& \factor{f U \cum f_{1} u^{\gamma_1} \cum \cdots \cum (\lambda f_k) u^{\gamma_k} \cum \cdots\cum f_{n} u^{\gamma_n}} = \lambda \factor{f U \cum f_{1} u^{\gamma_1} \cum \cdots \cum f_k u^{\gamma_k} \cum \cdots\cum f_{n} u^{\gamma_n}}.
\end{flalign*}
together with the ones for linearity in $f$.

As for the tensor product, we have canonical forms for the factor space by expanding the ``coefficient'' $f$ and all the $f_k$ in~\eqref{eq:factorbasis} with respect to a $K$-\emph{basis} $\mathcal{B}$ for $\galg$, assuming $\mathcal{B}$ contains $1$. Then every polynomial can be written as a $K$-linear combination of terms of the form
\begin{equation}
    \label{eq:sumnfbasis}
    b u(0)^{\alpha}u^{\beta}\cum b_1 u^{\gamma_1} \cum \cdots \cum b_n u^{\gamma_n},
\end{equation}
where $b, b_1,\ldots,b_n \in \mathcal{B}$ with the condition on multi-indices as in Proposition~\ref{prop:surj}.

To show that terms of the form~\eqref{eq:sumnfbasis} are canonical forms for the integro-differential polynomials, we endow the quotient space $\factor{\canf}$ with an \emph{integro-differential structure} and invoke Proposition~\ref{prop:canforms}. For this we define the operations on the generators~\eqref{eq:factorbasis} and check that they respect the above linearity relations on $\factor{\canf}$.

First, we define a \emph{multiplication} on $\factor{\canf}$. Let $\fcanf \subseteq \canf$ denote the $K$-subspace generated by integral terms $\cum f_1 u^{\gamma_1} \cum \cdots \cum f_n u^{\gamma_n}$, including $1 \in \canf$ as the case $n=0$. Clearly, the nesting operation~\eqref{eq:nest} can be defined in a completely analogous manner on such integral terms (the only difference being that we have now derivatives of the indeterminate). Since it is clearly $K$-linear, it induces an operation $\nest\colon \factor{\fcanf} \times \factor{\fcanf} \rightarrow \factor{\fcanf}$. The next step is to define the shuffle product on $\fcanf$ just as in~\eqref{eq:shuffle2}, again with obvious modifications. Passing to the quotient yields the shuffle product $\,\cdot\,\colon \factor{\fcanf} \times \factor{\fcanf} \rightarrow \factor{\fcanf}$. This product is finally extending to a multiplication on all of $\factor{\canf}$ by setting
\begin{displaymath}
 \factor{f u(0)^{\alpha} u^{\beta} J} \factor{\tilde{f} u(0)^{\tilde{\alpha}} u^{\tilde{\beta}} \tilde{J}}= \factor{f \! \tilde{f} u(0)^{\alpha + \tilde{\alpha}} u^{\beta+\tilde{\beta}} (J \cdot \tilde{J})}
\end{displaymath}
where $J$ and $\tilde J$ range over $\fcanf$. Let us compute an example:

\begin{mmacode}
\includegraphics[scale=0.75]{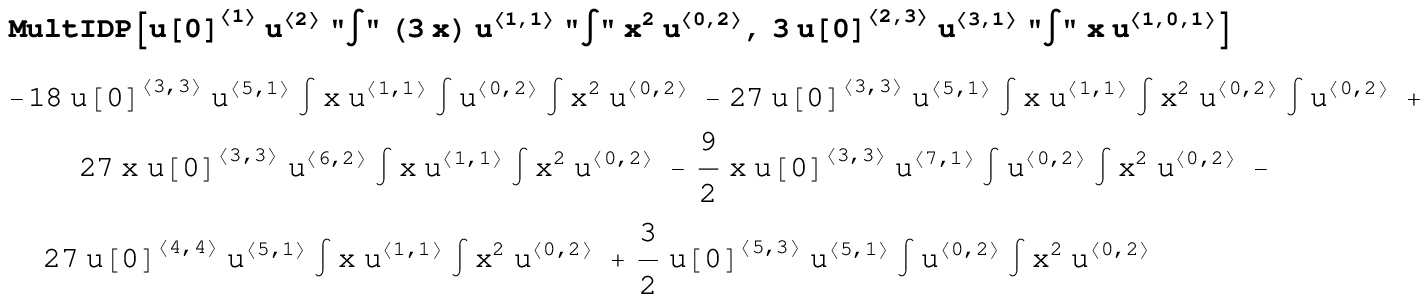}
\end{mmacode}

\noindent Since the multiplication on $\galg$ and the shuffle product are commutative, associative, and distributive over addition, the multiplication on $\factor{\canf}$ is well-defined and gives $\factor{\canf}$ the structure of a commutative $K$-algebra.

The definition of a \emph{derivation} $\der$ on this algebra is straightforward, using the fact that it should respect $K$-linearity and the Leibniz rule (treating also the $u(0)^\alpha$ as constants), that it should restrict to the derivation on differential polynomials (which in turn restricts to the derivation on $\galg$), and finally that it should also satisfy the section axiom~\eqref{eq:section-axiom}. Here is a sample computation:

\begin{mmacode}
\includegraphics[scale=0.75]{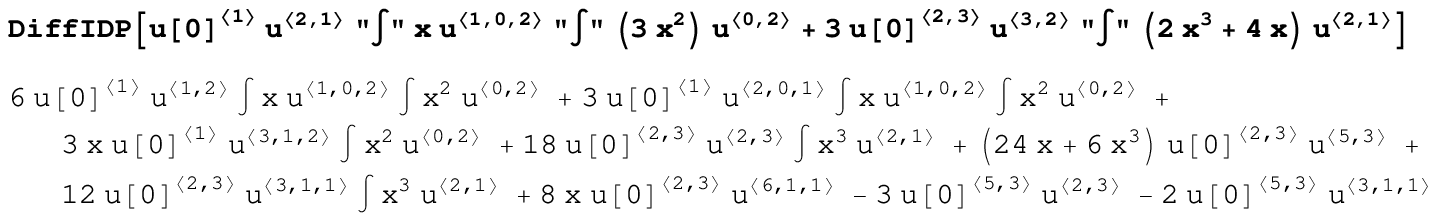}
\end{mmacode}

\noindent Using the $K$-linearity of this derivation, one verifies immediately that it is well-defined. From the definition it is clear that $K$-linear combinations of generators of the form $[u(0)^\alpha]$ are constants for $\der$, and one can also check that all constants are actually of this form.

Finally, we define a $K$-linear \emph{integral} on the differential $K$-algebra $(\factor{\canf},\der)$. Since we have to distinguish three different types of integrals, here and subsequently we will use the following notation: the usual big integral sign $\displaystyle\integral$ for the integration to be defined, the small integral sign $\cum$ for the elements of $\canf$ as we have used it before, and $\intf$ for the integral on $\galg$.

The definition of the integral on $\factor{\canf}$ is recursive, first by depth and then by order of $u^\beta$, following the classification of monomials from Definition~\ref{def:class}. In the \emph{base case} of zero depth and order, we put
\begin{equation}
\label{eq:int_base}
\integral \factor{f u(0)^\alpha} = \factor{\intf f} \factor{ u(0)^\alpha}.
\end{equation}
Turning to \emph{quasiconstant} monomials, we use the following definition (which actually includes the base case when $J=1$):
\begin{equation}
\label{eq:int_quasiconst}
  \integral \factor{f u(0)^\alpha J} =
  \factor{u(0)^\alpha (\intf f) J} -
  \factor{u(0)^\alpha \cum (\intf f) J'}.
\end{equation}

\noindent In the \emph{quasilinear} case we write the generators in form
\begin{displaymath}
 \factor{f u(0)^{\alpha} V u_k^{\beta_k} u_{k+1} J} \quad \text{with}
 \quad V= u_0^{\beta_0} \cdots u_{k-1}^{\beta_{k-1}}
\end{displaymath}
and construct the integral via~(\ref{eq:diff_baxter_terms}). Writing $s = \beta_k +1$, we have $u_k^{\beta_k} u_{k+1} = (u_k^s)'/s$, so we can define
\begin{equation}
\label{eq:int_quasilin}
  \integral \factor{f u(0)^{\alpha} V (u_k^s)' J}
  = \factor{f u(0)^{\alpha} V u_{k}^s J} - \factor{u(0)^{\alpha}} \integral\factor{f V J}' \factor{u_k^s} -  f(0) \, \factor{u(0)^{\alpha+\beta} \kron{J} },
\end{equation}
where we write $f(0)$ for $\evl(f)$ and $\kron{J}$ is $1$ for $J=1$ and zero otherwise. In the \emph{functional} case, we set
\begin{equation}
\label{eq:int_func}
   \integral \factor{f u(0)^\alpha u^\beta J} = \factor{u(0)^\alpha \cum
     f u^\beta J},
\end{equation}
so here we can just let the integral sign slip into the equivalence class. One may check that the integral is well-defined in all the cases by an easy induction proof, using $K$-linearity of the integral, the evaluation on $\galg$, and the derivation on $\factor{\canf}$.

Here is a small \emph{example} of an integral computed in the quasiconstant case (note that $\mathtt{IntIDP}$ corresponds to the big integral and \texttt{"}$\mathtt{\cum}$\texttt{"} to $\cum$ in our notation):
\begin{mmacode}
\includegraphics[scale=0.75]{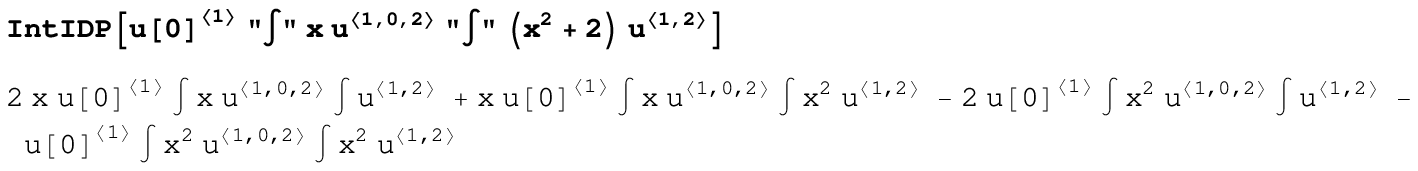}
\end{mmacode}
The next example computes an integral in the quasilinear case:
\begin{mmacode}
\includegraphics[scale=0.75]{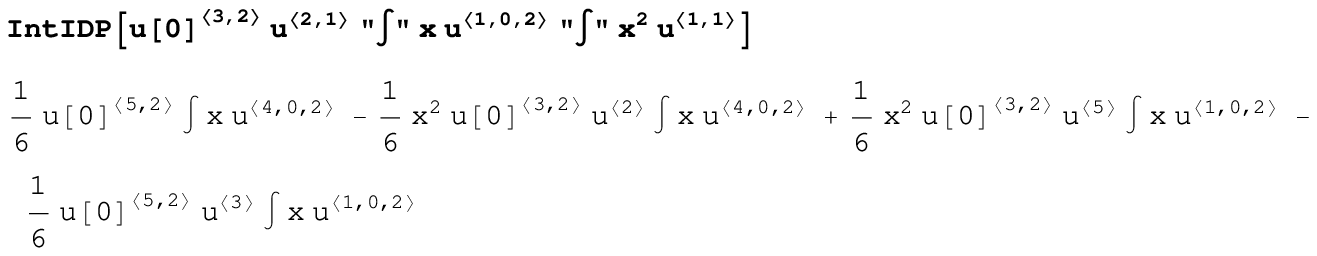}
\end{mmacode}
Note that all differential monomials within integrals are functional again, as it must be by our definition of~$\factor{\canf}$.

By construction the integral defined above is a section of the derivation on $\factor{\canf}$. So for showing that $\factor{\canf}$ is an \emph{integro-differential algebra} with operations, it remains only to prove the differential Baxter axiom~\eqref{eq:diff-baxter-axiom}. Equivalently, we can show that the evaluation
\[
\evl=1-\int \der
\]
is multiplicative by Corollary~\ref{cor:intdiffalg-ints}.

Recall that the algebra of constants $\const$ in $(\factor{\canf},\der)$ consists of $K$-linear combinations of generators of the form $\factor{u(0)^\alpha}$. By a short induction proof, we see that
\begin{equation}
 \label{eq:homogen_const}
 \integral \factor{u(0)^\alpha} \factor{R} = \factor{u(0)^\alpha} \integral
 \factor{R}.
\end{equation}
Hence the integral is homogeneous over the constants.

For showing that the \emph{evaluation} is multiplicative, we first reassure ourselves that it operates in the expected way on integro-differential monomials.
\begin{lemma}
\label{lem:eval_mon}
We have
\[
\evl \, \factor{f u(0)^{\alpha} u^{\beta} J} = f(0)\,\factor{u(0)^{\alpha+\beta} \kron{J}},
\]
where $\kron{J}$ is $1$ for $J=1$ and zero otherwise as in~\eqref{eq:int_quasilin}.
\end{lemma}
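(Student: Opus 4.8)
The plan is to exploit the defining relation $\evl = 1 - \integral\der$ and reduce the whole statement to a direct computation of $\integral\der$ on a single monomial of the form~\eqref{eq:sumnf}, say $\factor{f u(0)^{\alpha} u^{\beta} J}$ with $J = \cum f_1 u^{\gamma_1} J_1$ (where $J_1 = \cum f_2 u^{\gamma_2} \cum \cdots \cum f_n u^{\gamma_n}$, possibly $1$, and every $u^{\gamma_k}$ functional). Since $\evl$ is $K$-linear it suffices to treat one such monomial, and I would establish the equivalent identity $\integral\der\,\factor{f u(0)^{\alpha} u^{\beta} J} = \factor{f u(0)^{\alpha} u^{\beta} J} - f(0)\,\factor{u(0)^{\alpha+\beta}\kron{J}}$. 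The first move is to expand the derivative by the Leibniz rule and the section axiom, treating $u(0)^{\alpha}$ as a constant: this yields $\der\factor{f u(0)^{\alpha} u^{\beta} J} = \factor{u(0)^{\alpha}(fu^{\beta})' J} + \factor{u(0)^{\alpha} fu^{\beta}\,\der J}$, where the second summand $(II)$ vanishes for $J=1$ and otherwise equals $\factor{u(0)^{\alpha} fu^{\beta} f_1 u^{\gamma_1} J_1}$ once the outermost integral of $J$ is unwrapped. I would then split according to the classification of $u^{\beta}$ from Definition~\ref{def:class}.

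For the \emph{quasiconstant} case $\beta = 0$ the computation is short. When $J=1$ one uses the base rule~\eqref{eq:int_base} together with $\intf f' = f - f(0)$, obtaining $\evl\factor{f u(0)^{\alpha}} = f(0)\factor{u(0)^{\alpha}}$. When $J\neq 1$ one applies the quasiconstant rule~\eqref{eq:int_quasiconst} to the first summand and the functional rule~\eqref{eq:int_func} to $(II)$ (legitimate since $u^{\gamma_1}$ is functional); the two $f(0)$-contributions combine so that $\integral\der$ returns the entire monomial, i.e. $\evl$ annihilates it, in agreement with $\kron{J}=0$.

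The essential case is $\beta \neq 0$ (quasilinear or functional). Here $\der(u^{\beta}) = \sum_i \beta_i u^{\beta - e_i + e_{i+1}}$ contains exactly one order-raising term, coming from differentiating the highest derivative $u_m$. Writing $u^{\beta} = V u_m^{\beta_m}$ with $V = u_0^{\beta_0}\cdots u_{m-1}^{\beta_{m-1}}$ and $s = \beta_m$, this term is proportional to $V(u_m^{s})'$ and is quasilinear, so it must be integrated through the integration-by-parts rule~\eqref{eq:int_quasilin}. The crux is that the recursive integral occurring there (the middle summand) need \emph{not} be evaluated: using the homogeneity~\eqref{eq:homogen_const} of $\integral$ over the constants to pull $\factor{u(0)^{\alpha}}$ inside, and expanding $\der(fVJ)\cdot u_m^{s}$ by the Leibniz rule and the section axiom, one finds that this middle summand is precisely the sum of the integrals of all the remaining pieces—the $f'$-term $f'u^{\beta}J$, the lower-order terms $\sum_{i<m}\beta_i f u^{\beta - e_i + e_{i+1}} J$ arising from $\der V$, and the integral of $(II)$, which reassembles as $ff_1 u^{\beta+\gamma_1} J_1$. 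All of these cancel formally against the corresponding integrals already present in $\integral\der\,\factor{f u(0)^{\alpha} u^{\beta} J}$, leaving only the boundary term $\factor{f u(0)^{\alpha} u^{\beta} J}$ and the evaluation term $f(0)\factor{u(0)^{\alpha+\beta}\kron{J}}$ contributed by~\eqref{eq:int_quasilin}, which survives exactly when $J=1$.

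The main obstacle—and the step deserving the most care—is this matching of the recursive term of~\eqref{eq:int_quasilin} with the leftover pieces: one must check that $\der(fVJ)\cdot u_m^{s}$ reassembles $f'u^{\beta}J$, the lower terms, and $ff_1 u^{\beta+\gamma_1}J_1$ with exactly the right multi-indices, so that the cancellation is formal (of the shape $\integral X - \integral X = 0$) and no induction on the order of $u^{\beta}$ is needed. It is equally worth verifying at the outset that the evaluation term carries the multi-index $\alpha+\beta$ and that $\kron{J}$ faithfully records whether $J=1$, since these are precisely what separate the surviving constant $f(0)\factor{u(0)^{\alpha+\beta}}$ from $0$.
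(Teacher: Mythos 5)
Your argument is correct and follows essentially the same route as the paper's own proof: both reduce the claim to computing $\integral\der$ on a single generator via $\evl = 1-\integral\der$, drop $u(0)^\alpha$ by $\const$-linearity (i.e.~\eqref{eq:homogen_const}), and split into the case $\beta=0$ (handled by the quasiconstant rule~\eqref{eq:int_quasiconst} together with $\intf f' = f-f(0)$ and the functional rule~\eqref{eq:int_func}) and the case $\beta\neq 0$ (handled by writing $u^\beta = V u_k^s$ and invoking the quasilinear rule~\eqref{eq:int_quasilin} with the Leibniz rule). Your explicit matching of the middle summand of~\eqref{eq:int_quasilin} against the integral of $\der(fVJ)\cdot u_k^s$ is precisely the formal cancellation that the paper compresses into the phrase ``by the quasilinear case and the Leibniz rule.''
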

\begin{proof}
Note that $\evl$ is $\const$-linear by~\eqref{eq:homogen_const}, so we can omit the factor $u(0)^{\alpha}$. Assume first $\beta=0$. Then by the quasiconstant case~\eqref{eq:int_quasiconst} of the definition of the integral, we have
\[
\evl \, \factor{f J} = \factor{fJ} - \int \factor{fJ}' = \factor{fJ} - \factor{(\intf f')J} + \int \factor{(\intf f') J'} - \int \factor{fJ'},
\]
which by $\intf f' = f - f(0)$ gives
\[
f(0) \, \factor{J} - f(0) \int \factor{J}' = f(0) \factor{\kron{J}}
\]
because
\[
 \int \factor{J}' = \factor{J}\quad\text{for} \quad J\neq 1
\]
by the functional case~\eqref{eq:int_func} and zero for $J=1$. If $\beta \neq 0$ is of order $k$, we write $u^\beta=V u_k^s$ with $s\neq 0$, and we compute
\[
\evl \, \factor{f u^\beta J} = \factor{f V u_k^s  J} - \int \factor{f V u_k^s J}'=
f(0)\, \factor{u(0)^\beta \kron{J}}
\]
by the quasilinear case~\eqref{eq:int_quasilin} and the Leibniz rule.
\qed
\end{proof}

\begin{theorem}
\label{thm:baxter}
With the operations defined as above, $(\factor{\canf},\der,\smash{\displaystyle \int})$ has the structure of an integro-differential algebra.
\end{theorem}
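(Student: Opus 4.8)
The plan is to invoke the characterization of integrals from Corollary~\ref{cor:intdiffalg-ints}. By the construction preceding the theorem we already know that $(\factor{\canf},\der)$ is a commutative differential $K$-algebra and that $\int$ is a $K$-linear section of $\der$, so the only outstanding requirement in Definition~\ref{def:intdiffalg} is the differential Baxter axiom~\eqref{eq:diff-baxter-axiom}. By Corollary~\ref{cor:intdiffalg-ints} this axiom is equivalent to multiplicativity of the evaluation $\evl = 1 - \int\der$, so I would reduce the whole theorem to proving
\[
\evl(F G) = \evl(F)\,\evl(G) \qquad \text{for all } F,G \in \factor{\canf}.
\]

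Since $\evl$ is $K$-linear and both the product on $\factor{\canf}$ and the map $(F,G)\mapsto\evl(F)\evl(G)$ are bilinear, it suffices to verify this identity on pairs of generators $M_1 = \factor{f u(0)^{\alpha} u^{\beta} J}$ and $M_2 = \factor{g u(0)^{\tilde\alpha} u^{\tilde\beta} \tilde J}$. The two key inputs are the multiplication formula, which gives $M_1 M_2 = \factor{f g\, u(0)^{\alpha+\tilde\alpha} u^{\beta+\tilde\beta}(J\cdot\tilde J)}$, and Lemma~\ref{lem:eval_mon}, which computes $\evl$ on any such generator as $f(0)\factor{u(0)^{\alpha+\beta}\kron J}$, that is, as $f(0)\factor{u(0)^{\alpha+\beta}}$ when $J=1$ and as $0$ otherwise.

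The argument then splits into two cases according to whether the nested integrals are trivial. If $J=1$ and $\tilde J=1$, then $J\cdot\tilde J = 1$, so applying Lemma~\ref{lem:eval_mon} to $M_1 M_2$ yields $(fg)(0)\,\factor{u(0)^{\alpha+\tilde\alpha+\beta+\tilde\beta}}$; comparing with $\evl(M_1)\evl(M_2) = f(0)g(0)\,\factor{u(0)^{\alpha+\beta}}\factor{u(0)^{\tilde\alpha+\tilde\beta}}$ and using that $\evl$ restricts to a character on $\galg$ (so $(fg)(0)=f(0)g(0)$) together with the fact that multiplying the constant generators merely adds multi-indices, I would conclude that both sides agree. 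If instead at least one of $J,\tilde J$ is nontrivial, then every summand of the shuffle product $J\cdot\tilde J$ has positive depth, whence $\kron{\,\cdot\,}$ vanishes on each term and $\evl(M_1 M_2)=0$; on the other side, the factor whose integral part is nontrivial already evaluates to $0$, so $\evl(M_1)\evl(M_2)=0$ as well.

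The step I expect to require the most care is this second case: one must argue that the shuffle product $J\cdot\tilde J$ never produces a depth-zero term unless both factors were already $1$. This is precisely the statement that depth (the number of nested integral signs) is additive under the shuffle product~\eqref{eq:shuffle2}, which follows at once from its recursive definition but is the one structural fact that makes the annihilation in the nontrivial case work. Everything else is routine bookkeeping with multi-indices, supported by the $\const$-linearity of $\evl$ recorded in~\eqref{eq:homogen_const}.
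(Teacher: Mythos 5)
Your proof is correct and follows essentially the same route as the paper: reduce the theorem via Corollary~\ref{cor:intdiffalg-ints} to multiplicativity of $\evl$, check it on generators using Lemma~\ref{lem:eval_mon} and the multiplication formula, and split into the case $J=\tilde J=1$ (handled by multiplicativity of $\evl$ on $\galg$) versus the case where a nontrivial integral factor forces both sides to vanish because every shuffle summand remains a nontrivial integral term. The one point you flag as delicate---that the shuffle product of integral terms never collapses to depth zero---is exactly the observation the paper relies on as well.
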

\begin{proof}
As mentioned above, it suffices to prove that $\evl$ is multiplicative, and we need only do this on the generators. Again omitting the $u(0)^{\alpha}$, we have to check that
\[
\evl \, \factor{f  u^{\beta} J} \factor{\tilde f  u^{\tilde \beta} \tilde J} = \evl \factor{f \tilde f \,  u^{\beta + \tilde \beta } \, ( J \cdot \tilde J) } =\evl\factor{f  u^{\beta} J} \cdot \evl \factor{\tilde f  u^{\tilde \beta} \tilde J}.
\]
The case $J=\tilde J =1$ follows directly from Lemma~\ref{lem:eval_mon} and the multiplicativity of $\evl$ in $\galg$. Otherwise the shuffle product $J \cdot \tilde J$ is a sum of integral terms, each of them unequal one. Using again Lemma~\ref{lem:eval_mon} and the linearity of $\evl$, the evaluation of this sum vanishes, as does
$\evl\factor{f  u^{\beta} J} \cdot \evl \factor{\tilde f  u^{\tilde \beta} \tilde J}$.
\qed
\end{proof}

Since $\factor{\canf}$ is an integro-differential algebra, we can conclude by Proposition~\ref{prop:surj} and Proposition~\ref{prop:canforms} that $\factor{\canf}$ leads to canonical forms for integro-differential polynomials, up to the linearity relations: After a choice of basis, terms of the form~\eqref{eq:sumnfbasis} constitute a system of \emph{canonical forms} for $\intdiffpol{u}$. In the \tma\ implementation, we actually compute in $\factor{\canf}$ and do basis expansions only for deciding equality.

\section{From Rewriting to Parametrized Gr\"obner Bases}
\label{sec:rwtogb}

Equipped with the integro-differential polynomials, we can now tackle the task of proving the convergence of the reduction rules in Table~\ref{fig:red-rules}. As explained in Sect.~\ref{sec:intdiffop}, we will use the \emph{Diamond Lemma} (Theorem~\ref{thm:diamond-lemma}) for this purpose. First of all we must therefore construct a noetherian monoid ordering $>$ on $\free$ that is compatible with the reduction rules. In fact, there is a lot of freedom in defining such a $>$. It is sufficient to put $\der > f$ for all $f \in \galg$ and extend this to words by the graded lexicographic construction. The resulting partial ordering is clearly noetherian (since it is on the generators) and compatible with the monoid structure (by its grading). It is also compatible with the rewrite system because all rules reduce the word length except for the Leibniz rule, which is compatible because $\der > f$.

Thus it remains to prove that all ambiguities of Table~\ref{fig:red-rules} are resolvable, and we have to compute the corresponding S-polynomials and reduce them to zero. On the face of it, there are of course \emph{infinitely many} of these, suitably parametrized by $f,g \in \galg$ and $\phi, \psi \in \Phi$. For example, let us look at the minimal fork generated by $\cum u \cum v \cum$. In this case, the rule $\cum f \cum$ may be applied either with $f = u$ or with $f = v$ yielding the reductions
\begin{equation*}
\begin{array}[c]{rcl}
&\smash{\cum u \cum v \cum}\\
\swarrow && \searrow\\[1ex]
(\cum \act u) \, \cum v \cum - \cum \, (\cum \act u) v \cum &&
  \cum u \, (\cum \act v) \, \cum - \cum u \cum \, (\cum \act v)
\end{array}
\end{equation*}
with the S-polynomial $p = (\cum \act u) \, \cum v \cum - \cum \, (\cum \act u) v \cum - \cum u \, (\cum \act v) \, \cum + \cum u \cum \, (\cum \act v)$. But actually we should not call $p$ \emph{an} S-polynomial since it represents infinitely many: one for each choice of $u, v \in \galg$.

How should one handle this infinitude of S-polynomials? The problem is that for reducing S-polynomials like $p$ one needs not only the relations embodied in the reduction of Table~\ref{fig:red-rules} but also properties of the operations $\der, \cum\colon \galg \rightarrow \galg$ acting on $u,v\in\galg$. Since these computations can soon become unwieldy, one should prefer a method that can be automated. There are two options that may be pursued:
\begin{itemize}
\item Either one retreats to the viewpoint of \emph{rewriting}, thinking of Table~\ref{fig:red-rules} as a two-level rewrite system. On the upper level, it applies the nine parametrized rules with $f,g\in\galg$ and $\phi,\psi\in\Phi$ being arbitrary expressions. After each such step, however, there are additional reductions on the lower level for applying the properties of $\der, \cum\colon \galg \rightarrow \galg$ on these expressions. Using a custom-tailored reduction system for the lower level, this approach was used in the old implementation for generated an automated confluence proof~\cite{Rosenkranz2005}.
\item Or one views an S-polynomial like $p$ nevertheless as a \emph{single} element, not in $\free$ but in $\freepol$ with $\newgalg = \intdiffpoltwo$. With this approach, one remains within the paradigm of \emph{parametrized \grb\ bases}, and the interlocked operation of the two levels of reduction is clarified in a very coherent way: The upper level is driven by the canonical simplifier on $\newintdiffop$, the lower level by that on $\intdiffpoltwo$.
\end{itemize}
It is the second approach that we will explain in what follows.

Using $\freepol$ instead of $\free$ takes care of the parameters $f,g \in \galg$ but then there are also the \emph{characters} $\phi,\psi\in\Phi$. The proper solution to this problem would be to use a refined version of integro-differential polynomials that starts from a whole family $(\cum_\phi)_{\phi\in\Phi}$ of integrals instead of the single integral $\cum$, thus leading to a corresponding family of evaluations $u(\phi)$ instead of the single evaluation $u(0)$. We plan to pursue this approach in a forthcoming paper. For our present purposes, we can take either of the following positions:
\begin{itemize}
\item The characters $\phi,\psi$ may range over an \emph{infinite} set $\Phi$, but they are harmless since unlike the $f,g\in\galg$ they do not come with any operations (whose properties must be accounted for by an additional level of reduction). In this case, Table~\ref{fig:red-rules} is still an infinitely generated ideal in $\freepol$, and we have to reduce infinitely many S-polynomials. But the ambiguities involving characters are all of a very simple nature, and their reduction of their S-polynomials is straightforward.
\item Alternatively, we may restrict ourselves to a \emph{finite} set of characters (as in most applications!) so that Table~\ref{fig:red-rules} actually describes a finitely generated ideal in $\freepol$, and we need only consider finitely many S-polynomials.
\end{itemize}
The second alternative is somewhat inelegant due to the proliferation of instances for rules like $\phi\psi \rightarrow \psi$. In our implementation, we have thus followed the first alternative with a straightforward treatment of parametrization in $\phi,\psi$ but we will ignore this issue in what follows.

We can now use the new \tma\ implementation for checking that the nine rules in Table~\ref{fig:red-rules} form a \grb\ basis in $\freepol$. As explained before, we use the Diamond Lemma for this purpose (note that the noetherian monoid ordering $>$ applies also to $\freepol$ except that we have now just two generators $u,v \in \newgalg = \intdiffpoltwo$ instead of all $f \in \galg$). Hence it remains to check that all S-polynomials reduce to zero. We realize this by using the appropriate functor hierarchy, as follows. We first build up the algebra of the integro-differential polynomials having, in turn, integro-differential polynomials as coefficients, via the functor
\[
\mathtt{IntDiffPolys[IntDiffPolys[\galg,K],K]}
\]
and we denote the resulting domain by $\mathbb{P}$. Then we consider an instance of the functor constructing the integro-differential operators over $\mathbb{P}$. Finally, the computations are carried out over the algebra created by the $\mathtt{GroebnerExtension}$ functor taking the latter instance as input domain, that allows to perform polynomial reduction, S-polynomials and the \grb\ basis procedure.

Of course, the S-polynomials are generated automatically, but as a concrete example we check the minimal fork considered above:

\begin{mmacode}
\includegraphics[scale=0.75]{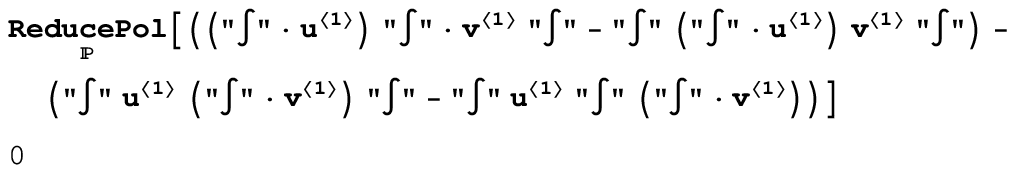}
\end{mmacode}

\noindent As it turns out, there are 17 nontrivial S-polynomials, and they all reduce to zero. This leads us finally to the desired convergence result for $\intdiffop$.

\begin{theorem}
The system of Table~\ref{fig:red-rules} represents a noncommutative \grb\ basis in $\free$ for any graded lexicographic ordering satisfying  $\der > f$ for all $f \in \galg$.
\end{theorem}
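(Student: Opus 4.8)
The plan is to apply the Diamond Lemma (Theorem~\ref{thm:diamond-lemma}): once a noetherian monoid ordering respecting the reduction system $\Sigma$ of Table~\ref{fig:red-rules} is in place, being a \grb\ basis for $\green$ is equivalent to the resolvability of all overlap ambiguities. The first step is therefore to certify the ordering. Taking $\der > f$ for every $f \in \galg$ and extending by the graded lexicographic recipe yields a partial ordering that is noetherian on the generators (hence on words), compatible with concatenation through its grading, and compatible with $\Sigma$: each of the eight non-Leibniz rules strictly decreases the word length, while the length-preserving Leibniz rule $\der f \to f\der + (\der\act f)$ lowers the ordering precisely because $\der > f$. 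This discharges the hypotheses of Theorem~\ref{thm:diamond-lemma}, so it remains only to resolve the ambiguities.

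The second step is to enumerate the overlap ambiguities. Scanning the nine left-hand sides for a shared nonempty overlap produces a finite list---for example $\der f g$ (Leibniz against the multiplication rule), $\der\cum f\cum$ ($\der\cum$ against the Baxter rule), and the genuinely interesting self-overlap $\cum u\cum v\cum$ of the Baxter rule with itself. The implementation reports $17$ nontrivial ambiguities in total. For each I would form the associated S-polynomial and attempt to reduce it to zero.

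The crux is that each such S-polynomial is really a whole \emph{family}, parametrized by generic coefficients $f,g\in\galg$ and characters $\phi,\psi\in\Phi$, and its reduction needs not only the rules of Table~\ref{fig:red-rules} but also the analytic \emph{properties} of $\der,\cum\colon\galg\to\galg$ acting on those generic arguments. The decisive move is to stop regarding $f,g$ as parameters and instead read each S-polynomial as a \emph{single} element of $\freepol$ with $\newgalg = \intdiffpoltwo$, the integro-differential polynomials in two indeterminates. The lower-level simplifications---integration by parts, the section axiom, multiplicativity of $\evl$---are then carried out automatically by the canonical simplifier on $\intdiffpoltwo$, whose existence is exactly Theorem~\ref{thm:baxter}. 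The characters $\phi,\psi$ carry no operations, so the ambiguities involving them resolve through the trivial rules $\phi\psi\to\psi$ and $\phi f\to(\phi\act f)\phi$, and may be treated either over an infinite $\Phi$ (uniformly many S-polynomials) or by restricting to a finite $\Phi$.

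The main obstacle is thus located entirely in these lower-level reductions: verifying that every one of the $17$ S-polynomials collapses to zero once its generic-function arguments are normalized in $\intdiffpoltwo$. This is where the preceding development pays off, since Theorem~\ref{thm:baxter} supplies a \emph{decidable} canonical form for integro-differential polynomials, replacing the open-ended stock of analytic identities used in the original ad-hoc confluence argument by a single algorithmic normalization executed by the \tma\ functor hierarchy. Once all $17$ normalized S-polynomials are seen to vanish, the hypotheses of Theorem~\ref{thm:diamond-lemma} are met and the claim follows.
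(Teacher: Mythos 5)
Your proposal follows essentially the same route as the paper: certify the graded lexicographic ordering with $\der > f$, invoke the Diamond Lemma, and dispose of the infinitely many parametrized S-polynomials by reading each as a single element of $\freepol$ with $\newgalg = \intdiffpoltwo$, normalized to zero by the canonical simplifier supplied by Theorem~\ref{thm:baxter}. The only step the paper makes explicit that you leave implicit is the final specialization: the substitution homomorphism $(u,v)\mapsto(f,g)$, lifted to $\newintdiffop \rightarrow \intdiffop$, is what transfers the vanishing of the generic $p(u,v)$ back to the vanishing of each concrete S-polynomial $p(f,g)$ in $\free$.
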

\begin{rrproof}
By the Diamond Lemma we must show that all S-polynomials $p \in \free$ reduce to zero. Since they may contain at most two parameters $f,g\in\galg$, let us write them as $p(f,g)$. But we have just observed that the corresponding S-polynomials $p(u,v) \in \freepol$ with $\newgalg = \intdiffpoltwo$ reduce to zero. Using the substitution homomorphism
\begin{equation*}
\phi\colon \newgalg \rightarrow \galg, \; (u,v) \mapsto (f,g),
\end{equation*}
lifted to $\newintdiffop \rightarrow \intdiffop$ in the obvious way, we see that $p(f,g) = \phi \, p(u,v)$ reduces to zero as well.
\end{rrproof}

\noindent From the conclusion of the Diamond Lemma, we can now infer that Table~\ref{fig:red-rules} indeed establishes a canonical simplifier for  $\intdiffop$.

\section{Conclusion}
\label{sec:conclusion}

The \emph{algebraic treatment of boundary problems} is a new development in Symbolic Analysis that takes its starting point in differential algebra and enriches its structures by introducing an explicit notion of integration and boundary evaluations. Recall the three basic tools that we have introduced for this purpose:
\begin{itemize}
\item The category of \emph{integro-differential algebras} $(\galg, \der, \cum)$ for providing a suitable notion of ``functions''. (As explained in Sect.~\ref{sec:theorema}, here we do not think of categories and functors in the sense of Eilenberg and Maclane---this is also possible and highly interesting but must be deferred to another paper.)
\item The functor creating the associated \emph{integro-differential operators} $\galg[\der, \cum]$ as a convenient language for expressing boundary problems (differential operators, boundary operators) and their solving Green's operators (integral operators).
\item The functor creating the associated \emph{integro-differential polynomials} $\intdiffpol{u}$, which describe the extension of an integro-differential algebra by a generic function $u$.
\end{itemize}
In each of these three cases, the differential algebra counterpart (i.e. without the ``integro-'') is well-known, and it appears as a rather simple substructure in the full structure. For example, the differential polynomials $\intdiffpol{u}$ over a differential algebra $(\galg, \der)$ are simple to construct since the Leibniz rule effectively flattens out compound terms. This is in stark contrast to an integro-differential algebra $(\galg, \der, \cum)$, where the Baxter rule forces the presence of nested integrals for ensuring closure under integration.

The interplay between these three basic tools is illustrated in a \emph{new confluence proof}: For an arbitrary integro-differential algebra $(\galg, \der, \cum)$, the rewrite system for the integro-differential operators $\intdiffop$ is shown to be a noncommutative \grb\ basis by the aid of the integro-differential polynomials $\intdiffpol{u,v}$. Having a confluent rewrite system leads to a canonical simplifier, which is crucial for the algorithmic treatment as expounded in Sect.~\ref{sec:theorema}.

Regarding our overall mission---the algebraic treatment of boundary problems and integral operators---we have only scratched the surface, and much is left to be done. We have given a brief overview of solving, multiplying and factoring boundary problems in Sect.~\ref{sec:bp}. But the real challenge lies ahead, namely how to \emph{extend our framework} to:
\begin{itemize}
\item \emph{Linear Boundary Problems for LPDEs}: As mentioned at the start of Sect.~\ref{sec:bp}, the algebraic framework for multiplying and factoring boundary problems is set up to allow for LPDEs; see~\cite{RegensburgerRosenkranz2009} for more details. But the problematic issue is how to design a suitable analog of $\intdiffop$ for describing integral and boundary operators (again the differential operators are obvious). This involves more than replacing $\der$ by $\der/\der_x, \der/\der_y$ and $\cum$ by $\cum_0^x$, $\cum_0^y$ because even the simplest Green's operators employ one additional feature: the transformation of variables, along with the corresponding interaction rules for differentiation (chain rule) and integration (substitution rule); see~\cite{RosenkranzRegensburgerTecBuchberger2009} for some first steps in this direction.
\item \emph{Nonlinear Boundary Problems:} A radically new approach is needed for that, so it seems appropriate to concentrate first on boundary problems for nonlinear ODEs and systems thereof. A natural starting point for such an investigation is the differential algebra setting, i.e. the theory of differential elimination~\cite{Hubert2003b,BoulierLazardOllivierEtAl1995,BoulierOllivierLazardPetitot2009}. By incorporating initial or boundary conditions, we can use explicit integral operators on equations, in addition to the usual differential operators (prolongations). As a consequence, the natural objects of study would no longer be differential but integro-differential polynomials.
\end{itemize}
We are well aware that such an approach will meet with many difficulties that will become manifest only as we progress. Nevertheless, we are confident that an algebraic---and indeed symbolic---treatment along these lines is possible.

\begin{acknowledgement}
We acknowledge gratefully the support received from the \emph{SFB F013} in Subproject~F1322 (principal investigators Bruno Buchberger and Heinz W.~Engl), in earlier stages also Subproject~F1302 (Buchberger) and Subproject~F1308 (Engl). This support from the Austrian Science Fund (FWF) was not only effective in its financial dimension (clearly a necessary but not a sufficient condition for success), but also in a ``moral'' dimension: The stimulating atmosphere created by the unique blend of symbolic and numerical communities in this SFB---in particular the Hilbert Seminar mentioned in Sect.~\ref{sec:intro}---has been a key factor in building up the raw material for our studies.

Over and above his general role in the genesis and evolution of the SFB F1322, we would like to thank \emph{Heinz W.~Engl} for encouragement, critical comments and helpful suggestions, not only but especially in the early stages of this project.

Loredana Tec is a recipient of a DOC-fFORTE-fellowship
of the Austrian Academy of Sciences at the Research Institute for Symbolic
Computation (RISC), Johannes Kepler University Linz. Georg Regensburger was partially supported by the Austrian Science Fund (FWF): J 3030-N18.

We would also like to thank an anonymous referee for giving us plenty of helpful suggestions and references that certainly increased the value of this article.

\end{acknowledgement}


\end{document}